\numberwithin{equation}{section}
\begin{document}
 
 
 
\title{Cost-Aware Type Theory}
\author[1]{Yue Niu\footnote{yuen@andrew.cmu.edu}} 
\author[1]{Robert Harper\footnote{rwh@cs.cmu.edu}}
\affil[1]{Carnegie Mellon University}

\maketitle

\begin{abstract}
  Although computational complexity is a fundamental aspect of program behavior,
  it is often at odds with common type theoretic principles such as function
  extensionality, which identifies all functions with the same \emph{input-output} behavior.
  We present a computational type theory called \cctt{} that has a primitive
  notion of cost (the number of evaluation steps). We introduce a new dependent function type
  ``funtime'' whose semantics can be viewed as a cost-aware version of function
  extensionality. We prove a collection of lemmas for \cctt{},
  including a novel introduction rule for the new funtime type. \cctt{} can be simultaneously viewed as a framework for analyzing
  computational complexity of programs and as the beginnings of a semantic foundation for
  characterizing feasible mathematical proofs. 
\end{abstract}

\section{Introduction}

Beginning with mechanized time complexity analysis of Lisp
programs~\cite{10.1145/361002.361016}, computational complexity in the context
of programming language research has been studied from various perspectives,
including definability of complexity classes inside type
theory~\cite{CONSTABLE1998137}\cite{CC01}, syntactic characterizations of
complexity
classes~\cite{10.1145/129712.129740}\cite{HOFMANN200357}\cite{HOFMANN2000113},
recurrence extraction~\cite{BENZINGER200479}\cite{10.1145/3371083},
substructural \& resource aware type
systems~\cite{Hofmann:2003:SPH:604131.604148}\cite{Jost:2010:SDQ:1707801.1706327}\cite{Hoffmann:2017:TAR:3009837.3009842},
and program logics for interactive verification of cost
bounds~\cite{Chargueraud:2019:VCA:3315672.3315720}\cite{10.1007/978-3-030-17184-1_1}\cite{10.1007/978-3-319-89884-1_19}.

What seemed to be missing from the discussion was a semantic theory
with a rich specification language in which computational complexity is a
primitive notion. This is the main motivation behind the work in this article.
The shift in perspective is subtle: instead of analyzing programs after the
fact, we develop a theory for simultaneous construction of cost-aware
specification and programs satisfying those specifications. This perspective on
computational complexity resonates with a distinctive feature in dependent type
theory: specifications \emph{are} programs, and the construction of one is
inextricably intertwined with the other. 

We develop a
computational dependent type theory, called \cctt{} (cost-aware type theory), for analyzing the
time complexity of programs. The central move borrows from the Martin-Löf
ethos in the formulation of computational type theories: type structure
internalizes prior judgmental notions. Following this idea, we introduce new
judgment forms to describe computational complexity. A new dependent function
type ``funtime'' is introduced to internalize the meaning of open judgments.
We prove a novel introduction rule for the funtime type. The proof relies on the
fact that a terminating recursive function induces a well-founded ordering on
the recursive calls (a similar idea was introduced in~\cite{cite-key} to add general
recursion to Martin-Löf type theories). 

A perhaps surprising fact is that the judgments in \cctt{} only speak about total
terms: there is no judgment form that acknowledges partiality. However, this is
not achieved through syntactic means (such as Gödel's T or system F).
While syntactic restrictions alleviate the 
burden of the termination proof from the user, total languages are susceptible to arbitrary blow
up in code size when compared to languages with general
recursion~\cite{BLUM1967257}.\footnote{An article by Robert Harper discusses
this in detail:
https://existentialtype.wordpress.com/2014/03/20/old-neglected-theorems-are-still-theorems/.}

Practically, programs in total languages can also have unexpected computational
complexity. For example, a more liberal form of recursion can be simulated in
Gödel's T by proving course-of-value (COV) recursion (the dependently typed version is
strong induction). One can then program via the COV-recursion interface, but the
resulting algorithm will not have the expected recursive
behavior: \emph{every} recursive argument is considered, regardless
of whether it was requested by the algorithm. This seems to be a fundamental
limitation of programming with primitive recursion. Hence, while programs
in Gödel's T can be \emph{written} in a more natural manner using an interface such as
COV-recursion, their \emph{computational behavior} and as a result \emph{complexity}
does not coincide with textbook algorithm analysis. 

The key innovation in \cctt{} leverages the synergetic relation between general
recursion and cost specification: general recursion is needed to write
algorithms with the expected cost, and the cost
specification in turn enables one to prove that a general recursive function is
total (and satisfies its functional specification).
This enables the programmer to write
general recursive functions that are not a priori terminating and
\emph{simultaneously} construct the proof that it is in fact terminating using the
cost specification. As a
consequence, the membership judgment \isComp{M}{A} immediately implies $M$
terminates, and all well-typed functions are total (an immediate consequence of
adhering to a cost specification).\\

\noindent\textbf{Language vs. machine}
Before diving into the specifics of \cctt{}, we take a moment to frame our work
in the bigger landscape. While the programming languages theory encompasses nearly all
imaginable computational behavior -- e.g., state, randomization, parallelism,
concurrency -- language-based semantics for computational complexity has
received comparatively little attention. Instead, machine models such as the Turing machine and
RAM are incumbent in algorithmic complexity literature. Part of the reason for
this discrepancy could be attributed to the belief that machine models are
``lower-level'' and so the induced computational complexity is a more accurate 
description of program performance in real life. 

However, it should be noted that computational complexity defined in a
language-based model (such as \cctt{}) \emph{do} reflect real-life performance,
as long as the implementation is \emph{resource
bounded}~\cite{lambda95}\cite{GREINER96psl}\cite{BLELLOCH96nesl}\cite{BlCh99}\cite{10.1145/1411204.1411240} by the cost
semantics. This means that each stage of compilation (from the source language
to the target language) is cost-preserving: the cost semantics of the higher-level
language induces an upper bound on the cost semantics of the lower-level language.
This process separates the concern of the algorithm analysis from low-level,
implementation specific concerns that are asymptotically irrelevant in any case.

Furthermore, language-based models have some clear advantages compared to
machine-models such as compositionality and modularity. Hence, a possible
direction for future work  
would be designing intermediate languages for complexity analysis and the
associated cost-preserving compilations, which would lead to a comprehensive
framework for describing both high-level and low-level computational complexity and
the effects of different compilation/scheduling strategies. \\

\noindent\textbf{Outline} 
In the following sections, we motivate and develop \cctt{} and demonstrate the applicability
of the theory with examples. In section~\ref{sec:setup}, we set the stage by
defining the underlying programming language \pl{} and discuss some considerations
when choosing an appropriate language for analyzing computational complexity. In
section~\ref{sec:meaning}, we give some intuition for \cctt{} by extending
Martin-Löf's meaning explanation for type theories to account for computational
complexity. In section~\ref{sec:construction}, we realize \cctt{} via Allen's
PER-semantics~\cite{Allen1987ANS} for computational type theories, which
involves constructing a relation $\tau : \ps{\val \times \val \times \ps{\val
\times \val}}$
satisfying some coherence conditions. A coherent relation $\tau$ is then a
\emph{type system} (in a technical sense), and the judgments of \cctt{} are
defined relative to a given type system $\tau$. In section~\ref{sec:cctt}, we
define the judgments of \cctt{} and prove some expected lemmas, including head
expansion and the structural rules of hypothesis, weakening, and substitution.
In section~\ref{sec:proof}, we prove a collection of lemmas about the semantics
of \cctt{} constituting a simple proof theory that we later use to verify some
example programs. In section~\ref{sec:machines}, we discuss the usual
assumptions made in algorithmic complexity and how they can be dealt with inside
type theory. In section~\ref{sec:gcd}, we analyze the time complexity of Euclid's
algorithm for computing the greatest common divisor. In
section~\ref{sec:parallel} and~\ref{sec:fib}, we extend \cctt{} to account for
parallelism and illustrate the extension by verifying the parallel complexity
of computing the Fibonacci numbers. In section~\ref{sec:related}
and~\ref{sec:concl}, we discuss some related work and promising directions for
investigation.
\section{Programming language for complexity analysis}\label{sec:setup}

\begin{figure}
\begin{align*}
    \mathsf{Exp} \quad E &::= a 
    \mid \arrtimev{\isOf{a}{E_1}}{E_2}{E_3}
    \mid \pity{\isOf{a}{E_1}}{E_2}
    \mid \fun{f}{a}{E} 
    \mid \ap{E_1}{E_2}
    \mid \nat
    \mid \zero
    \mid \suc{E}\\
    &\mid \ifz{E}{E_0}{a}{E_1}
    \mid \sigmaty{\isOf{a}{E_1}}{E_2}
    \mid \pair{E_1}{E_2}
    \mid \fst{E}
    \mid \snd{E}\\
    &\mid \eqty{E}{E_1}{E_2}
    \mid \triv 
    \mid \subsetty{\isOf{a}{E_1}}{E_2}
    \mid \letcst{E_1}{a}{E_2}
    \mid \univ{i}\\
    &\mid \rel{r}{E_1}{E_2}
    \mid \tern{r}{E_1}{E_2}{E_3}
    \mid \cffone{f}{E}
    \mid \cfftwo{f}{E_1}{E_2}
\end{align*}
\caption{Syntax of \pl{}.}
\label{fig:syntax}
\end{figure}

Although there are various abstract notions of computational complexity, 
we focus on analyzing a simple and concrete cost measure -- the number of steps a closed program
takes to evaluate to a value. Under this view of complexity, the analysis is fundamentally about the 
specification and verification of \emph{behavioral} properties of programs. 
Hence, the very nature of computational complexity suggests itself as an object of study inside
\emph{computational type theories}, in which typehood and membership are semantic properties, i.e., given by 
how programs compute. 
In the influential paper \emph{Constructive Mathematics and Computer Programming}~\cite{MARTINLOF1982153}, Martin-Löf introduced
a computational semantics for programming languages through what is known as the ``meaning explanation'',  
a notion closely related to Tait's method of logical relations~\cite{10.2307/2271658} and 
Girard's method of candidates of reducibility~\cite{10.5555/64805}.
For dependent type theory, the same perspectives can also be found in the computational type theory
of \nuprl~\cite{ALLEN2006428}, a program/proof refinement framework which has been used to verify a substantial amount of mathematics.\\

\noindent\textbf{Programming language \pl{}}
Following these computational type-theoretic approaches, we develop a dependent type theory
called \cctt{} (complexity-aware type theory) for analyzing computational complexity of programs. We begin by
defining a programming language \pl{} and an evaluation relation. The syntax of \pl{} is given in Figure~\ref{fig:syntax}.
Note that since typehood is a semantic property, there is no syntactic distinction between types and elements of a type.
The value relation $\final{\_} : \ps{\mathsf{Exp}}$ and step relation $\step{\_}{\_} : \ps{\mathsf{Exp} \times \mathsf{Exp}}$
are defined in Figure~\ref{fig:opersem}. Write $\val = \{E \mid \final{E}\}$ for the set of closed programs that are values.
Define evaluation as follows:
\begin{enumerate}
  \item \evalCost{M}{c}{V} when $\stepIn{M}{c}{V}$ and \final{V}.
  \item \eval{M}{V} when there is a $c$ s.t. \evalCost{M}{c}{V}.
\end{enumerate}

Where $\stepIn{M}{c}{V}$ means $M$ transition to $V$ in exactly $c$ steps.\\

\noindent\textbf{Complexity-aware function types}
In anticipation of analyzing complexity properties of programs, we introduce the
new funtime type
\arrtimev{\isOf{a}{A}}{B}{P}, which classify functions that comply with the cost bound stated by $P$.
Crucially, $P$ is an open term of \pl{} and \emph{not} a natural number since the cost $P$ varies in the input value.
Furthermore, it is also not a meta-theoretical function, because the variable $a$ ranges over terms of \pl{}. This means 
we cannot directly use the ambient arithmetic functions in defining the cost component $P$. One route would be to define
all the usual arithmetic functions internally. This turns out to burden the mechanization process with many secondary 
proof obligations which hold in the ambient mathematics. Instead, we import a part of the ambient arithmetic world as 
\emph{foreign functions}.\\

\noindent\textbf{Foreign functions}
We introduce several syntactic operators that mediate the use of meta-theoretical functions or relations 
on natural numbers.
The expression \rel{r}{E_1}{E_2} reifies a (meta-)relation $r : \N \times \N$ into a type, and \cffone{f_1}{E} 
and \cfftwo{f_2}{E_1}{E_2} form applications of (meta-)functions $f_1 : \N \to \N$ and $f_2 : \N \to \N \to \N$, respectively.
In principle, we can define these constructs internally. However, we found it more practical to import these
arithmetical facts and use them directly in the verification.\\

\noindent\textbf{Evaluation strategy}
Usually, computational type theories are agnostic to the evaluation strategy employed by the programming language, i.e.,
one can define a type theory with either call by value or call by name evaluation. However, 
when considering computational complexity, different evaluation strategies can lead to wildly different evaluation costs.
Often, call by name evaluation is favored because fewer rules are needed to specify the transition relation. This poses a barrier for 
complexity analysis, since if arbitrary computations can be applied to a function, it becomes difficult to prove uniform or 
general complexity properties about the function. Therefore, we opt for call by
value evaluation; as a result, variables only range
over \emph{values} (elements of \val) in open term of \pl{}.

\begin{figure}
\begin{mathpar}
\inferrule{
}{
    \final{\arrtimev{\isOf{a}{E_1}}{E_2}{E_3}}
}

\inferrule{
}{
    \final{\pity{\isOf{a}{E_1}}{E_2}}
}

\inferrule{
}{
    \final{\lam{x}{}{E}}
}

\inferrule{
}{
    \final{\nat}
}

\inferrule{
}{
    \final{\zero}
}

\inferrule{
    \final{E}
}{
    \final{\suc{E}}
}

\inferrule{
}{
    \final{\sigmaty{\isOf{a}{E_1}}{E_2}}
}

\inferrule{
    \final{E_1}\\
    \final{E_2}
}{
    \final{\pair{E_1}{E_2}}
}

\inferrule{
}{
    \final{\triv}
}

\inferrule{
}{
    \final{\eqty{E}{E_1}{E_2}}
}

\inferrule{
}{
    \final{\subsetty{\isOf{a}{E_1}}{E_2}}
}

\inferrule{
}{
    \final{\univ{i}}
}

\inferrule{
}{
    \final{\rel{r}{E_1}{E_2}}
}

\inferrule{
}{
    \final{\tern{r}{E_1}{E_2}{E_2}}
}

\inferrule{
    \step{E_1}{E_1'}
}{
    \step{\ap{E_1}{E_2}}{\ap{E_1'}{E_2}}
}

\inferrule{
    \final{E_1}\\
    \step{E_2}{E_2'}
}{
    \step{\ap{E_1}{E_2}}{\ap{E_1}{E_2'}}
}

\inferrule{
    \final{E_2}
}{
    \step{(\fun{f}{a}{E}) E_2}{[\fun{f}{a}{E}/f,E_2/a]E}
}

\inferrule{
    \step{E}{E'}
}{
    \step{\suc{E}}{\suc{E'}}
}

\inferrule{
    \step{E}{E'}
}{
    \step{\ifz{E}{E_0}{a}{E_1}}{\ifz{E'}{E_0}{a}{E_1}}
}

\inferrule{
}{
    \step{\ifz{\zero}{E_0}{a}{E_1}}{E_0}
}

\inferrule{
    \final{\suc{E}}
}{
    \step{\ifz{\suc{E}}{E_0}{a}{E_1}}{[E/a]E_1}
}

\inferrule{
    \step{E_1}{E_1'}
}{
    \step{\pair{E_1}{E_2}}{\pair{E_1'}{E_2}}
}

\inferrule{
    \final{E_1}\\
    \step{E_2}{E_2'}
}{
    \step{\pair{E_1}{E_2}}{\pair{E_1}{E_2'}}
}

\inferrule{
    \step{E}{E'}
}{
    \step{\fst{E}}{\fst{E'}}
}

\inferrule{
    \step{E}{E'}
}{
    \step{\snd{E}}{\snd{E'}}
}

\inferrule{
    \final{E_1}\\
    \final{E_2}
}{
    \step{\fst{\pair{E_1}{E_2}}}{E_1}
}

\inferrule{
    \final{E_1}\\
    \final{E_2}
}{
    \step{\snd{\pair{E_1}{E_2}}}{E_2}
}

\inferrule{
    \step{E_1}{E_1'}
}{
    \step{\letcst{E_1}{x}{E_2}}{\letcst{E_1'}{x}{E_2}}
}

\inferrule{
    \final{E_1}
}{
    \step{\letcst{E_1}{x}{E_2}}{[E_1/x]E_2}
}

\inferrule{
    \step{E}{E'}
}{
    \step{\cffone{f}{E}}{\cffone{f}{E'}}
}

\inferrule{
}{
    \step{\cffone{f}{\bar{m}}}{\overline{f(m)}}
}

\inferrule{
  \step{E_1}{E_1'}
}{
  \step{\cfftwo{f}{E_1}{E_2}}{\cfftwo{f}{E_1'}{E_2}}
}

\inferrule{
  \final{E_1}\\
  \step{E_2}{E_2'}
}{
  \step{\cfftwo{f}{E_1}{E_2}}{\cfftwo{f}{E_1}{E_2'}}
}

\inferrule{
}{
    \step{\cfftwo{f}{\bar{m}}{\bar{n}}}{\overline{f(m,n)}}
}
\end{mathpar}
\caption{Structural operational semantics of \pl{}. Note that $\bar{m} = \mathtt{suc}^m(\zero)$.}
\label{fig:opersem}
\end{figure}

\section{Meaning explanation}\label{sec:meaning}
Given a programming language, the meaning explanation propounded by Martin-Löf~\cite{MARTINLOF1982153} specifies the meaning 
of types and membership via the execution behavior of the underlying programs; we give a brief introduction here.
At the outset, there is the notion of \emph{canonical} types. A canonical type is defined by specifying its equal canonical elements.
This specification of equality is dependent on the canonical type in question, but in general it must be an 
equivalence relation on all members.
In its basic form, the meaning explanation consists of four judgments:
\begin{gather*}
\isTypeComp{A} \qquad \isComp{M}{A}\\
\eqType{A}{A'} \qquad \eqComp{M}{M'}{A}
\end{gather*}
The first states that $A$ evaluates to a value $A_0$, and that $A_0$ is a canonical type. 
Given that it is known that $A$ is a type (so that $A$ is known to evaluate to a canonical type $A_0$), 
the second states that $M$ evaluates to $M_0$ and $M_0$ is a canonical element of $A_0$.
The third states that $A$ and $A'$ evaluates to equal canonical types. Lastly, given that $A$ is a type, 
the fourth states that $M$ and $M'$ evaluates to equal canonical members of the canonical type that $A$ evaluates to.

Following the philosophy that type constructors merely internalize prior judgmental structure, we add two new judgment forms,
\isCComp{M}{A}{P} and \eqCComp{M}{M'}{A}{P}, which can be read as ``$M$ in $A$ with cost $P$'' and 
``$M$ equal to $M'$ in $A$ with cost $P$''. 
The first states that, presupposing $A$ is a type and $P$ is in \nat (by the way we specify $\nat$, this will mean $P$ evaluates to a numeral $\bar{p}$),
$M$ evaluates to $M_0$ in $c$ steps, $M_0$ is in $A$, and that $p \ge c$. Similarly, the second states that (with the same presuppositions)
$M$ evaluates to $M_0$ in $c$ steps, $M'$ evaluates to $M'_0$ in $d$ steps, $M_0$ and $M'_0$ are equal in $A$, and that $p \ge \max{(c,d)}$.
These two judgments express the meaning of the previous membership and equality judgments and in addition prescribe 
the evaluation costs of the subjects. Here, the evaluation costs of $A$ and $P$ are not mentioned, since we are analyzing the 
computational complexity of \emph{running code} and not its
specifications.\footnote{If one desires, it is also possible to analyze the
complexity of types by shifting the perspective and viewing them as elements of a universe.}

Note that a careful staging of definitions is needed to specify the cost-bounded judgmental forms. Prior to even stating 
\isCComp{M}{A}{P}, we need to know that $P$ is in \nat, and hence that $\nat$ is a type. This also means it is not possible to 
define the judgment \isComp{M}{A} in terms of \isCComp{M}{A}{P} by existentially quantifying $P$, which would lead to an infinite 
regress of presuppositions.

\section{Constructing a type system}\label{sec:construction}

The procedure given in the previous section of defining equal types and type membership 
is an instance of definition 
by induction-recursion, i.e., defining a structure inductively while simultaneously defining a function out of that structure.
However, for mechanization purposes, it is more practical to 
follow the approach given by Allen~\cite{Allen1987ANS} and define the judgments of type theory relative to a \emph{type system}. 
Structurally, a type system is a relation $\tau : \ps{\val \times \val \times \ps{\val \times \val}}$. 
The idea is that $A$ and $B$ are equal
types with membership relation $\phi$ exactly when $\tau(A,B,\phi)$. Observe that in the meaning explanation, 
1) type equality and membership are partial equivalence relations, and 2) equal types classify the same terms.
Hence, $\tau$ must satisfy the same coherence properties
in order to qualify as a type system, which can be summarized as follows.

\begin{definition}[Type Systems]
Write $\pts = \ps{\val \times \val \times \ps{\val \times \val}}$ for the space of all \emph{possible type systems}.
Let $\tau : \pts$. $\tau$ is a type system when the following conditions hold:
\begin{enumerate}
\item Unicity: $\tau(A,B,\phi)$ and $\tau(A,B,\phi')$ implies $\phi = \phi'$.
\item PER valuation: $\tau(A,B,\phi)$ implies $\phi$ is a PER.
\item Symmetry: $\tau(A,B,\phi)$ implies $\tau(B,A,\phi)$. 
\item Transitivity: $\tau(A,B,\phi)$ and $\tau(B,C,\phi)$ implies $\tau(A,C,\phi)$.
\end{enumerate}
\end{definition}

Now, to define a possible type system $\tau$ closed under type constructors and to show that it is a type system, we follow 
the construction presented by Angiuli~\cite{DBLP:journals/corr/abs-1712-01800} and define two hierarchies of possible type systems
$\tau_n$ and $v_n$, the latter of which forms a cumulative universe hierarchy.
Roughly, each $\tau_n$ contains all types from the previous level and the universes in $v_n$ and closes them 
under the type constructors, and $v_n$ contains all universes \univ{i} with $i < n$. 
In Figure~\ref{fig:construction}, 
we define a function $\text{Types}(v,\tau)$
that closes $\tau$ under type constructors and is monotone for fixed $v$. 
For convenience, we introduce some notations that lift the type equality and membership relations 
along evaluation and to open terms.

\begin{enumerate}
  \item $A \sim A' \downarrow \alpha \in \tau$ when \eval{A}{A_0}, \eval{A'}{A_0'}, and 
  $\tau(A_0,A_0',\alpha)$.
  \item \sameComp{}{M}{M'}{\alpha} when 
  \eval{M}{V}, \eval{M'}{V'}, and $\alpha(V, V')$.
  \item $a : \alpha \rhd B \sim B' \downarrow \beta \in \tau$ when 
  for all $V,V'$ s.t. $\alpha(V,V')$,  \sameType{[V/a]B}{[V'/a]B'}{\beta_{V,V'}}{\tau}. 
  \item \sameComp{\isOf{a}{\alpha}}{N}{N'}{\beta} when 
  for all $V,V'$ s.t. $\alpha(V,V')$, 
  \sameComp{}{[V/a]N}{[V'/a]N'}{\beta_{V,V'}}.
\end{enumerate}

Note that these definitions are slightly different from those introduced in the ``Idealized Nuprl'' construction 
by Angiuli~\cite{DBLP:journals/corr/abs-1712-01800}: in the last two definitions, the open terms are only required to respect
functionality for \emph{values}, i.e. elements of \val. As mentioned in Section~\ref{sec:setup}, 
this is because if arbitrary computations were also included,
substitution instances of those open terms will not, in general, have uniform or predictable evaluation costs.
We also define the following notations that are relevant for cost analysis:

\begin{enumerate}
\item $\omega = \mu \alpha.\, \{(\zero,\zero)\} \cup \{(\suc{v},\suc{v'}) \mid \alpha(v,v')\}$
  \item \sameCComp{}{M}{M'}{\alpha}{P}{} when 
  \begin{gather}
    \sameComp{}{P}{P}{\omega},\\
    \evalCost{M}{c}{V}, \evalCost{M'}{c'}{V'}, \alpha(V,V'),\\
    \eval{P}{\bar{p}}, \text{ and } p \ge \max(c,c')
  \end{gather}
  \item $\sameCComp{\isOf{a}{\alpha}}{N}{N'}{\beta}{P}{}$ when \sameComp{\isOf{a}{\alpha}}{P}{P}{\omega}
and for all $V,V'$ s.t. $\alpha(V,V')$, \sameCComp{}{[V/a]N}{[V'/a]N'}{\beta_{V,V'}}{[V/a]P}{}.
\end{enumerate}

Where we write $\mu \alpha. S (\alpha)$ for the least fixed point of a
monotone function $S : \ps{\val \times \val} \to \ps{\val \times \val}$.
Note that natural numbers are strict ($\omega$ contain only numerals), since we need to directly read off the numeral 
and compare it against the evaluation cost. The definition \sameCComp{}{M}{M'}{\alpha}{P}{} mirrors the 
judgmental form \eqCComp{M}{M'}{A}{P} mentioned in the previous section. The last notation 
\sameCComp{\isOf{a}{\alpha}}{N}{N'}{\beta}{P}{} expresses a ``complexity-aware'' functionality condition:
$N$ and $N'$ are functional in the relation $\alpha$ and in addition adhere to the evaluation cost bound $P$.

Furthermore, these definitions are closed under head expansion and respects equality of complexities: 
\begin{lemma}[Head expansion and replacement]\label{lemma:headexp1}
  \ 
  \begin{enumerate}
    \item If \sameComp{}{M}{M'}{\alpha} and \step{M''}{M}, then \sameComp{}{M''}{M'}{\alpha}.
    \item If \sameCComp{}{M}{M'}{\alpha}{P}, \step{M_1}{M}, and \step{M_2}{M'},  then \sameCComp{}{M_1}{M_2}{\alpha}{\suc{P}}.
    \item If \sameCComp{}{M}{M'}{\alpha}{P} and \step{M''}{M}, then \sameCComp{}{M''}{M'}{\alpha}{\suc{P}}.
    \item If \sameCComp{}{M}{M'}{\alpha}{P}, \stepIn{M_1}{c_1}{M}, \stepIn{M_2}{c_2}{M'}, \eval{Q}{\bar{q}}, and $q \ge \max(c_1,c_2)$,
     then \sameCComp{}{M_1}{M_2}{\alpha}{Q \hat{+} P}.
    \item If \sameCComp{}{M}{M'}{\alpha}{P} and \sameComp{}{P}{P'}{\omega}, then \sameCComp{}{M}{M'}{\alpha}{P'}.
  \end{enumerate}
\end{lemma}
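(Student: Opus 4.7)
The plan is to unfold the definitions given just above the lemma statement and, in each part, verify the required clauses one by one. All five parts share the same engine: if $M''$ steps to $M$ and $M$ evaluates to $V$ in $c$ steps, then $M''$ evaluates to $V$ in $c+1$ steps, and more generally if $\stepIn{M_1}{c_1}{M}$ and $\evalCost{M}{c}{V}$, then $\evalCost{M_1}{c_1+c}{V}$. Part (1) is then immediate: unpack $\sameComp{}{M}{M'}{\alpha}$ to get $\eval{M}{V}$, $\eval{M'}{V'}$, $\alpha(V,V')$, and observe that $\step{M''}{M}$ together with $\eval{M}{V}$ gives $\eval{M''}{V}$.

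For parts (2)--(4), I will additionally need to check the two cost-related clauses of $\Vvdash$: that the cost term lies in $\omega$, and that its numeric value dominates the observed evaluation costs. For the first clause I will use that $\omega$ is defined as the least fixed point of $\alpha \mapsto \{(\zero,\zero)\} \cup \{(\suc{v},\suc{v'}) \mid \alpha(v,v')\}$, so $\sameComp{}{P}{P}{\omega}$ forces $\eval{P}{\bar p}$ for a unique numeral, and $\sameComp{}{\suc P}{\suc P}{\omega}$ follows by one unrolling. For the numeric dominance, the bookkeeping is: in (2) both sides gain one step so $\max(c{+}1,c'{+}1) = \max(c,c')+1 \le p+1$; in (3) only one side gains a step and the inequality is even easier; in (4) I will use the arithmetic fact $\max(c_1{+}c, c_2{+}c') \le \max(c_1,c_2) + \max(c,c')$ to conclude $q+p \ge \max(c_1{+}c, c_2{+}c')$, where $\eval{Q\hat{+}P}{\overline{q+p}}$ follows from $\eval{Q}{\bar q}$ and $\eval{P}{\bar p}$ together with the evaluation rule for the foreign addition.

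Part (5) uses only the structure of $\omega$: from $\sameComp{}{P}{P'}{\omega}$ one gets $\eval{P}{\bar p}$ and $\eval{P'}{\bar p}$ for the \emph{same} numeral $\bar p$, because the least fixed point defining $\omega$ pairs only syntactically identical numerals. Hence the existing numeric bound $p \ge \max(c,c')$ witnesses $\sameCComp{}{M}{M'}{\alpha}{P'}$ directly.

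The only nonroutine step is the $\max$-versus-sum arithmetic in (4); once that inequality is in hand, the rest is essentially mechanical. I anticipate the main obstacle to be keeping the presuppositions straight, in particular re-establishing $\sameComp{}{\suc P}{\suc P}{\omega}$ and $\sameComp{}{Q\hat{+}P}{Q\hat{+}P}{\omega}$ at each use, since these are prerequisites of the conclusion judgment and cannot be skipped.
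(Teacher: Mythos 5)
Your proposal is correct and follows essentially the same route as the paper's proof: unfold the definition of each judgment, track how the step relation adds to the evaluation cost, re-establish the $\omega$-membership of the new cost term, and check the numeric inequality (the paper leaves the $\max$-versus-sum inequality in part (4) implicit where you spell it out, but the argument is the same). Part (5) likewise matches the paper's appeal to $\omega$ relating only identical numerals.
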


\begin{proof}
  \ 
  \begin{enumerate}
    \item By assumption, \eval{M}{V}, \eval{M'}{V'}, and $\alpha(V,V')$. The result follows since \eval{M''}{V}. 
    \item By assumption, we know: 
    \begin{enumerate}
      \item \sameComp{}{P}{P}{\omega},
      \item \evalCost{M}{c}{V}, 
      \item \evalCost{M'}{c'}{V'}, 
      \item $\alpha(V,V')$,
      \item \eval{P}{\bar{p}}, 
      \item $p \ge \max{(c,c')}$
    \end{enumerate}
    First, we need to show \sameComp{}{\suc{P}}{\suc{P}}{\omega}. This holds since \eval{\suc{P}}{\suc{\bar{p}}} and 
    $\omega(\bar{p},\bar{p})$. Since \evalCost{M_1}{c+1}{V} and \evalCost{M_2}{c'+1}{V'}, it suffices to show 
    $p+1 \ge \max(c+1,c'+1)$, which holds.
    \item Similar to above.
    \item By assumption, we know: 
    \begin{enumerate}
      \item \sameComp{}{P}{P}{\omega},
      \item \evalCost{M}{c}{V}, 
      \item \evalCost{M'}{c'}{V'}, 
      \item $\alpha(V,V')$,
      \item \eval{P}{\bar{p}}, 
      \item $p \ge \max{(c,c')}$
    \end{enumerate} 
    Hence \evalCost{M_1}{c_1 + c}{V}, \evalCost{M_2}{c_2+c'}{V'}, and \eval{Q \hat{+} P}{\overline{q + p}}, and 
    it suffices to show $q + p \ge \max{c_1+c,c_2+c'}$, which holds.
    \item By assumption, \eval{P}{\bar{p}}, \eval{P}{\bar{p'}}, and $\omega(\bar{p},\bar{p'})$. 
    It suffices to show $p = p'$, which holds by definition of $\omega$.
    \qedhere
  \end{enumerate}
\end{proof}

\begin{figure}
\small
\begin{align*}
  \text{Nat} &= \{(\nat,\nat,\phi) \mid \phi = \omega\}\\
  \text{Funtime}(\tau) &=
  \{(\arrtimev{\isOf{a}{A}}{B}{P}, \arrtimev{\isOf{a}{A'}}{B'}{P'}, \phi) 
  \mid & \\
  &\exists \alpha.\, \sameType{A}{A'}{\alpha}{\tau} \\
  &\land\exists \beta.\, \sameTypeOne{a}{\alpha}{B}{B'}{\beta}{\tau} \\
  &\land\sameComp{\isOf{a}{\alpha}}{P}{P'}{\omega} \\
  \land\phi &= \{(\fun{f}{a}{N},\fun{f'}{a'}{N'}) \mid \sameCComp{\isOf{a}{\alpha}}{[\fun{f}{a}{N}/f]N}{[\fun{f}{a}{N'}/f]N'}{\beta}{P}{} \}
  \}\\
  \text{Sigma}(\tau) &=
  \{(\sigmaty{A}{a.B}, \sigmaty{A'}{a.B'}, \phi) 
  \mid & \\
  &\exists \alpha.\, \sameType{A}{A'}{\alpha}{\tau} \\
  &\land\exists \beta.\, \sameTypeOne{a}{\alpha}{B}{B'}{\beta}{\tau} \\
  \land\phi &= \{(\pair{U}{V}, \pair{U'}{V'}) \mid \alpha(U,U') \land \beta_{U,U'}(V,V') \}
  \}\\
  \text{Eq}(\tau) &= \{
    (\eqty{A}{M}{N}, \eqty{A'}{M'}{N'}, \phi) \mid \\
  &\exists \alpha.\, \sameType{A}{A'}{\alpha}{\tau}\\
  &\land\sameComp{}{M}{M'}{\alpha}\\
  &\land\sameComp{}{N}{N'}{\alpha}\\
  \land\phi &= \{(\triv, \triv) \mid \sameComp{}{M}{N}{\alpha}
  \}
  \}\\
  \text{Subset}(\tau) &= \{
    (\subsetty{\isOf{x}{A}}{B}, \subsetty{\isOf{x}{A'}}{B'},\phi) \mid \\
    &\exists \alpha.\, \sameType{A}{A'}{\alpha}{\tau} \\
    &\land\exists \beta.\, \sameTypeOne{a}{\alpha}{B}{B'}{\beta}{\tau} \\
    \land\phi &= \{(V,V') \mid \alpha(V,V') \land \exists U,U'.\beta_{V,V'}(U,U')\}
  \}\\
  \text{Rel2}(\tau) &= \{
    (\rel{r}{M}{N}, \rel{r}{M'}{N'}, \phi) \mid \\
    &r : \mathcal{P}(\N \times \N) \land\sameComp{}{M}{M'}{\omega} \land\sameComp{}{N}{N'}{\omega}\\ 
    \land\phi &= \{ (\triv,\triv) \mid \exists m,n.\, r(m,n) \land \sameComp{}{M}{\bar{m}}{\omega} \land \sameComp{}{N}{\bar{n}}{\omega} \}
    \}\\
   \text{Rel3}(\tau) &= \{
    (\tern{r}{M}{N}{O}, \tern{r}{M'}{N'}{O'}, \phi) \mid \\
    &r : \mathcal{P}(\N \times \N \times \N) \land\sameComp{}{M}{M'}{\omega} \land \sameComp{}{N}{N'}{\omega} \land \sameComp{}{O}{O'}{\omega}\\ 
    \land\phi &= \{ (\triv,\triv) \mid \exists m,n,o.\, r(m,n,o) \land \sameComp{}{M}{\bar{m}}{\omega} \land \sameComp{}{N}{\bar{n}}{\omega} \land \sameComp{}{O}{\bar{o}}{\omega}\}
    \}\\
  \text{Univ}(\tau) &= \{(\univ{i}, \univ{i}, \phi) \mid \exists \phi.\, \sameType{\univ{i}}{\univ{i}}{\phi}{\tau} \}\\
  v_n &= \{(\univ{i}, \univ{i}, \phi) \mid i < n \land \phi = \{(A,B) 
  \mid \exists \alpha.\, \tau_i(A,B,\alpha) \}\}\\
  \text{Types}(v,\tau) &= \text{Nat} \cup \text{Fun}(\tau) \cup \text{Funtime}(\tau) 
    \cup \text{Sigma}(\tau) \cup \text{Eq}(\tau) \cup \text{Subset}(\tau) \cup \text{Rel2}(\tau) \cup \text{Rel3}(\tau)\cup \text{Univ}(v)\\
  \tau_n &= \mu \tau.\, \text{Types}(v_n,\tau)\\
  v_{\omega} &= \{(\univ{i}, \univ{i}, \phi) \mid \phi = \{(A,B) 
  \mid \exists \alpha.\, \tau_i(A,B,\alpha)\}\}\\
  \tau_{\omega} &= \mu \tau.\, \text{Types}(v_{\omega},\tau)\\
\end{align*}
\caption{Monotone operators on \pts{} and type system hierarchies.}
\label{fig:construction}
\end{figure}

\begin{theorem}[Monotonicity]\label{lemma:mono}
  \ 
  \begin{enumerate}
    \item For all $v : \pts$, $\text{Types}(v,\_)$ is monotone.
    \item For all $\tau : \pts$, $\text{Types}(\_,\tau)$ is monotone.
  \end{enumerate}
\end{theorem}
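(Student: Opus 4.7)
The plan is to reduce both parts of the theorem to the observation that every occurrence of $\tau$ (resp.\ $v$) in the defining clauses of $\text{Types}(v,\tau)$ sits in a positive position. Since $\text{Types}(v,\tau)$ is defined as a union of components, and since the union of monotone operators is monotone, it suffices to check that each named component is monotone in the argument it mentions. I would begin by recording this union-decomposition lemma once and then dispatch the rest by case analysis on the components.

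For part (2), I would observe that among the components in the definition of $\text{Types}$, only $\text{Univ}(v)$ mentions $v$; all other components use $\tau$ (or neither). So the content of part (2) boils down to checking $\text{Univ}(\_)$ is monotone, which is immediate from its definition: if $v \subseteq v'$ and $\sameType{\univ{i}}{\univ{i}}{\phi}{v}$, then unfolding the $\sim$ notation gives $v(V,V',\phi)$ for the values $V,V'$ that $\univ{i}$ evaluates to, and monotonicity of $v \mapsto v'$ yields $v'(V,V',\phi)$, hence $\sameType{\univ{i}}{\univ{i}}{\phi}{v'}$.

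For part (1), I would note that $\text{Nat}$, $\text{Rel2}$, $\text{Rel3}$, and $\text{Univ}(v)$ do not mention $\tau$ at all and so are trivially monotone in $\tau$. That leaves $\text{Fun}(\tau)$, $\text{Funtime}(\tau)$, $\text{Sigma}(\tau)$, $\text{Eq}(\tau)$, and $\text{Subset}(\tau)$. In each of these clauses, $\tau$ appears only inside sub-expressions of the form $\sameType{A}{A'}{\alpha}{\tau}$ or $\sameTypeOne{a}{\alpha}{B}{B'}{\beta}{\tau}$, always under an existential quantifier over the witness relations $\alpha,\beta$. I would therefore prove a small auxiliary lemma: if $\tau \subseteq \tau'$, then $\sameType{A}{A'}{\alpha}{\tau}$ implies $\sameType{A}{A'}{\alpha}{\tau'}$, and similarly $\sameTypeOne{a}{\alpha}{B}{B'}{\beta}{\tau}$ implies $\sameTypeOne{a}{\alpha}{B}{B'}{\beta}{\tau'}$. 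Both follow by unfolding: the first reduces to $\tau(A_0,A_0',\alpha) \Rightarrow \tau'(A_0,A_0',\alpha)$, which is just the assumption $\tau \subseteq \tau'$; the second threads this through the universal quantification over $V,V'$ with $\alpha(V,V')$.

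Once this auxiliary lemma is in hand, monotonicity of each remaining component is routine: given a triple in $\text{Funtime}(\tau)$ (or Fun, Sigma, Eq, Subset), the existential witnesses $\alpha,\beta$ continue to witness the required clauses under $\tau'$, and the predicates on $\phi$ (such as the funtime cost-bounded condition $\sameCComp{\isOf{a}{\alpha}}{\cdots}{\cdots}{\beta}{P}{}$) do not mention $\tau$ at all. I do not anticipate a serious obstacle; the only minor care-point is the funtime clause, where one should double-check that the definition of $\sameComp{\isOf{a}{\alpha}}{P}{P'}{\omega}$ and the complexity-aware condition are both $\tau$-free, so that monotonicity is preserved through the cost data as well.
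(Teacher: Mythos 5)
Your proposal is correct and takes essentially the same route as the paper, whose proof is simply ``by case analysis on each clause of Types''; you have just spelled out the details (the positivity of the occurrences of $\tau$ and $v$, the lifting lemma for $\sim \downarrow \in$, and the observation that the cost-aware conditions are $\tau$-free). No gaps.
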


\begin{proof}
  By case analysis on each clause of Types.
\end{proof}

\begin{lemma}[Closure under Types]\label{lemma:TYPES_TS}
  If $v : \pts$ is a type system, then $\mu \tau.\,\text{Types}(v,\tau)$ is also a type system.
\end{lemma}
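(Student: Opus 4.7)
The plan is to unfold $\tau := \mu \tau.\,\text{Types}(v,\tau)$, which is well-defined by the monotonicity established in Theorem~\ref{lemma:mono}, so that $\tau = \text{Types}(v,\tau)$. I would then verify each of the four type-system axioms by case analysis on which clause of $\text{Types}$ generates a given triple $(A,B,\phi) \in \tau$. The case split is disjoint because the value forms of $A$ and $B$ discriminate the clauses, and in each clause $\phi$ is given by an explicit comprehension over the underlying witnesses $\alpha$ and, where applicable, $\beta$.

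For unicity, in each clause $\phi$ is fully determined by $\alpha$ and $\beta_{V,V'}$, and these are in turn uniquely determined by inductively appealing to unicity at the subformulas. For PER-valuation, each $\phi$ is defined by conjoining conditions that are themselves symmetric and transitive in the underlying relations, using inductively that $\alpha$ and $\beta_{V,V'}$ are PERs. The $\text{Univ}(v)$ case is immediate: it delegates directly to $v$, which is a type system by assumption.

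For symmetry and transitivity of $\tau$, I would again proceed by case analysis. In each clause the membership relation $\phi$ appears on both sides of the triple unchanged, so the only real work is to produce swapped or composed existential witnesses. Inductive hypotheses give symmetry and transitivity of $\alpha$ and of $\beta_{V,V'}$; a small side-observation is needed that the $(V,V')$-indexed and $(V',V)$-indexed families coincide, which follows by unicity of $\beta$ at equal inputs together with $\alpha$ being a PER.

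The main obstacle will be the $\text{Funtime}$ clause, because of the cost component $P$. Symmetry of $\sameComp{\isOf{a}{\alpha}}{P}{P'}{\omega}$ follows from $\omega$ being a PER, but for the membership condition $\sameCComp{\isOf{a}{\alpha}}{[\fun{f}{a}{N}/f]N}{[\fun{f}{a}{N'}/f]N'}{\beta}{P}{}$ to be symmetric and transitive I must swap or compose cost witnesses $P$ and $P'$ that are only $\omega$-equal, not identical. This is precisely what part~(5) of Lemma~\ref{lemma:headexp1} affords: one can replace the cost index along $\omega$-equality, so the same chain of $\beta$-membership can be threaded through a transitivity argument under a uniformly chosen cost. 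The analogous delicacy appears for $\text{Eq}$, where membership carries the side condition that $M$ and $N$ be $\alpha$-equal; here symmetry and transitivity reduce to the PER structure of $\alpha$. All remaining clauses ($\text{Nat}$, $\text{Fun}$, $\text{Sigma}$, $\text{Subset}$, $\text{Rel2}$, $\text{Rel3}$) are routine once the cost-bearing cases are handled.
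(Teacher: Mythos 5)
Your proposal is correct and follows essentially the same route as the paper, whose entire proof is the one-line ``By case analysis on each clause of Types''; you have simply spelled out the details of that case analysis, including the fixed-point induction and the clause-discrimination needed for unicity. Your identification of the Funtime cost component as the delicate case, resolved via part~(5) of Lemma~\ref{lemma:headexp1} to transport membership along $\omega$-equal cost indices, is exactly the right observation and is the one genuinely non-routine step the paper leaves implicit.
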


\begin{proof}
  By case analysis on each clause of Types.
\end{proof}

\begin{lemma}[Empty type system]\label{lemma:empty_TS}
  $\emptyset : \pts$ is a type system.
\end{lemma}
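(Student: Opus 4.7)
The plan is to verify the four conditions in the definition of a type system directly against the empty relation $\emptyset : \pts$. Since each condition is stated as an implication whose hypothesis involves $\tau(A,B,\phi)$ for some triple, and since $\emptyset(A,B,\phi)$ is false for every choice of $A$, $B$, and $\phi$, each implication holds vacuously.

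Concretely, I would take each clause in turn. For unicity, suppose $\emptyset(A,B,\phi)$ and $\emptyset(A,B,\phi')$; the first assumption is impossible, so $\phi = \phi'$ follows ex falso. The PER valuation clause likewise presupposes $\emptyset(A,B,\phi)$ and has no content otherwise. For symmetry, the implication $\emptyset(A,B,\phi) \Rightarrow \emptyset(B,A,\phi)$ is again vacuous, and the same reasoning settles transitivity, where both hypotheses $\emptyset(A,B,\phi)$ and $\emptyset(B,C,\phi)$ already fail to hold.

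There is no real obstacle here; the lemma is a sanity check that the base case of the inductive construction of $\tau_n$ in Figure~\ref{fig:construction} starts from a valid type system, so that the monotonicity result (Theorem~\ref{lemma:mono}) and closure under $\text{Types}$ (Lemma~\ref{lemma:TYPES_TS}) can be iterated from a legitimate seed. The only thing to be careful about is that the conditions in the definition of a type system are indeed phrased as implications from $\tau(A,B,\phi)$; since that is the case, vacuous truth disposes of all four obligations uniformly.
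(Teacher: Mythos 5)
Your proof is correct and matches the paper's own argument, which simply observes that all four criteria hold vacuously for the empty relation. Your more detailed walk-through of each clause is just an expansion of that same one-line observation.
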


\begin{proof}
  The criteria for type system hold vacuously.
\end{proof}

\begin{lemma}\label{lemma:tau0_TS}
  $\tau_0$ is a type system.
\end{lemma}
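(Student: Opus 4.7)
The plan is to derive this as an essentially immediate corollary of the three preceding lemmas. The key observation is that $v_0$, as defined in Figure~\ref{fig:construction}, is the set $\{(\univ{i}, \univ{i}, \phi) \mid i < 0 \land \phi = \{(A,B) \mid \exists \alpha.\, \tau_i(A,B,\alpha) \}\}$, but no natural number $i$ satisfies $i < 0$, so this set is empty. Hence $v_0 = \emptyset$.

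First I would record this observation: $v_0 = \emptyset$ as a subset of $\pts$. Second, I would invoke Lemma~\ref{lemma:empty_TS} to conclude that $v_0$ is a type system, since the unicity, PER valuation, symmetry, and transitivity conditions all hold vacuously when the underlying relation is empty. Third, I would apply Lemma~\ref{lemma:TYPES_TS} with $v := v_0$, which yields directly that $\mu \tau.\, \text{Types}(v_0, \tau)$ is a type system. By definition this fixed point is exactly $\tau_0$, completing the proof.

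There is no real obstacle here: all the work has been front-loaded into Lemma~\ref{lemma:TYPES_TS}, where the case analysis on the clauses of $\text{Types}$ establishes that closing a type-system parameter $v$ under the type constructors preserves unicity, symmetry, transitivity, and the PER condition on the membership relations. The only thing to check in this particular proof is the trivial set-theoretic identity $v_0 = \emptyset$, after which the statement reduces to a one-line composition of Lemmas~\ref{lemma:empty_TS} and~\ref{lemma:TYPES_TS}. In particular, no induction on the universe level is needed at this stage; that will only become relevant when proving the analogous statement for $\tau_{n+1}$, where one will need to know that $v_{n+1}$ is a type system, which in turn requires knowing inductively that each $\tau_i$ for $i \le n$ is a type system.
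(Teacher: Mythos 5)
Your proof is correct and matches the paper's own argument exactly: the paper also derives $\tau_0$ being a type system by combining Lemma~\ref{lemma:empty_TS} (applied to $v_0 = \emptyset$) with Lemma~\ref{lemma:TYPES_TS}. You have merely spelled out the implicit observation that $v_0$ is empty, which the paper leaves unstated.
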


\begin{proof}
  By Lemma~\ref{lemma:empty_TS} and~\ref{lemma:TYPES_TS}.
\end{proof}

\begin{theorem}[Hierarchy]\label{lemma:hierachy}
  For all $n : \N$, $v_n$ and $\tau_n$ are type systems. 
\end{theorem}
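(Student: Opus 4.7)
The plan is to proceed by strong induction on $n$. The base case $n = 0$ is already in hand: $v_0 = \emptyset$ is a type system by Lemma~\ref{lemma:empty_TS} (all four conditions hold vacuously), and $\tau_0$ is a type system by Lemma~\ref{lemma:tau0_TS}. Strong induction (rather than ordinary induction) is needed because the definition of $v_{n+1}$ quantifies over $\tau_i$ for every $i \le n$, not merely the predecessor.

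For the inductive step, assume $v_k$ and $\tau_k$ are type systems for all $k \le n$, and verify the four conditions for $v_{n+1}$. Unicity, symmetry, and transitivity of $v_{n+1}$ itself are immediate from the shape of its triples: every element is of the form $(\univ{i}, \univ{i}, \phi_i)$ with $\phi_i = \{(A,B) \mid \exists \alpha.\, \tau_i(A,B,\alpha)\}$ uniquely determined by $i < n+1$. The substantive check is PER valuation, i.e., that each $\phi_i$ is symmetric and transitive on values. Symmetry of $\phi_i$ is immediate from symmetry of $\tau_i$ (available by the inductive hypothesis).

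Once $v_{n+1}$ is in hand as a type system, $\tau_{n+1} = \mu \tau.\,\text{Types}(v_{n+1}, \tau)$ is a type system by a direct application of Lemma~\ref{lemma:TYPES_TS}, with no further work required.

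The main obstacle is transitivity of $\phi_i$, because the formal transitivity condition on $\tau_i$ only chains triples that already share a common relation $\phi$: from $\tau_i(A,B,\alpha)$ and $\tau_i(B,C,\beta)$ one cannot conclude anything until $\alpha$ and $\beta$ are identified. The trick I would use is to first apply symmetry and then transitivity of $\tau_i$ to derive $\tau_i(B,B,\alpha)$ from $\tau_i(A,B,\alpha)$, and symmetrically $\tau_i(B,B,\beta)$ from $\tau_i(B,C,\beta)$; unicity of $\tau_i$ then forces $\alpha = \beta$, after which a final use of transitivity yields $\tau_i(A,C,\alpha)$, establishing $(A,C) \in \phi_i$.
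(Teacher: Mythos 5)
Your proof is correct and takes essentially the same route as the paper's (one-line) proof: strong induction on $n$, with Lemmas~\ref{lemma:empty_TS} and~\ref{lemma:tau0_TS} handling the base case and Lemma~\ref{lemma:TYPES_TS} handling $\tau_{n+1}$ once $v_{n+1}$ is known to be a type system. The only detail the paper leaves implicit --- that each $\phi_i$ in $v_{n+1}$ is a PER, which requires the symmetry-then-transitivity-then-unicity argument to identify the two witnessing relations before transitivity of $\tau_i$ can be applied --- is exactly what you supply, and you supply it correctly.
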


\begin{proof}
  By strong induction on $n$, using Lemma~\ref{lemma:empty_TS} and~\ref{lemma:tau0_TS} in the base case
  and Lemma~\ref{lemma:TYPES_TS} in the inductive case.
\end{proof}

\begin{theorem}\label{lemma:vomega_TS}
  $v_{\omega}$ is a type system.
\end{theorem}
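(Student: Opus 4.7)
The plan is to unfold the definition of $v_\omega$, observe that every triple in it has the form $(\univ{i}, \univ{i}, \phi_i)$ with $\phi_i = \{(A,B) \mid \exists \alpha.\, \tau_i(A,B,\alpha)\}$, and then verify the four type-system axioms directly, using Theorem~\ref{lemma:hierachy} to borrow the required properties of each $\tau_i$.

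Three of the four axioms are immediate from this shape. Unicity holds because $\phi_i$ is determined entirely by $i$, and $i$ is determined by the first (equivalently second) component of a triple. Symmetry of $v_\omega$ is trivial since swapping the first two components of $(\univ{i},\univ{i},\phi_i)$ leaves it unchanged. Transitivity of $v_\omega$ is trivial for a closely related reason: in any instance $v_\omega(A,B,\phi)$ and $v_\omega(B,C,\phi)$, the common middle element forces $A = B = C = \univ{i}$ for a single $i$, so the conclusion coincides with a hypothesis.

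The only substantive obligation is PER valuation: for each $i$, show that $\phi_i$ is symmetric and transitive on $\val$. Symmetry is a one-line unwrapping, using the symmetry of $\tau_i$ (a type system by Theorem~\ref{lemma:hierachy}) on a witness. Transitivity is the step I expect to take the most care. From witnesses $\tau_i(A,B,\alpha)$ and $\tau_i(B,C,\alpha')$ one cannot immediately invoke transitivity of $\tau_i$, because the witnessing relations need not agree. My plan is the standard maneuver of first combining each witness with its symmetric counterpart via transitivity of $\tau_i$ to obtain $\tau_i(B,B,\alpha)$ and $\tau_i(B,B,\alpha')$, then invoking unicity of $\tau_i$ to conclude $\alpha = \alpha'$, and only then applying transitivity of $\tau_i$ to the original witnesses to obtain $\tau_i(A,C,\alpha)$, hence $\phi_i(A,C)$. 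This is really the only place where all four type-system properties of $\tau_i$ are put to use; the remainder is pure unfolding.
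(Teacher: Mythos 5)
Your proposal is correct and matches the paper's proof, which likewise dispatches unicity, symmetry, and transitivity as immediate from the shape of $v_{\omega}$ and reduces PER valuation to Theorem~\ref{lemma:hierachy}. Your careful spelling-out of the transitivity of $\phi_i$ (using symmetry, transitivity, and unicity of $\tau_i$ to identify the two witnessing relations before applying transitivity) is exactly the detail the paper leaves implicit, and it is the right maneuver given that the transitivity axiom for type systems requires a common $\phi$ in both premises.
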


\begin{proof}
  Unicity, symmetry, and transitivity are immediate; per valuation follows from Lemma~\ref{lemma:hierachy}.
\end{proof}

\begin{theorem}\label{lemma:tauomega_TS}
  $\tau_{\omega}$ is a type system.
\end{theorem}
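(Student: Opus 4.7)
The proof is essentially a one-line consequence of the two immediately preceding results, so the proposal is to simply chain them. Recall that by definition $\tau_{\omega} = \mu \tau.\, \text{Types}(v_{\omega}, \tau)$, which is exactly the shape required by Lemma~\ref{lemma:TYPES_TS}: that lemma takes any type system $v : \pts$ and concludes that $\mu \tau.\, \text{Types}(v, \tau)$ is again a type system. The plan is therefore to invoke Lemma~\ref{lemma:TYPES_TS} with $v := v_{\omega}$, and to discharge its hypothesis by appealing to Theorem~\ref{lemma:vomega_TS}, which already establishes that $v_{\omega}$ is a type system.

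Concretely, first I would cite Theorem~\ref{lemma:vomega_TS} to obtain that $v_{\omega}$ satisfies unicity, PER valuation, symmetry, and transitivity. Second, I would apply Lemma~\ref{lemma:TYPES_TS} at $v_{\omega}$ to transfer those four conditions to the least fixed point $\mu \tau.\, \text{Types}(v_{\omega}, \tau)$, which by definition equals $\tau_{\omega}$. No induction on a stage index is required here, in contrast to Theorem~\ref{lemma:hierachy}, because the universe structure has already been collapsed into $v_{\omega}$ before taking the fixed point.

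There is essentially no obstacle, since the heavy lifting (the case analysis on the clauses of $\text{Types}$) has been absorbed into Lemma~\ref{lemma:TYPES_TS}, and the well-foundedness needed for $v_{\omega}$ itself is handled by Theorem~\ref{lemma:hierachy} inside the proof of Theorem~\ref{lemma:vomega_TS}. The only point one should double-check is that Lemma~\ref{lemma:TYPES_TS} was stated generically enough to be applied with an ``$\omega$-stage'' type system rather than only with a finite-stage $v_n$; inspecting its statement confirms it holds for any $v : \pts$ that happens to be a type system, so the application is immediate.
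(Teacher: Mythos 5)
Your proposal matches the paper's proof exactly: the paper also concludes by applying Lemma~\ref{lemma:TYPES_TS} at $v_{\omega}$, with its hypothesis discharged by Theorem~\ref{lemma:vomega_TS}. Your additional check that Lemma~\ref{lemma:TYPES_TS} is stated for an arbitrary type system $v : \pts$ (not just the finite stages $v_n$) is a reasonable sanity check but does not change the argument.
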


\begin{proof}
  By Lemma~\ref{lemma:vomega_TS} and~\ref{lemma:TYPES_TS}.
\end{proof}

\begin{lemma}\label{lemma:fp_le}
  Let $X$ be a complete lattice and $f,g : X \to X$ be monotone functions. If 
  $f(x) \subseteq g(x)$ for all $x \in X$, then $\mu x.\,f(x) \subseteq \mu x.\,g(x)$.
\end{lemma}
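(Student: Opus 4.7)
The plan is to use the standard characterization of the least fixed point of a monotone function on a complete lattice as the least \emph{pre-fixed point} (Knaster--Tarski): for monotone $h$, $\mu x.\, h(x) = \bigsqcap \{y \in X \mid h(y) \sqsubseteq y\}$, where $\sqsubseteq$ denotes the lattice order (here $\subseteq$). In particular, whenever $h(y) \sqsubseteq y$, we have $\mu x.\,h(x) \sqsubseteq y$.

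First I would let $y = \mu x.\, g(x)$, so that $g(y) = y$, and in particular $g(y) \subseteq y$. Next I would apply the hypothesis at the point $y$ to obtain $f(y) \subseteq g(y)$, and combine these to conclude $f(y) \subseteq y$. Thus $y$ is a pre-fixed point of $f$. Finally, invoking the least pre-fixed point characterization of $\mu x.\, f(x)$, I get $\mu x.\, f(x) \subseteq y = \mu x.\, g(x)$, which is exactly the desired inclusion.

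There is no real obstacle here; the proof is a two-line consequence of Knaster--Tarski once one unfolds the definition of $\mu$. The only mildly subtle point is conceptual: we never use that $f(y) = y$ or that $f$ and $g$ commute in any way; all we need is that $\mu g$ happens to already absorb one application of $f$, which follows pointwise from $f(y) \subseteq g(y)$ together with $g(y) = y$. Monotonicity of $f$ is used implicitly (to justify the existence of $\mu x.\, f(x)$ and the pre-fixed point characterization), but does not otherwise enter the argument.
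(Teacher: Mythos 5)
Your proof is correct and is essentially identical to the paper's: both show that $\mu x.\,g(x)$ is a pre-fixed point of $f$ (the paper says ``$f$-closed'') via $f(\mu g) \subseteq g(\mu g) = \mu g$, and then invoke the least pre-fixed point property of $\mu f$. No gaps.
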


\begin{proof}
  $\mu x.\,g(x)$ is $f$-closed because $f(\mu x.\,g(x)) \subseteq g(\mu x.\,g(x)) = \mu x.\,g(x)$. 
  Since $\mu x.\, f(x)$ is the least $f$-closed point, $\mu x.\, f(x) \subseteq \mu x.\, g(x)$.
\end{proof}

\begin{lemma}[Cumulativity]\label{lemma:cumulativity}
  \ 
  \begin{enumerate}
    \item $v_i \subseteq v_{\omega}$ for all $i$. 
    \item $\tau_i \subseteq \tau_{\omega}$ for all $i$. 
  \end{enumerate}
\end{lemma}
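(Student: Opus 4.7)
The plan is to prove the two inclusions in sequence: part (1) will come straight from unfolding the definitions of $v_i$ and $v_\omega$, and part (2) will then follow by combining (1) with both halves of Theorem~\ref{lemma:mono} and the fixed-point inequality Lemma~\ref{lemma:fp_le}.

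For (1), I would simply unpack the two definitions. Any triple in $v_i$ has the form $(\univ{j}, \univ{j}, \phi)$ where $j < i$ and $\phi = \{(A,B) \mid \exists \alpha.\, \tau_j(A,B,\alpha)\}$. The definition of $v_\omega$ is identical except that the bound $j < i$ is dropped, so the same triple is a member of $v_\omega$ a fortiori. This gives $v_i \subseteq v_\omega$.

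For (2), I would apply Lemma~\ref{lemma:fp_le} with $f(\tau) = \text{Types}(v_i, \tau)$ and $g(\tau) = \text{Types}(v_\omega, \tau)$. Both maps are monotone by Theorem~\ref{lemma:mono}(1), and the ambient space $\pts$ is a powerset, hence a complete lattice under $\subseteq$, so the hypothesis of the fixed-point lemma is met. What remains is the pointwise inclusion $\text{Types}(v_i, \tau) \subseteq \text{Types}(v_\omega, \tau)$ for every $\tau$, which is immediate from monotonicity of $\text{Types}$ in its first argument (Theorem~\ref{lemma:mono}(2)) applied to the inclusion from part (1). Lemma~\ref{lemma:fp_le} then yields $\tau_i = \mu \tau.\, \text{Types}(v_i,\tau) \subseteq \mu \tau.\, \text{Types}(v_\omega,\tau) = \tau_\omega$.

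I do not anticipate a real obstacle. The argument is a short chain of already-established facts: the set-theoretic containment from unfolding, monotonicity of $\text{Types}$ in both arguments, and the fixed-point inequality. The only point worth verifying en passant is that the $\text{Univ}$ clause of $\text{Types}(v,\tau)$ genuinely respects $v$-monotonicity; this is clear from its definition, since enlarging $v$ can only enlarge the set of universe triples contributed to the result, and correspondingly relaxes the existential $\exists \phi.\, \sameType{\univ{i}}{\univ{i}}{\phi}{\tau}$ picked out by the clause.
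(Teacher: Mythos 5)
Your proposal is correct and follows essentially the same route as the paper: part (1) by unfolding the definitions (the paper says ``by construction''), and part (2) by applying Lemma~\ref{lemma:fp_le} to $\text{Types}(v_i,\cdot)$ and $\text{Types}(v_\omega,\cdot)$, reducing the pointwise inclusion to monotonicity of $\text{Types}$ in its first argument (Theorem~\ref{lemma:mono}). Your version is slightly more explicit in checking the monotonicity hypotheses of the fixed-point lemma, but there is no substantive difference.
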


\begin{proof}
  The first part holds by construction. 
  For the latter, by Lemma~\ref{lemma:fp_le}, it suffices to show 
  $\text{Types}(v_i,\tau) \subseteq \text{Types}(v_{\omega},\tau)$ for all $\tau$.
  It suffices to show $\text{Types}(\_,\tau)$ is monotone, 
  which holds by Lemma~\ref{lemma:mono}.
\end{proof}
\section{Computational Type Theory}~\label{sec:cctt}
\textbf{Closed Judgments} 
Having defined a type system $\tau_{\omega}$, we are now in a position to define the judgments of \cctt{}. 
In addition to the usual type equality and 
membership relation, we define a third membership relation which bounds the evaluation cost.

\begin{definition}[Types and Equality]
  Given a type system $\tau$, 
  \begin{enumerate}
    \item \eqType{A}{B} when \sameType{A}{B}{\alpha}{\tau} for some $\alpha$.
    \item \eqComp{M}{M'}{A} when 
      \sameType{A}{A}{\alpha}{\tau} and \sameComp{}{M}{M'}{\alpha} for some $\alpha$.
    \item \eqCComp{M}{M'}{A}{P}{} when 
      \sameType{A}{A}{\alpha}{\tau} and \sameCComp{}{M}{M'}{\alpha}{P}{} for some $\alpha$.
    \item \true{A}{B} when \sameType{A}{B}{\alpha}{\tau} and \sameComp{}{M}{M}{\alpha} for some $\alpha$ and $M$.
  \end{enumerate}
\end{definition}

\begin{remark}
  This definition is generic in a type system $\tau$. However, from now on we will implicitly assume 
  $\tau = \tau_{\omega}$ unless specified otherwise.
\end{remark}

The reflexive case for the 3 relations will be abbreviated as follows: 
\begin{enumerate}
  \item \isTypeComp{A} means \eqType{A}{A}.
  \item \isComp{M}{A} means \eqComp{M}{M}{A}.
  \item \isCComp{M}{A}{P} means \eqCComp{M}{M}{A}{P}.
\end{enumerate}

Furthermore, when an expression is a value, membership will be subscripted with a 0:
\eqVal{M}{M'}{A} when \final{M}, \final{M'}, and \eqComp{M}{M'}{A}, and \isVal{M}{A} when \final{M} and \isComp{M}{A}.

The closed judgments are PERs: 
\begin{lemma}[Closed PER]\label{lemma:per}
  \ 
\begin{enumerate}
  \item \eqType{\_}{\_} is a PER.
  \item \eqComp{\_}{\_}{A} is a PER. 
  \item \eqCComp{\_}{\_}{A}{P} is a PER.
\end{enumerate}
\end{lemma}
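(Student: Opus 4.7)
The plan is to reduce each of the three claims to the corresponding coherence conditions on $\tau_\omega$ established in Theorem~\ref{lemma:tauomega_TS} (unicity, PER valuation, symmetry, transitivity), together with determinism of evaluation. The organizing tool throughout is \emph{unicity}: whenever two hypotheses witness $A$ being a type through possibly different membership relations $\alpha$ and $\alpha'$, unicity lets us identify them and then apply symmetry/transitivity of $\tau_\omega$ or the PER structure of $\alpha$.

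First I would dispatch $\eqType{\_}{\_}$. Symmetry is immediate: unfolding $\sameType{A}{B}{\alpha}{\tau_\omega}$ gives $\eval{A}{A_0}$, $\eval{B}{B_0}$, and $\tau_\omega(A_0,B_0,\alpha)$; symmetry of $\tau_\omega$ yields $\tau_\omega(B_0,A_0,\alpha)$, hence $\sameType{B}{A}{\alpha}{\tau_\omega}$. For transitivity, suppose $\eqType{A}{B}$ via $\alpha$ and $\eqType{B}{C}$ via $\alpha'$. Determinism of evaluation makes the intermediate value $B_0$ unique. Using symmetry and transitivity of $\tau_\omega$ one obtains $\tau_\omega(B_0,B_0,\alpha)$ and $\tau_\omega(B_0,B_0,\alpha')$; unicity then forces $\alpha=\alpha'$, and a final application of transitivity of $\tau_\omega$ gives $\tau_\omega(A_0,C_0,\alpha)$, i.e.\ $\eqType{A}{C}$.

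Next I would treat $\eqComp{\_}{\_}{A}$. By the previous step, any two witnesses of $\sameType{A}{A}{-}{\tau_\omega}$ share the same $\alpha$ (by unicity), so symmetry and transitivity reduce to the fact that $\alpha$ is a PER on values, guaranteed by PER valuation of $\tau_\omega$. Concretely, symmetry follows from $\alpha(V,V')\Rightarrow\alpha(V',V)$; transitivity uses determinism of evaluation to identify the intermediate value and then applies transitivity of $\alpha$. The case of $\eqCComp{\_}{\_}{A}{P}$ proceeds in the same way for the underlying value relation, with the additional cost condition handled arithmetically: symmetry is preserved because $\max(c,c')=\max(c',c)$; transitivity is preserved because $p\geq\max(c,c')$ and $p\geq\max(c',c'')$ together yield $p\geq\max(c,c'')$, and the common $P$ still evaluates to the same numeral $\bar p$.

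The main obstacle I anticipate is the transitivity step for $\eqType{\_}{\_}$, where reconciling the two a priori different membership relations $\alpha$ and $\alpha'$ requires combining symmetry, transitivity, and unicity of $\tau_\omega$ in the right order via the deterministically-shared value $B_0$. Once this pattern is in place, the other cases are routine: unicity strips away the choice of $\alpha$, PER valuation handles the value-level argument, and determinism plus simple $\max$-arithmetic handle the cost component.
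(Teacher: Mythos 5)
Your proposal is correct and takes essentially the same route as the paper, which simply cites the fact that $\tau_{\omega}$ is a type system; you have just spelled out how unicity, symmetry, transitivity, and PER valuation of $\tau_{\omega}$ (together with determinism of evaluation and the elementary $\max$ arithmetic for the cost component) combine to give the three PER claims. In particular, your observation that unicity must be invoked to reconcile the two a priori distinct witnesses $\alpha$ and $\alpha'$ before applying transitivity of $\tau_{\omega}$ is exactly the content the paper leaves implicit.
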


\begin{proof}
  Immediate consequence of the fact that $\tau_{\omega}$ is a type system.
\end{proof}

\begin{lemma}\label{lemma:resp_eq}
  Given \eqType{A}{A'}, 
  \begin{enumerate}
    \item If \eqComp{M}{M'}{A}, then \eqComp{M}{M'}{A'}.
    \item If \eqCComp{M}{M'}{A}{P}, then \eqCComp{M}{M'}{A'}{P}.
  \end{enumerate}
\end{lemma}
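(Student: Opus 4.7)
The plan is to exploit the coherence conditions of the type system $\tau_\omega$ (Theorem~\ref{lemma:tauomega_TS})---unicity, symmetry, and transitivity---to transport the membership witnesses from $A$ to $A'$. Both parts follow the same template; part (2) simply carries the cost bound $P$ through unchanged.

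First I would unfold definitions. From \eqType{A}{A'}, obtain some $\alpha$ with \eval{A}{A_0}, \eval{A'}{A'_0}, and $\tau_\omega(A_0, A'_0, \alpha)$. From the membership hypothesis, obtain some $\alpha'$ with $\tau_\omega(A_0, A_0, \alpha')$, together with \sameComp{}{M}{M'}{\alpha'} for part (1) or \sameCComp{}{M}{M'}{\alpha'}{P}{} for part (2).

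The central step is to show $\alpha = \alpha'$. Symmetry applied to $\tau_\omega(A_0, A'_0, \alpha)$ yields $\tau_\omega(A'_0, A_0, \alpha)$; transitivity on this pair (both carrying $\phi = \alpha$) gives $\tau_\omega(A_0, A_0, \alpha)$, at which point unicity forces $\alpha = \alpha'$. A parallel application of transitivity to $\tau_\omega(A'_0, A_0, \alpha)$ and $\tau_\omega(A_0, A'_0, \alpha)$ gives $\tau_\omega(A'_0, A'_0, \alpha)$, which is exactly \sameType{A'}{A'}{\alpha}{\tau_\omega}. Combining this with the membership witness---now re-indexed by $\alpha$ in place of $\alpha'$---discharges \eqComp{M}{M'}{A'} for part (1) and \eqCComp{M}{M'}{A'}{P} for part (2).

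The argument is essentially routine PER-semantics bookkeeping; the only subtlety is ensuring that each invocation of symmetry and transitivity uses matching $\phi$ so that unicity is applicable at the end. I do not anticipate any real obstacle beyond this.
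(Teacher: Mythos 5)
Your proposal is correct and follows essentially the same route as the paper's proof: both derive $\tau_\omega(A_0,A_0,\alpha)$ and $\tau_\omega(A'_0,A'_0,\alpha)$ from symmetry and transitivity of the type system, then apply unicity to identify $\alpha$ with $\alpha'$ and transport the membership witness. Your write-up is just a slightly more explicit spelling-out of the same bookkeeping.
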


\begin{proof}
  Part 1) and 2) are similar, so we just show 1). Suppose \eqComp{M}{M'}{A}. This means
  \sameType{A}{A}{\alpha}{\tau_{\omega}} for some $\alpha$ and
  \sameComp{}{M}{M'}{\alpha}. We need to show \eqComp{M}{M'}{A'}.
  By assumption, \sameType{A}{A'}{\alpha'}{\tau_{\omega}} for some $\alpha'$. 
  By symmetry and transitivity, \sameType{A}{A}{\alpha'}{\tau_{\omega}}, so by
  unicity, $\alpha = \alpha'$. Similarly,
  \sameType{A'}{A'}{\alpha'}{\tau_{\omega}}, 
  and \sameComp{M}{M'}{\alpha'} by definition, so the result holds.
\end{proof}

\noindent\textbf{Head expansion} The closed judgments are closed under head expansion, and furthermore is a congruence with respect to complexity: 
\begin{lemma}[Head expansion and replacement]\label{lemma:headexp}
  \ 
  \begin{enumerate}
    \item If \eqType{A}{A'} and \step{A''}{A}, then \eqType{A''}{A}.
    \item If \eqComp{M}{M'}{A} and \step{M''}{M}, then \eqComp{M''}{M'}{A}.
    \item If \eqCComp{M}{M'}{A}{P} and \step{M''}{M}, then \eqCComp{M''}{M'}{A}{\suc{P}}.
    \item If \eqCComp{M}{M'}{A}{P}, \stepIn{M_1}{c_1}{M}, \stepIn{M_2}{c_2}{M'}, \eval{Q}{\bar{q}}, and $q \ge \max(c_1,c_2)$,
     then \eqCComp{M_1}{M_2}{A}{Q \hat{+} P}.
    \item If \eqCComp{M}{M'}{A}{P} and \eqComp{P}{P'}{\nat}, then \eqCComp{M}{M'}{A}{P'}.
  \end{enumerate}
\end{lemma}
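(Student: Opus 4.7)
The plan is to reduce each of the five parts to the corresponding clause of the semantic head-expansion lemma (Lemma~\ref{lemma:headexp1}) by unfolding the definitions of the closed judgments to their witnessing PERs in $\tau_{\omega}$. Essentially all the real work has already been done at the semantic level, so the judgmental version amounts to extracting a witness, transporting a head expansion assertion through the unchanged type structure, and repackaging.

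For part 1 I would unfold \eqType{A}{A'} to obtain an $\alpha$ with \sameType{A}{A'}{\alpha}{\tau_{\omega}}; this gives \eval{A}{A_0} and \eval{A'}{A_0'} with $\tau_{\omega}(A_0, A_0', \alpha)$. Since \step{A''}{A}, we also have \eval{A''}{A_0}, hence \sameType{A''}{A'}{\alpha}{\tau_{\omega}}. The stated conclusion is \eqType{A''}{A}, not \eqType{A''}{A'}, so I finish by using the PER structure of $\tau_{\omega}$ from Theorem~\ref{lemma:tauomega_TS}: symmetry and transitivity of $\tau_{\omega}$ applied to $(A_0, A_0', \alpha)$ give $\tau_{\omega}(A_0, A_0, \alpha)$, and therefore \sameType{A''}{A}{\alpha}{\tau_{\omega}}.

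For parts 2, 3, and 4 the pattern is identical. Each hypothesis unfolds to a witnessing type relation \sameType{A}{A}{\alpha}{\tau_{\omega}} together with the corresponding semantic assertion on terms, namely \sameComp{}{M}{M'}{\alpha} in part 2 and \sameCComp{}{M}{M'}{\alpha}{P} in parts 3 and 4. Since head expansion only touches $M$, $M'$, or both and leaves $A$ untouched, applying clauses (1), (3), and (4) of Lemma~\ref{lemma:headexp1} respectively to the semantic assertion yields the desired semantic conclusion, which repackages directly into the claimed judgment.

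For part 5, the extra step is to convert the hypothesis \eqComp{P}{P'}{\nat} into \sameComp{}{P}{P'}{\omega}: unfolding gives \sameType{\nat}{\nat}{\phi}{\tau_{\omega}} together with \sameComp{}{P}{P'}{\phi} for some $\phi$, and the Nat clause of Figure~\ref{fig:construction} combined with unicity of $\tau_{\omega}$ forces $\phi = \omega$; clause (5) of Lemma~\ref{lemma:headexp1} then applies. I do not anticipate any real obstacle beyond careful bookkeeping; if there is a subtle step at all, it is the small PER manipulation in part 1 needed to change the right-hand side from $A'$ to $A$, together with the unicity-based identification of $\phi$ with $\omega$ in part 5.
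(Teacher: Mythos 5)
Your proposal is correct and follows essentially the same route as the paper's proof: unfold each judgment to its witnessing relation in $\tau_{\omega}$ and then either invoke or replicate the corresponding clause of Lemma~\ref{lemma:headexp1}, with the type component unchanged. You are in fact slightly more explicit than the paper in two spots it glosses over --- the symmetry/transitivity step needed in part 1 because the stated conclusion is \eqType{A''}{A} rather than \eqType{A''}{A'}, and the unicity argument identifying $\phi$ with $\omega$ in part 5 --- but these are bookkeeping details, not a different method.
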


\begin{proof}
  \ 
  \begin{enumerate}
    \item By assumption, \sameType{A}{A'}{\alpha}{\tau_{\omega}} for some $\alpha$. So \eval{A}{V}, \eval{A'}{V'}, 
    and $\tau_{\omega}(V,V',\alpha)$. Since $\_ \mapsto  \_$ is deterministic, \eval{A''}{V}, and the result holds.
    \item By assumption, \sameComp{}{M}{M'}{\alpha} for some $\alpha$ where \sameType{A}{A}{\alpha}{\tau_{\omega}}.
    It suffices to show \sameComp{}{M''}{M'}{\alpha}. By assumption, \eval{M}{V}, \eval{M'}{V'}, and $\alpha(V,V')$.
    Hence \eval{M''}{V}, and we are done. 
    \item By assumption, \sameComp{}{M}{M'}{\alpha}{P} for some $\alpha$ where \sameType{A}{A}{\alpha}{\tau_{\omega}}.
    This means that 
    \begin{enumerate}
      \item \evalCost{M}{c}{V}
      \item \evalCost{M'}{c'}{V'}
      \item \eval{P}{\bar{p}}
      \item $p \ge \max{(c,c')}$
      \item $\alpha(V,V')$
    \end{enumerate}
    It suffices to show \sameCComp{}{M''}{M'}{\alpha}{\suc{P}}. Hence we have \evalCost{M''}{c+1}{V} and 
    \eval{\suc{P}}{\bar{p+1}}. It suffices to show $p+1 \ge \max{(c+1,c')}$, which holds.
    \item Similar to above.
    \item By assumption, \sameComp{}{P}{P'}{\omega}, which means \eval{P}{\bar{p}}, \eval{P'}{\bar{p'}}, and 
    $\omega(\bar{p},\bar{p'})$. It suffices to show $p = p'$, which holds by definition of $\omega$.
  \end{enumerate}
\end{proof}

\noindent\textbf{Open Judgments}
To extend the closed judgments to open terms, we define \tel{\Gamma}, which represent
telescopes. 

\begin{definition}[Telescopes]
  \ 
  \begin{enumerate}
  \item \tel{\cdot}.
  \item \tel{\Gamma,\isOf{x}{A}} when \tel{\Gamma} and $A \in \mathsf{Exp}$.
  \end{enumerate}
\end{definition}

Open judgments are given meaning by functionality: open types are equal if all equal substitution instances are equal as 
closed types, and similarly for membership. As before, substitution instances range over values. 

\begin{definition}[Equal Instances]
  Given \tele{\Gamma}, define \eqInst{\gamma}{\gamma'}{\Gamma}:
  \begin{enumerate}
    \item \eqInst{\cdot}{\cdot}{\cdot}.
    \item \eqInst{\gamma,[a \mapsto V]}{\gamma',[a \mapsto V']}{\Gamma,\isOf{a}{A}} when
      \eqVal{V}{V'}{\hat\gamma A} and \eqInst{\gamma}{\gamma'}{\Gamma}.
  \end{enumerate}
\end{definition}

\begin{definition}[Open Judgments]
  \ 
  \begin{enumerate}
    \item \openEqType{\Gamma}{A}{A'} when for all \eqInst{\gamma}{\gamma'}{\Gamma}, 
      \eqType{\hat\gamma A}{\hat\gamma' A'}.
    \item \openEqComp{\Gamma}{M}{M'}{A} when for all \eqInst{\gamma}{\gamma'}{\Gamma}, 
      \eqComp{\hat\gamma M}{\hat\gamma' M'}{\hat\gamma A}.
    \item \openEqCComp{\Gamma}{M}{M'}{A}{P} when for all \eqInst{\gamma}{\gamma'}{\Gamma}, 
      \eqCComp{\hat\gamma M}{\hat\gamma' M'}{\hat\gamma A}{\hat\gamma P}.
    \item \openEqVal{\Gamma}{V}{V'}{A} when for all \eqInst{\gamma}{\gamma'}{\Gamma}, 
      \eqVal{\hat\gamma V}{\hat\gamma' V'}{\hat\gamma A}.
  \end{enumerate}
\end{definition}

\begin{proposition}[Conversion]
  \openEqVal{\Gamma}{V}{V'}{A} $\iff$ \openEqCComp{\Gamma}{V}{V'}{A}{\zero}.
\end{proposition}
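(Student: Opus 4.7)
The plan is to unfold both definitions pointwise in an arbitrary $\eqInst{\gamma}{\gamma'}{\Gamma}$ and use the fact that $\zero$ is closed, so $\hat\gamma\zero = \zero$ and $\eval{\zero}{\bar{0}}$. The heart of the argument is the observation that a cost bound of $0$ on $\sameCComp{}{M}{M'}{\alpha}{\zero}{}$ is equivalent to $M$ and $M'$ already being values related by $\alpha$, since $\stepIn{M}{c}{W}$ with $c \le 0$ forces $M = W$ and hence $\final{M}$.

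For the forward direction, I assume $\openEqVal{\Gamma}{V}{V'}{A}$. Fixing $\eqInst{\gamma}{\gamma'}{\Gamma}$, I get $\final{\hat\gamma V}$, $\final{\hat\gamma' V'}$, and an $\alpha$ with $\sameType{\hat\gamma A}{\hat\gamma A}{\alpha}{\tau_{\omega}}$ and $\sameComp{}{\hat\gamma V}{\hat\gamma' V'}{\alpha}$. Since both sides are final, $\evalCost{\hat\gamma V}{0}{\hat\gamma V}$ and $\evalCost{\hat\gamma' V'}{0}{\hat\gamma' V'}$, and $\alpha(\hat\gamma V, \hat\gamma' V')$ follows from determinacy of evaluation. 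Combined with $\sameComp{}{\zero}{\zero}{\omega}$ and $0 \ge \max(0,0)$, this yields $\sameCComp{}{\hat\gamma V}{\hat\gamma' V'}{\alpha}{\zero}{}$, hence $\eqCComp{\hat\gamma V}{\hat\gamma' V'}{\hat\gamma A}{\zero}$ as required.

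For the backward direction, I assume $\openEqCComp{\Gamma}{V}{V'}{A}{\zero}$ and again fix an instance. Unfolding gives $\alpha$ with $\sameType{\hat\gamma A}{\hat\gamma A}{\alpha}{\tau_{\omega}}$ together with $\evalCost{\hat\gamma V}{c}{W}$, $\evalCost{\hat\gamma' V'}{c'}{W'}$, $\alpha(W,W')$, and $p \ge \max(c,c')$ where $\eval{\zero}{\bar{p}}$. Since $\bar{0} = \zero$, we have $p = 0$, forcing $c = c' = 0$. Therefore $\hat\gamma V = W$ and $\hat\gamma' V' = W'$ are already values, which yields $\sameComp{}{\hat\gamma V}{\hat\gamma' V'}{\alpha}$ and hence $\eqVal{\hat\gamma V}{\hat\gamma' V'}{\hat\gamma A}$.

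The proof is essentially a bureaucratic unfolding and I do not anticipate a genuine obstacle; the only delicate point is the backward direction's step extracting $\final{\hat\gamma V}$ from $\stepIn{\hat\gamma V}{0}{W}$, which relies on the convention that a zero-step reduction is the identity and thus the source inherits finality from the target.
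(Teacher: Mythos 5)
Your proof is correct and is exactly the direct unfolding the paper intends (the proposition is stated without proof, being regarded as immediate from the definitions): both directions reduce to the observation that a value evaluates to itself in zero steps and that $\zero$ is the numeral $\bar{0}$, so the cost bound $0$ forces the subjects to already be values. The one point worth making explicit, which you flag yourself, is that the forward direction uses not just determinism of the step relation but also that final expressions do not step, so that the value computed by evaluation coincides with the term itself.
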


In order to prove some structural lemmas about the open judgments (such as symmetry and transitivity), we need to 
restrict $\Gamma$ to telescopes that are also \emph{contexts}: 

\begin{definition}[Contexts]
  Given \tele{\Gamma}, and \tele{\Gamma'}, define \eqCtx{\Gamma}{\Gamma'}:
  \begin{enumerate}
    \item \eqCtx{\cdot}{\cdot}.
    \item \eqCtx{\Gamma,\isOf{a}{A}}{\Gamma',\isOf{a}{A'}} when
      \openEqType{\Gamma}{A}{A'} and \eqCtx{\Gamma}{\Gamma'}.
  \end{enumerate}
\end{definition}

\begin{lemma}[PER of contexts]\label{lemma:per_context}
  Given \eqCtx{\Gamma}{\Gamma}, 
  \begin{enumerate}
    \item If \eqInst{\gamma_1}{\gamma_2}{\Gamma}, then \eqInst{\gamma_2}{\gamma_1}{\Gamma}.
    \item If \eqInst{\gamma_1}{\gamma_2}{\Gamma} and \eqInst{\gamma_2}{\gamma_3}{\Gamma},
    then \eqInst{\gamma_1}{\gamma_3}{\Gamma}.
  \end{enumerate}
\end{lemma}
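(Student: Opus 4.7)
The plan is to induct on the telescope $\Gamma$. The base case $\Gamma = \cdot$ is immediate: the only equal instance is the empty one, and the conclusions hold vacuously. For the inductive step, write $\Gamma = \Gamma_0, \isOf{a}{A}$. Unfolding \eqCtx{\Gamma}{\Gamma} yields both \eqCtx{\Gamma_0}{\Gamma_0} and \openEqType{\Gamma_0}{A}{A}, so the inductive hypothesis applies to $\Gamma_0$, and we have functionality of $A$ available on equal instances of $\Gamma_0$.

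For symmetry, suppose \eqInst{\gamma_1',[a \mapsto V_1]}{\gamma_2',[a \mapsto V_2]}{\Gamma_0,\isOf{a}{A}}. By definition this gives \eqInst{\gamma_1'}{\gamma_2'}{\Gamma_0} and \eqVal{V_1}{V_2}{\hat{\gamma_1'} A}. The IH yields \eqInst{\gamma_2'}{\gamma_1'}{\Gamma_0}. Applying \openEqType{\Gamma_0}{A}{A} to the original equal instance gives \eqType{\hat{\gamma_1'} A}{\hat{\gamma_2'} A}. Then Lemma~\ref{lemma:resp_eq} transports the hypothesis to \eqVal{V_1}{V_2}{\hat{\gamma_2'} A}, and symmetry from Lemma~\ref{lemma:per} yields \eqVal{V_2}{V_1}{\hat{\gamma_2'} A}. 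Reassembling with the IH gives the desired \eqInst{\gamma_2',[a \mapsto V_2]}{\gamma_1',[a \mapsto V_1]}{\Gamma_0,\isOf{a}{A}}.

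For transitivity, suppose the two instance equalities \eqInst{\gamma_1',[a \mapsto V_1]}{\gamma_2',[a \mapsto V_2]}{\Gamma_0,\isOf{a}{A}} and \eqInst{\gamma_2',[a \mapsto V_2]}{\gamma_3',[a \mapsto V_3]}{\Gamma_0,\isOf{a}{A}}. The IH applied to the head substitutions gives \eqInst{\gamma_1'}{\gamma_3'}{\Gamma_0}, while the tails give \eqVal{V_1}{V_2}{\hat{\gamma_1'} A} and \eqVal{V_2}{V_3}{\hat{\gamma_2'} A}. As above, \openEqType{\Gamma_0}{A}{A} together with \eqInst{\gamma_1'}{\gamma_2'}{\Gamma_0} yields \eqType{\hat{\gamma_1'} A}{\hat{\gamma_2'} A}, so Lemma~\ref{lemma:resp_eq} lets us rewrite the second equality as \eqVal{V_2}{V_3}{\hat{\gamma_1'} A}. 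Transitivity from Lemma~\ref{lemma:per} finishes the tail, and we reassemble.

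The one non-routine point is that the value equalities in the two hypotheses are indexed by different closed types ($\hat{\gamma_1'} A$ versus $\hat{\gamma_2'} A$), so one cannot directly chain them. The main obstacle is thus managing this type mismatch, which is resolved by extracting functionality of $A$ from \eqCtx{\Gamma}{\Gamma} and transporting along the resulting type equality via Lemma~\ref{lemma:resp_eq} before appealing to the PER structure of the closed judgments.
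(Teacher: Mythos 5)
Your proof is correct and follows essentially the same route as the paper's: induction on the telescope, extracting $\openEqType{\Gamma_0}{A}{A}$ from $\eqCtx{\Gamma}{\Gamma}$, transporting the value equalities across the type mismatch via Lemma~\ref{lemma:resp_eq}, and then closing with the PER structure from Lemma~\ref{lemma:per}. No discrepancies worth noting.
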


\begin{proof}
  Part 1): induction on the length of $\Gamma$. If $\Gamma = \cdot$, then the
  result follows by definition. Otherwise, $\Gamma = \Gamma',\isOf{a}{A}$, and suppose \eqInst{\gamma_1,[a \mapsto
  V_1]}{\gamma_2,[a \mapsto V_2]}{\Gamma',\isOf{a}{A}}. We need to show 
  \eqInst{\gamma_2,[a \mapsto V_2]}{\gamma_1,[a \mapsto V_1]}{\Gamma',\isOf{a}{A}}. By definition, it
  suffices to show 
  \begin{enumerate}
    \item \eqVal{V_2}{V_1}{\hat\gamma_2 A} 
    \item \eqInst{\gamma_2}{\gamma_1}{\Gamma'}
  \end{enumerate}
  By the assumption that \eqCtx{\Gamma}{\Gamma}, we have
  \openEqType{\Gamma'}{A}{A} and \eqCtx{\Gamma'}{\Gamma'}. By assumption, we
  have \eqVal{V_1}{V_2}{\hat\gamma_1 A} and
  \eqInst{\gamma_1}{\gamma_2}{\Gamma'}. Hence the second obligation follows by
  induction. The first obligation follows from Lemma~\ref{lemma:per} and~\ref{lemma:resp_eq}, given that
  we show \eqType{\hat\gamma_1 A}{\hat\gamma_2 A}, which in turn follows by
  definition of \openEqType{\Gamma'}{A}{A}.

  Part 2): induction on the length of $\Gamma$. If $\Gamma = \cdot$, then the
  result is immediate. Otherwise, $\Gamma = \Gamma',\isOf{a}{A}$. Suppose 
  \eqInst{\gamma_1,[a\mapsto V_1]}{\gamma_2,[a \mapsto
  V_2]}{\Gamma',\isOf{a}{A}} and \eqInst{\gamma_2,[a\mapsto V_2]}{\gamma_3,[a \mapsto
  V_3]}{\Gamma',\isOf{a}{A}}. We need to show \eqInst{\gamma_1,[a\mapsto V_1]}{\gamma_3,[a \mapsto
  V_3]}{\Gamma',\isOf{a}{A}}. By definition, it suffices to show
  \begin{enumerate}
    \item \eqVal{V_1}{V_3}{\hat\gamma_1 A}
    \item \eqInst{\gamma_1}{\gamma_3}{\Gamma'}
  \end{enumerate}
  By the assumptions, we have 
  \begin{enumerate}
    \item \eqCtx{\Gamma'}{\Gamma'}
    \item \openEqType{\Gamma'}{A}{A}
    \item \eqVal{V_1}{V_2}{\hat\gamma_1 A}
    \item \eqInst{\gamma_1}{\gamma_2}{\Gamma'}
    \item \eqVal{V_2}{V_3}{\hat\gamma_2 A}
    \item \eqInst{\gamma_2}{\gamma_3}{\Gamma'}
  \end{enumerate}
  Hence we have \eqType{\hat\gamma_1 A}{\hat\gamma_2 A}. By
  Lemma~\ref{lemma:resp_eq}, this implies \eqVal{V_2}{V_3}{\hat\gamma_1 A}, and
  by Lemma~\ref{lemma:per} we get \eqVal{V_1}{V_3}{\hat\gamma_1 A}. Since the latter obligation follows by
  induction, the result holds.
\end{proof}

\begin{lemma}[PER of open judgments]\label{lemma:per_open}
  Given \eqCtx{\Gamma}{\Gamma}, 
  \begin{enumerate}
    \item \openEqType{\Gamma}{\_}{\_} is a PER.
    \item If \openEqType{\Gamma}{A}{A}, then \openEqComp{\Gamma}{\_}{\_}{A} is a PER.
    \item If \openEqType{\Gamma}{A}{A} and \openEqComp{\Gamma}{P}{P}{\nat}, then \openEqCComp{\Gamma}{\_}{\_}{A}{P} is a PER.
  \end{enumerate}
\end{lemma}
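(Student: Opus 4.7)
The plan is to reduce each claim to the closed-case PER property (Lemma~\ref{lemma:per}) by threading substitutions through Lemma~\ref{lemma:per_context} and re-indexing types and costs via Lemma~\ref{lemma:resp_eq} and Lemma~\ref{lemma:headexp}. The three parts share a single skeleton; they differ only in which closed judgment is invoked at the end. I would write out Part~1 in full and then handle Parts~2 and~3 by the same template, flagging the two places where the presence of a type index or cost index forces additional bookkeeping.

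For symmetry, suppose the open relation holds on $M,M'$ and fix \eqInst{\gamma}{\gamma'}{\Gamma}. Using Lemma~\ref{lemma:per_context}(1) I get \eqInst{\gamma'}{\gamma}{\Gamma}, so instantiating the hypothesis at this swapped pair yields the closed judgment on $\hat{\gamma'}M$ and $\hat{\gamma}M'$ at type $\hat{\gamma'}A$. Closed symmetry (Lemma~\ref{lemma:per}) swaps the pair. The subtle step is the type: I need the conclusion at $\hat{\gamma}A$, not $\hat{\gamma'}A$. This is where the side hypothesis \openEqType{\Gamma}{A}{A} is used: applied to $(\gamma',\gamma)$ it gives \eqType{\hat{\gamma'}A}{\hat{\gamma}A}, and Lemma~\ref{lemma:resp_eq} re-indexes the judgment. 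For transitivity, suppose the relation holds on $M,M'$ and on $M',M''$ and fix \eqInst{\gamma_1}{\gamma_3}{\Gamma}. From Lemma~\ref{lemma:per_context} (first symmetry, then transitivity) I obtain \eqInst{\gamma_1}{\gamma_1}{\Gamma}. Instantiating the first hypothesis at $(\gamma_1,\gamma_1)$ and the second at $(\gamma_1,\gamma_3)$ gives two closed judgments at the common type $\hat{\gamma_1}A$, which glue by closed transitivity without any further re-indexing.

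For Part~3 the same template works, but one extra ingredient is needed. The cost-annotated closed judgment is sensitive not only to the type index but also to the cost expression: in the symmetry step I end up with a judgment whose cost is $\hat{\gamma'}P$, whereas I need $\hat{\gamma}P$. The assumption \openEqComp{\Gamma}{P}{P}{\nat} applied to $(\gamma,\gamma')$ gives \eqComp{\hat{\gamma}P}{\hat{\gamma'}P}{\nat}, which unfolds to \sameComp{}{\hat{\gamma}P}{\hat{\gamma'}P}{\omega}, and Lemma~\ref{lemma:headexp}(5) then lets me replace $\hat{\gamma'}P$ by $\hat{\gamma}P$. Transitivity for Part~3 follows exactly as in Part~2, since both pieces live at the single cost $\hat{\gamma_1}P$ and the single type $\hat{\gamma_1}A$.

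The main obstacle is this type- and cost-indexing drift when swapping instances: because the closed judgment \eqComp{\cdot}{\cdot}{A} and its cost-annotated refinement name specific $A$ and $P$, the naive swap produces a judgment at the wrong type and wrong cost. The fix is routine once the bookkeeping is laid out, but it is the reason why each clause of the lemma carries its respective presupposition (\openEqType{\Gamma}{A}{A} for Part~2, and additionally \openEqComp{\Gamma}{P}{P}{\nat} for Part~3): without these, neither Lemma~\ref{lemma:resp_eq} nor Lemma~\ref{lemma:headexp}(5) could be invoked to realign the indices after the instance swap. Everything else is mechanical application of the corresponding closed-PER clause.
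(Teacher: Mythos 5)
Your proposal is correct and follows essentially the same route as the paper: reduce to the closed PER (Lemma~\ref{lemma:per}) by swapping or reflexivizing instances via Lemma~\ref{lemma:per_context}, then realign the type and cost indices with Lemma~\ref{lemma:resp_eq} and Lemma~\ref{lemma:headexp}(5). The paper only writes out Part~1 and dismisses Parts~2--3 with ``follow similarly,'' so your explicit treatment of the index drift is a faithful (and welcome) elaboration of the intended argument rather than a departure from it.
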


\begin{proof}
  \ 
  For part 1), suppose \openEqType{\Gamma}{A_1}{A_2}. We need to show \openEqType{\Gamma}{A_2}{A_1}. 
  Let \eqInst{\gamma}{\gamma'}{\Gamma}. It suffices to show \eqType{\hat\gamma A_2}{\hat\gamma' A_1}. 
  By Lemma~\ref{lemma:per_context}, \eqInst{\gamma'}{\gamma}{\Gamma}, so by assumption, 
  \eqType{\hat\gamma' A_1}{\hat\gamma A_2}. Hence the result follows by Lemma~\ref{lemma:per}.

  Now suppose \openEqType{\Gamma}{A_1}{A_2} and \openEqType{\Gamma}{A_2}{A_3}. We need to show 
  \openEqType{\Gamma}{A_1}{A_3}.  
  Let \eqInst{\gamma}{\gamma'}{\Gamma}. It suffices to show \eqType{\hat\gamma A_1}{\hat\gamma' A_3}. 
  By Lemma~\ref{lemma:per_context}, \eqInst{\gamma}{\gamma}{\Gamma}, so by assumption, 
  \eqType{\hat\gamma A_1}{\hat\gamma A_2} and \eqType{\hat\gamma A_2}{\hat\gamma' A_3}. Hence the 
  result follows by Lemma~\ref{lemma:per}. Part 2) - 3) follow similarly.
\end{proof}

Head expansion can be extended to open judgments. Write \openStep{\Gamma}{E}{E'} if for all \eqInst{\gamma}{\gamma}{\Gamma},
\stepIn{\hat\gamma E}{*}{\hat\gamma E'}.

\begin{lemma}[Open head expansion]\label{lemma:open_headexp}
  Given \eqCtx{\Gamma}{\Gamma}, 
  \begin{enumerate}
    \item If \openEqType{\Gamma}{A}{A'} and \openStep{\Gamma}{A''}{A}, then \openEqType{\Gamma}{A''}{A'}.
    \item If \openEqComp{\Gamma}{M}{M'}{A} and \openStep{\Gamma}{M''}{M}, then \openEqComp{\Gamma}{M''}{M'}{A}.
  \end{enumerate}
\end{lemma}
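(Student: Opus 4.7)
The plan is to reduce both parts to the corresponding clauses of the closed head expansion lemma (Lemma~\ref{lemma:headexp}) by unfolding the definition of the open judgments and reasoning pointwise at an arbitrary pair of equal substitution instances \eqInst{\gamma}{\gamma'}{\Gamma}. The only two wrinkles are (i) the hypothesis \openStep{\Gamma}{E}{E'} is indexed by \emph{diagonal} instances \eqInst{\gamma}{\gamma}{\Gamma}, so from \eqInst{\gamma}{\gamma'}{\Gamma} we must first extract \eqInst{\gamma}{\gamma}{\Gamma}; and (ii) \openStep contracts via \stepIn{\_}{*}{\_} (multi-step), whereas the closed head expansion lemma only handles a single step.

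For part 1, I take \eqInst{\gamma}{\gamma'}{\Gamma} and aim to show \eqType{\hat\gamma A''}{\hat\gamma' A'}. From \eqCtx{\Gamma}{\Gamma} and Lemma~\ref{lemma:per_context} I obtain \eqInst{\gamma}{\gamma}{\Gamma} (by symmetry followed by transitivity), which lets me instantiate the hypothesis \openStep{\Gamma}{A''}{A} to a concrete chain \stepIn{\hat\gamma A''}{k}{\hat\gamma A} in \pl{}. I also instantiate \openEqType{\Gamma}{A}{A'} at the original pair to obtain \eqType{\hat\gamma A}{\hat\gamma' A'}. I then proceed by induction on $k$: the base case ($k=0$) gives \hat\gamma A'' = \hat\gamma A and the claim follows immediately; in the inductive step \step{\hat\gamma A''}{A_1} and \stepIn{A_1}{k-1}{\hat\gamma A}, the inductive hypothesis gives \eqType{A_1}{\hat\gamma' A'}, and one application of Lemma~\ref{lemma:headexp}(1) combined with transitivity from Lemma~\ref{lemma:per} closes the goal.

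Part 2 is structurally identical. I again instantiate at \eqInst{\gamma}{\gamma'}{\Gamma}, use Lemma~\ref{lemma:per_context} to derive \eqInst{\gamma}{\gamma}{\Gamma} and therefore \stepIn{\hat\gamma M''}{k}{\hat\gamma M}, and instantiate the open equality hypothesis to get \eqComp{\hat\gamma M}{\hat\gamma' M'}{\hat\gamma A}. Induction on $k$, using Lemma~\ref{lemma:headexp}(2) at each step and the transitivity half of Lemma~\ref{lemma:per}, yields \eqComp{\hat\gamma M''}{\hat\gamma' M'}{\hat\gamma A}.

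The only nontrivial step is the bookkeeping at the start: extracting a diagonal instance from an off-diagonal one so that the open-step hypothesis becomes usable. Everything after that is a routine induction whose inductive step is exactly the closed-form head expansion plus PER transitivity. I expect no complication beyond carefully tracking that the type index \hat\gamma A in part 2 remains fixed throughout the inner induction (since the reduction happens entirely on the term side).
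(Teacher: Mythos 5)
Your proposal is correct and follows essentially the same route as the paper's proof: instantiate at an arbitrary pair of equal substitution instances, use reflexivity of the context (Lemma~\ref{lemma:per_context}) to obtain the diagonal instance needed to fire the open-step hypothesis, and conclude by closed head expansion (Lemma~\ref{lemma:headexp}). The only difference is that you spell out the induction on the length of the reduction sequence (together with the transitivity step from Lemma~\ref{lemma:per} in part 1), which the paper leaves implicit when it applies the single-step Lemma~\ref{lemma:headexp} to a multi-step reduction.
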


\begin{proof}
  \ 
  \begin{enumerate}
    \item Let \eqInst{\gamma}{\gamma'}{\Gamma}. We need to show \eqType{\hat\gamma A''}{\hat\gamma' A'}. 
    By assumption, we know \eqType{\hat\gamma A}{\hat\gamma' A'}. Since \eqCtx{\Gamma}{\Gamma}, 
    \eqInst{\gamma}{\gamma}{\Gamma}, so we have \stepIn{\hat\gamma A''}{*}{\hat\gamma A}, 
    so the result follows from Lemma~\ref{lemma:headexp}.
    \item Let \eqInst{\gamma}{\gamma'}{\Gamma}. We need to show \eqComp{\hat\gamma M''}{\hat\gamma' M'}{\hat\gamma A}. 
    By assumption, we know \eqComp{\hat\gamma M}{\hat\gamma' M'}{\hat\gamma A}.
    Since \eqCtx{\Gamma}{\Gamma}, \eqInst{\gamma}{\gamma}{\Gamma}, so we have 
    \stepIn{\hat\gamma M''}{*}{\hat\gamma M}, 
    so the result follows from Lemma~\ref{lemma:headexp}.
  \end{enumerate}
\end{proof}

Open judgments admit the structural rules of identity, weakening, and substitution: 

\begin{lemma}[Hypothesis]\label{lemma:hypothesis}
  \openCComp{\Gamma,\isOf{a}{A}}{a}{A}{\zero}.
\end{lemma}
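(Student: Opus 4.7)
The plan is to unfold the definition of the open cost judgment down to the closed judgment on the substituted variable, then observe that each substitution assigns $a$ to a value that trivially satisfies a zero-cost bound. There is no real obstacle here; the lemma is essentially a sanity check that variables, which under call-by-value can only be bound to values, contribute zero steps to evaluation.

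First, I fix an equal instance \eqInst{\gamma_1,[a \mapsto V_1]}{\gamma_2,[a \mapsto V_2]}{\Gamma,\isOf{a}{A}}. By the definition of equal instances, this unpacks to \eqInst{\gamma_1}{\gamma_2}{\Gamma} together with \eqVal{V_1}{V_2}{\hat\gamma_1 A}. Since $a$ is fresh in both $A$ and $\zero$, applying the extended substitutions to them has the same effect as applying $\hat\gamma_1$ and $\hat\gamma_2$; moreover $\hat\gamma_1 a = V_1$, $\hat\gamma_2 a = V_2$, and $\hat\gamma_1 \zero = \zero$. Thus it suffices to establish the closed statement \eqCComp{V_1}{V_2}{\hat\gamma_1 A}{\zero}.

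Next, I unfold \eqCComp to its semantic content. I must exhibit $\alpha$ with \sameType{\hat\gamma_1 A}{\hat\gamma_1 A}{\alpha}{\tau_{\omega}} and \sameCComp{}{V_1}{V_2}{\alpha}{\zero}. The type component is immediate from \eqVal{V_1}{V_2}{\hat\gamma_1 A}, which by definition provides such an $\alpha$ satisfying $\alpha(V_1,V_2)$. For the cost clause, I check each conjunct of the definition of \sameCComp{}{\cdot}{\cdot}{\alpha}{\zero}: we have \sameComp{}{\zero}{\zero}{\omega} since \eval{\zero}{\zero} and $\omega(\zero,\zero)$; since $V_1$ and $V_2$ are values by \eqVal{V_1}{V_2}{\hat\gamma_1 A}, we get \evalCost{V_1}{0}{V_1} and \evalCost{V_2}{0}{V_2}; $\alpha(V_1,V_2)$ is already in hand; and \eval{\zero}{\bar{0}} with $0 \ge \max(0,0)$. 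All conditions are met, so the lemma follows.
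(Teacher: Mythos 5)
Your proof is correct and follows the same route as the paper's: fix an equal instance of $\Gamma,\isOf{a}{A}$, reduce to the closed judgment \eqCComp{V_1}{V_2}{\hat\gamma_1 A}{\zero}, and observe that this holds because the instance definition supplies \eqVal{V_1}{V_2}{\hat\gamma_1 A} and values evaluate in zero steps. The paper compresses the final verification into ``holds by definition,'' whereas you spell out each conjunct of the cost clause; the content is the same.
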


\begin{proof}
  Let \eqInst{\gamma,[a\mapsto V]}{\gamma',[a \mapsto V']}{\Gamma,\isOf{a}{A}}. We need to show 
  \eqCComp{V}{V'}{\hat\gamma A}{\zero}, which holds by definition of \eqInst{\gamma,[a\mapsto V]}{\gamma',[a \mapsto V']}{\Gamma,\isOf{a}{A}}. 
\end{proof}

\begin{lemma}[Weakening]\label{lemma:weaken}
  \ 
  \begin{enumerate}
    \item If \openEqType{\Gamma}{B}{B'}, then \openEqType{\Gamma,\isOf{a}{A}}{B}{B'}.
    \item If \openEqComp{\Gamma}{M}{M'}{B}, then \openEqComp{\Gamma,\isOf{a}{A}}{M}{M'}{B}. 
    \item If \openEqCComp{\Gamma}{M}{M'}{B}{P}, then \openEqCComp{\Gamma,\isOf{a}{A}}{M}{M'}{B}{P}. 
  \end{enumerate} 
\end{lemma}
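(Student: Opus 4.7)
The plan is to unfold the definition of the open judgments on the extended context and observe that adding a fresh hypothesis does not affect substitution into expressions that do not mention the new variable. Throughout, I implicitly assume $a$ is fresh for $B$, $B'$, $M$, $M'$, and $P$, which can always be arranged by alpha-conversion since these expressions originate from judgments over $\Gamma$ only.

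For each part, fix an equal instance of the extended context: \eqInst{\gamma_1,[a \mapsto V]}{\gamma_2,[a \mapsto V']}{\Gamma,\isOf{a}{A}}. By the definition of equal instances on an extended telescope, this immediately entails \eqInst{\gamma_1}{\gamma_2}{\Gamma} (together with \eqVal{V}{V'}{\hat\gamma_1 A}, which we do not need). Invoking the corresponding hypothesis at this substitution yields the requisite closed judgment for $\hat\gamma_1$ and $\hat\gamma_2$ applied to $B$, $B'$, $M$, $M'$, and $P$. Since $a$ does not occur free in any of these expressions, extending the substitution by $[a \mapsto V]$ (respectively $[a \mapsto V']$) leaves the result pointwise unchanged, so we obtain exactly the required closed judgment under the extended substitution.

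All three parts share this structure; they differ only in which closed judgment, type equality, membership, or cost-bounded membership, is invoked at the end. No induction is needed, and there is no substantive obstacle: the only fiddly point is the implicit freshness convention on $a$, which is entirely standard in name-carrying syntax and is typically left implicit in presentations of this kind. If one wanted to be pedantic, the argument would be packaged as a small auxiliary lemma stating that $\widehat{\gamma,[a \mapsto V]}\,E = \hat\gamma\,E$ whenever $a \notin \mathrm{FV}(E)$, proved by a straightforward induction on $E$.
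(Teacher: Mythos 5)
Your proposal is correct and matches the paper's proof: both restrict the extended equal instance to an equal instance of $\Gamma$, apply the hypothesis, and observe that the extra binding for the fresh variable $a$ does not affect the substitution. The only difference is that you make explicit the freshness/substitution-invariance point that the paper leaves implicit.
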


\begin{proof}
  For part 1, let \eqInst{\gamma,[a \mapsto V]}{\gamma', [a \mapsto V']}{\Gamma,\isOf{a}{A}}. We need to show 
  \eqType{\hat\gamma B}{\hat\gamma B'}. By definition, \eqInst{\gamma}{\gamma'}{\Gamma}, so the result follows 
  from assumption. Part 2 and 3 follow similarly. 
\end{proof}

For sequencing operations occurring in types and cost specifications, we write
$\lsub{E_1}{a}{E_2}$ for \letcst{E_1}{a}{E_2}.
\begin{remark}
  ``leftist'' substitution is required because untyped principles
  such as direct computation~\cite{Howe-Equalityinlazycompu} does not hold in \cctt{}.
  In contrast to \nuprl{}, types can internalize
  cost-aware judgments, and so in general membership is not invariant under 
  untyped rewrite rules.
\end{remark}

\begin{lemma}[Substitution]\label{lemma:seq}
  \ 
  \begin{enumerate}
    \item If \openEqComp{\Gamma}{M}{M'}{A} and \openEqType{\Gamma,\isOf{a}{A}}{B}{B'}, then \openEqType{\Gamma}{\lsub{M}{a}{B}}{\lsub{M'}{a}{B'}}.
    \item If \openEqComp{\Gamma}{M}{M'}{A} and \openEqComp{\Gamma,\isOf{a}{A}}{N}{N'}{B}, then \openEqComp{\Gamma}{\letcst{M}{a}{N}}{\letcst{M'}{a}{N'}}{\lsub{M}{a}{B}}.
    \item If \openEqCComp{\Gamma}{M}{M'}{A}{P} and \openEqCComp{\Gamma,\isOf{a}{A}}{N}{N'}{B}{Q}, then \openEqCComp{\Gamma}{\letcst{M}{a}{N}}{\letcst{M'}{a}{N'}}{\lsub{M}{a}{B}}{P \hat{+} \suc{\lsub{M}{a}{Q}}}.
  \end{enumerate}
\end{lemma}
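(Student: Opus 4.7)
All three parts unfold by the same recipe: given an equal instance $\gamma \sim \gamma' \in \Gamma$, extract values for the $M/M'$ judgment, extend the instance by those values to fire the hypothesis on $B/N$, and close out with head expansion (Lemma~\ref{lemma:headexp}) to absorb the reduction steps of the let-binding and of leftist substitution.

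Concretely, for each part fix \eqInst{\gamma}{\gamma'}{\Gamma}. From \openEqComp{\Gamma}{M}{M'}{A} (resp.\ \openEqCComp{\Gamma}{M}{M'}{A}{P} in part 3) we obtain \evalCost{\hat\gamma M}{c}{V} and \evalCost{\hat\gamma' M'}{c'}{V'} with $\alpha(V,V')$, where $\alpha$ is the PER interpreting $\hat\gamma A$; in part 3 we additionally have \eval{\hat\gamma P}{\bar p} with $p\ge\max(c,c')$. By definition this yields \eqInst{\gamma,[a\mapsto V]}{\gamma',[a\mapsto V']}{\Gamma,\isOf{a}{A}}, so the hypothesis on $B/B'$ or $N/N'$ applies to give the corresponding judgment about $[\hat\gamma,V/a]B$ versus $[\hat\gamma',V'/a]B'$ (resp.\ $N,N'$).

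Now the left-hand side of each conclusion involves $\letcst{\hat\gamma M}{a}{\hat\gamma B}$ (part 1) or $\letcst{\hat\gamma M}{a}{\hat\gamma N}$ (parts 2 and 3), whose reduction traces are $c+1$ steps to reach $[\hat\gamma,V/a]B$ (part 1) or $c+1$ then $c_N$ more to reach the value of $N$ (parts 2, 3), and symmetrically on the right. In parts 1 and 2 we just invoke Lemma~\ref{lemma:headexp}(1) and (2) for the multiple-step head expansion (iterating single-step head expansion). For part 3, the cost specification $\hat\gamma(P\hat+\suc{\lsub{M}{a}{Q}})$ evaluates by first reducing $\hat\gamma P$ to $\bar p$, then reducing the inner let to $[\hat\gamma,V/a]Q$ and onward to $\bar q$ (using that by hypothesis \eval{[\hat\gamma,V/a]Q}{\bar q} with $q\ge\max(c_N,c_N')$), finally arithmetizing to $\overline{p+q+1}$. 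The required bound $p+q+1\ge\max(c+1+c_N,\,c'+1+c_N')$ is immediate from $p\ge\max(c,c')$ and $q\ge\max(c_N,c_N')$; we then appeal to Lemma~\ref{lemma:headexp}(4) to absorb the $c{+}1$ head-reduction steps of the outer let on each side into the cost expression.

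The main obstacle is simply the cost bookkeeping in part 3: making sure that the syntactic choice $P\hat+\suc{\lsub{M}{a}{Q}}$ truly evaluates to a numeral that dominates the total step count $c+1+c_N$ on both sides, and that the intermediate evaluation of the inner let inside the cost specification is itself justified by the assumption \openEqCComp{\Gamma,\isOf{a}{A}}{N}{N'}{B}{Q} forcing $Q$ to land in $\omega$ at $[\hat\gamma,V/a]$. Everything else is routine head expansion combined with Lemma~\ref{lemma:resp_eq} to replace the type $[\hat\gamma,V/a]B$ by the definitionally-equal evaluant of $\hat\gamma\lsub{M}{a}{B}$.
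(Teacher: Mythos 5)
Your proposal is correct and follows essentially the same route as the paper's proof: fix an equal instance, evaluate $M,M'$ to values to extend the instance, fire the hypothesis on $B/N$, and discharge the let-reductions and the passage from $[V/a]B$, $[V/a]Q$ to $\lsub{M}{a}{B}$, $\lsub{M}{a}{Q}$ via head expansion/replacement (Lemma~\ref{lemma:headexp}) together with Lemma~\ref{lemma:resp_eq}. The cost bookkeeping you flag in part 3 is exactly the point the paper's proof also handles (somewhat tersely) with Lemma~\ref{lemma:headexp}(4) and (5).
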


\begin{proof}
  \ 
  \begin{enumerate}
    \item Let \eqInst{\gamma}{\gamma'}{\Gamma}. We need to show \eqType{\hat\gamma \lsub{M}{a}{B}}{\hat\gamma' \lsub{M'}{a}{B'}}.
    By assumption, we have \eqComp{\hat\gamma M}{\hat\gamma' M'}{\hat\gamma A}. This means that 
    \begin{enumerate}
      \item \eval{\hat\gamma M}{V}
      \item \eval{\hat\gamma' M'}{V'} 
    \end{enumerate}
    and \eqVal{V}{V'}{\hat\gamma A}. By definition, \eqInst{\gamma, [a \mapsto V]}{\gamma', [a \mapsto V']}{\Gamma,\isOf{a}{A}}, 
    so by assumption, \eqType{\hat\gamma [V/a]B}{\hat\gamma [V'/a]B'}, and the result follows from Lemma~\ref{lemma:headexp}.
    \item Let \eqInst{\gamma}{\gamma'}{\Gamma}. We need to show \eqComp{\hat\gamma\letcst{M}{a}{N}}{\hat\gamma'\letcst{M'}{a}{N'}}{\hat\gamma\lsub{M}{a}{B}}.
    By assumption, we have \eqComp{\hat\gamma M}{\hat\gamma' M'}{\hat\gamma A}. This means that 
    \begin{enumerate}
      \item \eval{\hat\gamma M}{V}
      \item \eval{\hat\gamma' M'}{V'} 
    \end{enumerate}
    and \eqVal{V}{V'}{\hat\gamma A}. By definition, \eqInst{\gamma, [a \mapsto V]}{\gamma', [a \mapsto V']}{\Gamma,\isOf{a}{A}},
    so by assumption, \eqComp{\hat\gamma[V/a]N}{\hat\gamma'[V'/a]N'}{\hat\gamma[V/a]B}, and the result follows from Lemma~\ref{lemma:headexp}.
    \item Let \eqInst{\gamma}{\gamma'}{\Gamma}. 
    We need to show \eqCComp{\hat\gamma\letcst{M}{a}{N}}{\hat\gamma'\letcst{M'}{a}{N'}}{\hat\gamma\lsub{M}{a}{B}}{\hat\gamma P \hat{+} \suc{\lsub{M}{a}{Q}}}.
    By assumption, we have \eqCComp{\hat\gamma M}{\hat\gamma' M'}{\hat\gamma A}{\hat\gamma P}. This means that 
    \begin{enumerate}
      \item \evalCost{\hat\gamma M}{c}{V}
      \item \evalCost{\hat\gamma' M'}{c'}{V'} 
      \item \eval{\hat\gamma P}{\bar{p}}
      \item $p \ge \max(c,c')$
    \end{enumerate}
    Hence \eqVal{V}{V'}{\hat\gamma A}. By definition, \eqInst{\gamma, [a \mapsto V]}{\gamma', [a \mapsto V']}{\Gamma,\isOf{a}{A}}, 
    so by assumption, \eqCComp{\hat\gamma [V/a]N}{\hat\gamma' [V'/a]N'}{\hat\gamma [V/a] B}{\hat\gamma [V/a]Q}.
    By Lemma~\ref{lemma:headexp}, we have 
    \eqCComp{\hat\gamma [V/a]N}{\hat\gamma' [V'/a]N'}{\hat\gamma \lsub{M}{a}{B}}{\hat\gamma \lsub{M}{a}{Q}}.
    By Lemma~\ref{lemma:headexp}, we have 
    \eqCComp{\hat\gamma \letcst{M}{a}{N}}{\hat\gamma' \letcst{M'}{a}{N'}}{\hat\gamma \lsub{M}{a}{B}}{\hat\gamma P \hat{+} \suc{\hat\gamma \lsub{M}{a}{Q}}},
    as required.
  \end{enumerate}
\end{proof}

\noindent\textbf{Semantics of types} Since type systems satisfy unicity, it is sometimes useful to refer to the 
denotation of types, i.e. the unique PER designated as its membership relation. 

\begin{lemma}[Closed denotation]\label{lemma:closed_sem}
  If \isType{A}, then there is a unique $\alpha$ such that \sameType{A}{A}{\alpha}{\tau_{\omega}}. Write 
  this as $\alpha = \sem{A}$.
\end{lemma}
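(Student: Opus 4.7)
The plan is to unfold the definition of \isType{A} to get existence for free, then invoke unicity of the type system $\tau_{\omega}$ to conclude uniqueness. So the proof is essentially a one-line application of the type-system axioms established earlier.

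First I would observe that by the definition of \isType{A} (i.e., \eqType{A}{A}), there exists some $\alpha$ with \sameType{A}{A}{\alpha}{\tau_{\omega}}, which by the notation introduced in Section~\ref{sec:construction} means \eval{A}{A_0} for some value $A_0$ with $\tau_{\omega}(A_0, A_0, \alpha)$. This handles existence immediately.

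For uniqueness, suppose \sameType{A}{A}{\alpha}{\tau_{\omega}} and \sameType{A}{A}{\alpha'}{\tau_{\omega}}. Unfolding, we get \eval{A}{A_0} and \eval{A}{A_0'} together with $\tau_{\omega}(A_0, A_0, \alpha)$ and $\tau_{\omega}(A_0', A_0', \alpha')$. Since evaluation is deterministic (the step relation $\step{\_}{\_}$ from Figure~\ref{fig:opersem} is a function), we have $A_0 = A_0'$. Then by Theorem~\ref{lemma:tauomega_TS}, $\tau_{\omega}$ is a type system, so its unicity clause gives $\alpha = \alpha'$.

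There is no real obstacle here; the lemma is essentially a repackaging of the unicity axiom lifted along evaluation. The only mild subtlety is remembering to invoke determinism of the step relation to identify the two evaluation results before applying unicity, since unicity is stated for triples on values rather than for arbitrary expressions.
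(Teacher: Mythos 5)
Your proof is correct and follows essentially the same route as the paper, which simply cites unicity of $\tau_{\omega}$; you have merely spelled out the intermediate step of using determinism of the transition relation to identify the two evaluation results before applying the unicity clause. No issues.
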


\begin{proof}
  By unicity of $\tau_{\omega}$.
\end{proof}

\begin{lemma}[Open denotation]\label{lemma:open_sem}
  Given \sameType{A}{A'}{\alpha}{\tau_{i}} and \openEqComp{\isOf{a}{A}}{B}{B'}{\univ{i}}, there is a $\beta$ such that
  \sameTypes{\isOf{a}{\alpha}}{B}{B'}{\beta}{\tau_{i}}. Write this as $\beta = \sem{\isOf{a}{\alpha}.(B,B')}$.
\end{lemma}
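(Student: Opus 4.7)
The plan is to define $\beta$ pointwise on the index set $\{(V,V') \mid \alpha(V,V')\}$, using the fact that membership in $\univ{i}$ inside $\tau_\omega$ records precisely the data of a $\tau_i$-typing. Fix an arbitrary pair $(V,V')$ with $\alpha(V,V')$; I will produce the required PER $\beta_{V,V'}$ satisfying $\sameType{[V/a]B}{[V'/a]B'}{\beta_{V,V'}}{\tau_i}$, and take $\beta$ to be the resulting family.

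First I would manufacture an equal instance at context $\isOf{a}{A}$ from the pair $(V,V')$. By cumulativity (Lemma~\ref{lemma:cumulativity}), the hypothesis $\sameType{A}{A'}{\alpha}{\tau_i}$ lifts to $\sameType{A}{A'}{\alpha}{\tau_\omega}$; the PER axioms on $\tau_\omega$ then give $\sameType{A}{A}{\alpha}{\tau_\omega}$. Together with $\alpha(V,V')$ and the fact that elements of $\alpha$ are already values, this yields $\eqVal{V}{V'}{A}$ and hence $\eqInst{[a \mapsto V]}{[a \mapsto V']}{\isOf{a}{A}}$. Applying the assumption $\openEqComp{\isOf{a}{A}}{B}{B'}{\univ{i}}$ to this instance produces $\eqComp{[V/a]B}{[V'/a]B'}{\univ{i}}$.

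The key step is unfolding universe membership. Because $\tau_\omega = \mu\tau.\,\text{Types}(v_\omega,\tau)$ and $\univ{i}$ is introduced only through the $\text{Univ}(v_\omega)$ clause, unicity of the type system $\tau_\omega$ forces the PER associated with $\univ{i}$ to coincide with the one prescribed by $v_\omega$, namely $\{(W,W') \mid \exists \alpha'.\, \tau_i(W,W',\alpha')\}$. Thus $[V/a]B$ and $[V'/a]B'$ evaluate to canonical types $W$, $W'$ for which there exists a PER $\alpha'$ with $\tau_i(W,W',\alpha')$; by unicity of $\tau_i$ (Theorem~\ref{lemma:hierachy}), this $\alpha'$ is unique, so I set $\beta_{V,V'} := \alpha'$ and read off $\sameType{[V/a]B}{[V'/a]B'}{\beta_{V,V'}}{\tau_i}$ directly from the definition.

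The main obstacle is bookkeeping across the two type-system layers: translating the $\tau_\omega$-judgments implicit in the open judgment form into an honest $\tau_i$-typing of $B$ and $B'$. Once one commits to reading $\univ{i}$-membership through its construction in $v_\omega$, the remaining work is routine, and unicity of $\tau_i$ ensures the indexed family $\beta$ is well-defined without any extraneous choice principle.
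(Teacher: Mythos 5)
Your proposal is correct and follows essentially the same route as the paper's proof: define $\beta$ pointwise by applying the open judgment $\openEqComp{\isOf{a}{A}}{B}{B'}{\univ{i}}$ to each pair $(V,V')$ with $\alpha(V,V')$ and reading the resulting $\univ{i}$-membership as a $\tau_i$-typing of the instances. You merely spell out two steps the paper leaves implicit --- constructing the equal instance via cumulativity and symmetry/transitivity, and invoking unicity of $\tau_i$ to make the family $\beta$ well-defined without a choice --- which is a faithful elaboration rather than a different argument.
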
 

\begin{proof}
  First, we need to construct $\beta$. Let $V,V'$ such that $\alpha(V,V')$. By assumption, we have 
  \eqComp{[V/a]B}{[V'/a]B'}{\univ{i}}. This means \sameType{[V/a]B}{[V'/a]B'}{\phi}{\tau_{i}} for some $\phi$.
  Let $\beta = (V,V') \mapsto \phi$. We need to show \sameTypes{\isOf{a}{\alpha}}{B}{B'}{\beta}{\tau_{i}}. 
  Let $V,V'$ such that $\alpha(V,V')$. It suffices to show \sameType{[V/a]B}{[V'/a]B'}{\beta_{V,V'}}{\tau_{i}}, 
  which holds by definition of $\beta$.
\end{proof}

\begin{lemma}[Functionality]\label{lemma:func_per}
  If \sameTypes{\isOf{a}{\alpha}}{B}{B'}{\beta}{\tau_{\omega}}, then $\beta_{V,V} = \beta_{V,V'}$ for all $\alpha(V,V)$ and 
  $\alpha(V,V')$.
\end{lemma}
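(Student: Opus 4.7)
The plan is to reduce the claim to unicity of $\tau_\omega$ by showing that both $\beta_{V,V}$ and $\beta_{V,V'}$ serve as PERs witnessing that $[V/a]B$ is self-equal as a type. Since $\tau_\omega$ is a type system (Theorem~\ref{lemma:tauomega_TS}), and therefore enjoys symmetry, transitivity, and unicity, this immediately forces $\beta_{V,V} = \beta_{V,V'}$.

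Unpacking the hypothesis $\sameTypes{\isOf{a}{\alpha}}{B}{B'}{\beta}{\tau_{\omega}}$ at the two relevant instances, I obtain $\sameType{[V/a]B}{[V/a]B'}{\beta_{V,V}}{\tau_{\omega}}$ from $\alpha(V,V)$ and $\sameType{[V/a]B}{[V'/a]B'}{\beta_{V,V'}}{\tau_{\omega}}$ from $\alpha(V,V')$. Each of these evaluates to an inhabitant of $\tau_\omega$ on canonical forms. First I would apply symmetry of $\tau_\omega$ to the first instance to flip it into $\sameType{[V/a]B'}{[V/a]B}{\beta_{V,V}}{\tau_{\omega}}$, and then transitivity with the unflipped version to conclude $\sameType{[V/a]B}{[V/a]B}{\beta_{V,V}}{\tau_{\omega}}$. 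Repeating exactly the same symmetry-then-transitivity move on the second instance yields $\sameType{[V/a]B}{[V/a]B}{\beta_{V,V'}}{\tau_{\omega}}$.

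At this point I have two witnesses that $[V/a]B$ is equal to itself as a type, one with PER $\beta_{V,V}$ and the other with $\beta_{V,V'}$. Invoking unicity of $\tau_\omega$ finishes the proof: $\beta_{V,V} = \beta_{V,V'}$.

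I do not anticipate a genuine obstacle here; the only small care needed is to remember that transitivity in the type system demands a \emph{common} PER, which is why the symmetry step is applied against each instance separately (to produce a self-loop whose PER is fixed) rather than trying to chain the two instances directly. Once each instance is collapsed to a self-equality, unicity does all the work.
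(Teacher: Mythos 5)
Your proposal is correct and follows essentially the same route as the paper's own proof: instantiate the functionality hypothesis at $(V,V)$ and $(V,V')$, use symmetry followed by transitivity to collapse each instance to a self-equality $\sameType{[V/a]B}{[V/a]B}{\cdot}{\tau_{\omega}}$ carrying $\beta_{V,V}$ and $\beta_{V,V'}$ respectively, and then conclude by unicity. No gaps.
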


\begin{proof}
  Let $V,V'$ such that $\alpha(V,V)$ and $\alpha(V,V')$. By assumption, we have 
  \begin{enumerate}
    \item \sameType{[V/a]B}{[V/a]B'}{\beta_{V,V}}{\tau_{\omega}} \label{fact:1}
    \item \sameType{[V/a]B}{[V'/a]B'}{\beta_{V,V'}}{\tau_{\omega}} \label{fact:2}
  \end{enumerate}
  Apply symmetry and transitivity: 
  \begin{align}
    &\sameType{[V/a]B'}{[V/a]B}{\beta_{V,V}}{\tau_{\omega}} \nonumber\\
    &\sameType{[V/a]B}{[V/a]B}{\beta_{V,V}}{\tau_{\omega}}  \label{fact:5}\\
    &\sameType{[V'/a]B'}{[V/a]B}{\beta_{V,V'}}{\tau_{\omega}} \nonumber\\
    &\sameType{[V/a]B}{[V/a]B}{\beta_{V,V'}}{\tau_{\omega}} \label{fact:6}
  \end{align}
  Thus by unicity with~(\ref{fact:5}) and~(\ref{fact:6}) we have $\beta_{V,V} = \beta_{V,V'}$.
\end{proof}

\begin{lemma}[Open computation]\label{lemma:open_comp}
  Given \sameType{A}{A'}{\alpha}{\tau_{\omega}}, \openEqType{\isOf{a}{A}}{B}{B}, and 
  \openEqComp{\isOf{a}{A}}{N}{N'}{B}, there is a $\beta$ such that 
  \sameTypes{\isOf{a}{\alpha}}{B}{B'}{\beta}{\tau_{\omega}} and \sameComp{\isOf{a}{\alpha}}{N}{N'}{\beta}.
\end{lemma}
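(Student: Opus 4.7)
The plan is to mimic the construction in Lemma~\ref{lemma:open_sem}, reading the conclusion $\sameTypes{\isOf{a}{\alpha}}{B}{B'}{\beta}{\tau_{\omega}}$ as $\sameTypes{\isOf{a}{\alpha}}{B}{B}{\beta}{\tau_{\omega}}$ (since only one $B$ appears in the hypotheses). The key preliminary observation is that $\alpha(V,V')$ already implies $\eqVal{V}{V'}{A}$: applying symmetry and transitivity of $\tau_{\omega}$ to $\sameType{A}{A'}{\alpha}{\tau_{\omega}}$ yields $\sameType{A}{A}{\alpha}{\tau_{\omega}}$, and since values are self-evaluating, $\alpha(V,V')$ gives $\sameComp{}{V}{V'}{\alpha}$.

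Now I would construct $\beta$ pointwise. Given $V, V'$ with $\alpha(V, V')$, the observation above yields $\eqInst{[a\mapsto V]}{[a\mapsto V']}{\isOf{a}{A}}$. Applying the hypothesis $\openEqType{\isOf{a}{A}}{B}{B}$ to this instance produces $\eqType{[V/a]B}{[V'/a]B}$, i.e., some $\phi$ with $\sameType{[V/a]B}{[V'/a]B}{\phi}{\tau_{\omega}}$; by unicity of $\tau_{\omega}$ this $\phi$ is uniquely determined. Define $\beta_{V,V'}$ to be this $\phi$. Then $\sameTypes{\isOf{a}{\alpha}}{B}{B}{\beta}{\tau_{\omega}}$ holds by construction.

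For the complexity clause, let $\alpha(V, V')$ again. By the same instance lemma and the hypothesis $\openEqComp{\isOf{a}{A}}{N}{N'}{B}$, we obtain $\eqComp{[V/a]N}{[V'/a]N'}{[V/a]B}$, which unfolds to: there is some $\gamma$ with $\sameType{[V/a]B}{[V/a]B}{\gamma}{\tau_{\omega}}$ and $\sameComp{}{[V/a]N}{[V'/a]N'}{\gamma}$. Meanwhile, applying symmetry and transitivity to $\sameType{[V/a]B}{[V'/a]B}{\beta_{V,V'}}{\tau_{\omega}}$ yields $\sameType{[V/a]B}{[V/a]B}{\beta_{V,V'}}{\tau_{\omega}}$. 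By unicity of $\tau_{\omega}$, $\gamma = \beta_{V,V'}$, and thus $\sameComp{}{[V/a]N}{[V'/a]N'}{\beta_{V,V'}}$, as required.

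The only delicate point is bridging from the PER-level relation $\alpha(V, V')$ to the judgmental equality $\eqVal{V}{V'}{A}$, which is needed to use the open hypotheses; this is an instance of the general phenomenon that PER membership at values coincides with the membership judgment for the corresponding type, and it follows cleanly from coherence of $\tau_{\omega}$. The rest is bookkeeping — invoking unicity once to define $\beta_{V, V'}$ and a second time to identify it with the PER arising from the second hypothesis.
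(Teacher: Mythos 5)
Your proposal is correct and follows essentially the same route as the paper: construct $\beta$ pointwise exactly as in Lemma~\ref{lemma:open_sem}, then identify the PER arising from the hypothesis on $N,N'$ with $\beta_{V,V'}$ via unicity. The only cosmetic difference is that where the paper cites Lemma~\ref{lemma:func_per}, you inline its symmetry--transitivity--unicity argument, and you also spell out the step from $\alpha(V,V')$ to $\eqVal{V}{V'}{A}$ that the paper leaves implicit.
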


\begin{proof}
  The first part is the same as in Lemma~\ref{lemma:open_sem}. For the latter, let $V,V'$ such that $\alpha(V,V')$. 
  We need to show \sameComp{}{[V/a]N}{[V'/a]N'}{\beta_{V,V'}}. By assumption, \eqComp{[V/a]N}{[V'/a]N'}{[V/a]B}. 
  So it suffices to show $\sem{[V/a]B} = \beta_{V,V'}$, which follows from Lemma~\ref{lemma:func_per}.
\end{proof}
\section{Semantic Proof Theory}\label{sec:proof}
In the context of computational type theories, a proof theory could be viewed as a collection of 
generally useful lemmas about the semantic construction. As such, it is not the least set of derivable 
facts from the collection, and issues of admissibility of rules do not arise in this setting. 
This is a pragmatic approach, as it is often convenient to extend the proof theory as needed with new lemmas. 
Here we present a particular proof theory for \cctt{}. Notice that even though the lemmas are written in an inference 
rule format, they should be read as ``if the premises are true, then the conclusions are true'', i.e., not an inductive 
definition.

In Figure~\ref{fig:prooftheory}, we present a collection of lemmas that is useful for complexity analysis. They can be grouped 
roughly by the type constructor involved, resulting in the usual formation, introduction, and elimination rules. 
For brevity, we write the rules in the local form; however, all rules (with the
exception of Lemma~\ref{lemma:subsete},~\ref{lemma:reli}, and~\ref{lemma:rele}) can be interpreted to have an ambient context prepended
to the judgments. 

\begin{figure}
  \begin{mathpar}
    \inferrule{
    }{
      \eqVal{\univ{i}}{\univ{i}}{\univ{i+1}}
    }(\ref{lemma:univf})

    \inferrule{
      \eqVal{A}{A'}{\univ{i}}
    }{
      \eqType{A}{A'}
    }(\ref{lemma:unive})

    \inferrule{
      \eqComp{A}{A'}{\univ{i}}\\
      \eqComp{M}{M'}{A}\\
      \eqComp{N}{N'}{A}
    }{
      \eqVal{\eqty{A}{M}{N}}{\eqty{A'}{M'}{N'}}{\univ{i}}
    }(\ref{lemma:eqf})
    
    \inferrule{
      \eqComp{M}{N}{A}
    }{
      \eqVal{\triv}{\triv}{\eqty{A}{M}{N}}
    }(\ref{lemma:eqi})

    \inferrule{
      \eqComp{P}{P'}{\eqty{A}{M}{N}}
    }{
      \eqComp{M}{N}{A}
    }(\ref{lemma:eqe})

    \inferrule{
    }{
      \eqVal{\nat}{\nat}{\univ{i}}
    }(\ref{lemma:natf})

    \inferrule{
    }{
      \eqComp{\zero}{\zero}{\nat}
    }(\ref{lemma:nati})

    \inferrule{
      \eqComp{M}{M'}{\nat}
    }{
      \eqComp{\suc{M}}{\suc{M'}}{\nat}
    }(\ref{lemma:nati})

    \inferrule{
      f : \N \to \N\\
      \eqComp{M}{M'}{\nat}
    }{
      \eqComp{\cffone{f}{M}}{\cffone{f}{M'}}{\nat}
    }(\ref{lemma:ffe})

    \inferrule{
    \openComp{\isOf{a}{\nat}}{A}{\univ{i}}\\
    \eqComp{M}{M'}{\nat}\\
    \openEqComp{\isOf{p}{\eqty{\nat}{\zero}{M}}}{M_0}{M_0'}{[\zero/a]A}\\
    \openEqComp{\isOf{a}{\nat},\isOf{p}{\eqty{\nat}{\suc{a}}{M}}}{M_1}{M_1'}{[\suc{a}/a]A}
    }{
    \eqComp{\ifz{M}{M_0}{a}{M_1}}{\ifz{M'}{M_0'}{a}{M_1'}}{\lsub{M}{a}{A}}
    }(\ref{lemma:nate1})

    \inferrule{
    \openComp{\isOf{a}{\nat}}{A}{\univ{i}}\\
    \eqVal{V}{V'}{\nat}\\
    \eqComp{P_0}{P_0}{\nat}\\
    \openEqComp{\isOf{a}{\nat}}{P_1}{P_1}{\nat}\\
    \openEqComp{\isOf{p}{\eqty{\nat}{\zero}{M}}}{M_0}{M_0'}{[\zero/a]A}{P_0}\\
    \openEqCComp{\isOf{a}{\nat},\isOf{p}{\eqty{\nat}{\suc{a}}{V}}}{M_1}{M_1'}{[\suc{a}/a]A}{P_1}
    }{
    \eqCComp{\ifz{V}{M_0}{a}{M_1}}{\ifz{V'}{M_0'}{a}{M_1'}}{[V/a]A}{\ifz{V}{\suc{P_0}}{a}{\suc{P_1}}}
    }(\ref{lemma:nate2})

    \inferrule{
      \eqComp{A}{A'}{\univ{i}}\\
      \openEqComp{\isOf{a}{A}}{B}{B'}{\univ{i}}
    }{
      \eqVal{\subsetty{\isOf{a}{A}}{B}}{\subsetty{\isOf{a}{A'}}{B'}}{\univ{i}}
    }(\ref{lemma:subsetf})

    \inferrule{
      \isComp{A}{\univ{i}}\\
      \openComp{\isOf{a}{A}}{B}{\univ{i}}\\
      \eqCComp{M}{M'}{A}{P}\\
      \eqComp{N}{N'}{\lsub{M}{a}{B}}
    }{
      \eqCComp{M}{M'}{\subsetty{\isOf{a}{A}}{B}}{P}
    }(\ref{lemma:subseti})

    \inferrule{
      \eqVal{V}{V'}{\subsetty{\isOf{a}{A}}{B}}
    }{
      \eqVal{V}{V'}{A}\\
      \exists U,U'.\, \eqVal{U}{U'}{[V/a]B}
    }(\ref{lemma:subsete})

    \inferrule{
    r : \mathcal{P}(\N \times \N)\\
    \eqComp{M}{M'}{\nat}\\
    \eqComp{N}{N'}{\nat}
    }{
      \eqVal{\rel{r}{M}{N}}{\rel{r}{M'}{N'}}{\univ{i}}
    }(\ref{lemma:relf})

    \inferrule{
     \exists m,n.\, r(m,n) \land \eqComp{M}{\bar{m}}{\nat} \land \eqComp{N}{\bar{n}}{\nat} 
    }{
     \eqVal{\triv}{\triv}{\rel{r}{M}{N}}
    }(\ref{lemma:reli})

    \inferrule{
    \eqComp{P}{P'}{\rel{r}{M}{N}}
    }{
     \exists m,n.\, r(m,n) \land \eqComp{M}{\bar{m}}{\nat} \land \eqComp{N}{\bar{n}}{\nat}
    }(\ref{lemma:rele})

  \end{mathpar}
\caption{A proof theory for \cctt{}.}
\label{fig:prooftheory}
\end{figure}

\begin{figure}
\begin{mathpar}
\inferrule{
    \eqComp{A}{A'}{\univ{i}}\\
    \openEqComp{\isOf{a}{A}}{B}{B'}{\univ{i}}
    }{
      \eqVal{\sigmaty{\isOf{a}{A}}{B}}{\sigmaty{\isOf{a}{A'}}{B'}}{\univ{i}}
    }(\ref{lemma:sigmaf})

    \inferrule{
    \isComp{A}{\univ{i}}\\
    \openComp{\isOf{a}{A}}{B}{\univ{i}}\\
    \eqVal{V}{V'}{A}\\
    \eqVal{U}{U'}{[V/a]B}
    }{
     \eqVal{\pair{V}{U}}{\pair{V'}{U'}}{\sigmaty{\isOf{a}{A}}{B}}.
    }(\ref{lemma:sigmai})

    \inferrule{
    \eqVal{V}{V'}{\sigmaty{\isOf{a}{A}}{B}}
    }{
    \eqCComp{\fst{V}}{\fst{V'}}{A}{\bar{1}}\\
    \eqCComp{\snd{V}}{\snd{V'}}{\lsub{\fst{V}}{a}{B}}{\bar{1}}
    }(\ref{lemma:sigmae})

    \inferrule{
      \eqComp{A}{A'}{\univ{i}}\\
      \openEqComp{\isOf{a}{A}}{B}{B'}{\univ{i}}\\
      \openEqComp{\isOf{a}{A}}{P}{P'}{\nat}\\
    }{
      \eqVal{\arrtimev{\isOf{a}{A}}{B}{P}}{\arrtimev{\isOf{a}{A'}}{B'}{P'}}{\univ{i}}
    }(\ref{lemma:pitimef})

    \inferrule{
      \eqComp{A}{A'}{\univ{i}}\\
      \openEqComp{\isOf{a}{A}}{B}{B'}{\univ{i}}\\
      \openEqComp{\isOf{a}{A}}{P}{P'}{\nat}\\
      \openEqCComp{\isOf{a}{A},\isOf{f}{\arrtimev{\isOf{a}{\subsetty{\isOf{a'}{A}}{\rel{<}{[a'/a]P}{P}}}}{B}{P}}}{N}{N'}{B}{P}
    }{
      \eqVal{\fun{f}{a}{N}}{\fun{f}{a}{N'}}{\arrtimev{\isOf{a}{A}}{B}{P}}
    }(\ref{lemma:pitimei})
 
    \inferrule{
    \eqVal{\fun{a}{f}{N}}{\fun{a}{f}{N'}}{\arrtimev{\isOf{a}{A}}{B}{P}}\\
    \eqVal{V}{V'}{A}
    }{
      \eqCComp{\ap{\fun{a}{f}{N}}{V}}{\ap{\fun{a}{f}{N'}}{V'}}{[V/a]B}{\suc{[V/a]P}}.
    }(\ref{lemma:pitimee})
\end{mathpar}
\caption{A proof theory for \cctt{}.}
\end{figure}
\subsection{Universes}

The introduction rules for \univ{i} are the formation rules for the ``small
types'' (
Lemma~\ref{lemma:natf}, \ref{lemma:sigmaf}, \ref{lemma:subsetf},
\ref{lemma:relf}, \ref{lemma:pitimef}).
These introduction rules capture the by-name evaluation behavior of elements of
the universe; consequently, there is no cost-aware version of the typecase
mechanism. 

\begin{lemma}[Universe formation]\label{lemma:univf}
  \ \\
  \eqVal{\univ{i}}{\univ{i}}{\univ{i+1}} for all $i$.
\end{lemma}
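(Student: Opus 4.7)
The plan is to unfold the definition of $\eqVal{\univ{i}}{\univ{i}}{\univ{i+1}}$ and produce the witnesses required by the semantic construction in Figure~\ref{fig:construction}. Recall that $\eqVal{\univ{i}}{\univ{i}}{\univ{i+1}}$ requires $\final{\univ{i}}$ (immediate from Figure~\ref{fig:opersem}) together with $\eqComp{\univ{i}}{\univ{i}}{\univ{i+1}}$; the latter demands some $\alpha$ with $\sameType{\univ{i+1}}{\univ{i+1}}{\alpha}{\tau_\omega}$ and $\sameComp{}{\univ{i}}{\univ{i}}{\alpha}$. So the proof reduces to two existence obligations, each handled by reading off the definition of $v_\omega$ and $\text{Univ}$.

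First I would exhibit $\alpha$ for the outer universe. By definition, $v_\omega$ contains the triple $(\univ{i+1},\univ{i+1},\phi_{i+1})$ with $\phi_{i+1}(A,B) \iff \exists \alpha.\, \tau_{i+1}(A,B,\alpha)$. Since $\text{Univ}(v_\omega) \subseteq \text{Types}(v_\omega,\tau_\omega)$ and $\tau_\omega$ is the least fixed point of $\text{Types}(v_\omega,\_)$, we get $\tau_\omega(\univ{i+1},\univ{i+1},\phi_{i+1})$. Since $\univ{i+1}$ is itself a value, this gives $\sameType{\univ{i+1}}{\univ{i+1}}{\phi_{i+1}}{\tau_\omega}$.

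Next I would verify $\sameComp{}{\univ{i}}{\univ{i}}{\phi_{i+1}}$, i.e.\ $\phi_{i+1}(\univ{i},\univ{i})$. By the definition of $\phi_{i+1}$, this amounts to supplying some $\alpha'$ with $\tau_{i+1}(\univ{i},\univ{i},\alpha')$. The natural choice is $\alpha'=\phi_i$, where $\phi_i(A,B) \iff \exists \beta.\,\tau_i(A,B,\beta)$. Because $i<i+1$, the triple $(\univ{i},\univ{i},\phi_i)$ lies in $v_{i+1}$, and hence in $\text{Univ}(v_{i+1}) \subseteq \text{Types}(v_{i+1},\tau_{i+1})=\tau_{i+1}$ by one unfolding of the fixed point. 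This supplies the desired witness.

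There is no real obstacle; the argument is pure definition chasing. The only mild subtlety is keeping the two hierarchies $(v_n,\tau_n)$ and their $\omega$-limits straight, making sure that the universe $\univ{i+1}$ is witnessed at the $v_\omega$-level (since $\univ{i+1}$ may itself be a large universe), whereas the element $\univ{i}$ is witnessed one stage down via $v_{i+1}$ inside $\tau_{i+1}$. Once those indices are pinned down, the conclusion $\eqVal{\univ{i}}{\univ{i}}{\univ{i+1}}$ follows by assembling the two witnesses.
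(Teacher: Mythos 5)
Your proposal is correct and follows essentially the same route as the paper's proof: exhibit the membership PER of $\univ{i+1}$ by unfolding $v_\omega$ through $\text{Univ}(v_\omega)\subseteq\text{Types}(v_\omega,\tau_\omega)=\tau_\omega$, then witness $\phi_{i+1}(\univ{i},\univ{i})$ by placing $(\univ{i},\univ{i},\phi_i)$ in $v_{i+1}\subseteq\tau_{i+1}$. The index bookkeeping you flag is exactly the content of the paper's argument.
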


\begin{proof}
  First, we need to show there is a $\phi$ such that \sameType{\univ{i+1}}{\univ{i+1}}{\phi}{\tau_{\omega}}.
  By definition, $\tau_{\omega} = \text{Types}(v_{\omega},\tau_{\omega})$ and 
  $\text{Univ}(v_{\omega}) \subseteq \text{Types}(v_{\omega},\tau_{\omega})$, 
  so it suffices to show \sameType{\univ{i+1}}{\univ{i+1}}{\phi}{v_{\omega}} for some $\phi$. 
  By definition of $v_{\omega}$, take $\phi = \{(A,B) \mid \exists \alpha.\,
  \tau_{i+1}(A,B,\alpha)\}$.
  
  Next, we need to show $\phi(\univ{i},\univ{i})$. It suffices to show there is
  $\alpha$ such that $\tau_{i+1}(\univ{i},\univ{i},\alpha)$. Since $\tau_{i+1} =
  \text{Types}(v_{i+1},\tau_{i+1})$, it suffices to show
  \sameType{\univ{i}}{\univ{i}}{\alpha}{v_{i+1}}. By definition of $v_{i+1}$, it
  suffices to take $\alpha = \{(A,B) \mid \exists \beta.\,
  \tau_{i}(A,B,\beta)\}$. 
\end{proof}

The introduction rules for \univ{i} are the formation rules for the type
constructors. 

\begin{lemma}[Universe elimination]\label{lemma:unive}
 If
 \begin{enumerate}
   \item \eqComp{A}{A'}{\univ{i}}
 \end{enumerate}
 then \eqType{A}{A'}.
\end{lemma}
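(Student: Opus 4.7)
The plan is to unfold the definition of \eqComp{A}{A'}{\univ{i}} and trace the membership back through the definitions of $\tau_\omega$ and $v_\omega$ to extract a witnessing PER for $A$ and $A'$.

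First, I would expand \eqComp{A}{A'}{\univ{i}} to obtain some $\phi$ with \sameType{\univ{i}}{\univ{i}}{\phi}{\tau_\omega} and \sameComp{}{A}{A'}{\phi}. The next step is to pin down what $\phi$ must be. By Lemma~\ref{lemma:univf} (just proved), the PER $\psi = \{(X,Y) \mid \exists \alpha.\, \tau_i(X,Y,\alpha)\}$ witnesses \sameType{\univ{i}}{\univ{i}}{\psi}{\tau_\omega} via the clause $\text{Univ}(v_\omega) \subseteq \text{Types}(v_\omega,\tau_\omega) = \tau_\omega$. Since $\tau_\omega$ is a type system (Theorem~\ref{lemma:tauomega_TS}), unicity forces $\phi = \psi$.

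Now I unpack \sameComp{}{A}{A'}{\phi}: we get values $V,V'$ with \eval{A}{V}, \eval{A'}{V'}, and $\phi(V,V')$. By the explicit form of $\phi$, this means there exists some $\alpha$ such that $\tau_i(V,V',\alpha)$. Applying Lemma~\ref{lemma:cumulativity}(2), $\tau_i \subseteq \tau_\omega$, so $\tau_\omega(V,V',\alpha)$. Combined with the evaluations, this is exactly \sameType{A}{A'}{\alpha}{\tau_\omega}, which by definition gives \eqType{A}{A'}.

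I do not expect any serious obstacle here; the proof is essentially a bookkeeping exercise through the construction in Figure~\ref{fig:construction}. The only delicate point is the appeal to unicity to identify $\phi$ with the canonical witness $\psi$ from the universe clause, ensuring that the PER we read off from \sameComp{}{A}{A'}{\phi} really is ``$\exists \alpha.\, \tau_i(-,-,\alpha)$'' and not some other coincidental PER. Cumulativity then does the remaining work of lifting the witness from $\tau_i$ into $\tau_\omega$.
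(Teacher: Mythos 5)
Your proposal is correct and follows essentially the same route as the paper: unfold the hypothesis, identify $\sem{\univ{i}}$ with $\{(X,Y) \mid \exists \alpha.\, \tau_i(X,Y,\alpha)\}$, extract a witness in $\tau_i$, and lift it to $\tau_\omega$ via Lemma~\ref{lemma:cumulativity}. The only difference is presentational: you spell out the unicity argument that pins down the membership PER of $\univ{i}$, whereas the paper absorbs that step into the notation $\sem{\univ{i}}$ (justified by Lemma~\ref{lemma:closed_sem}).
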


\begin{proof}
  We need to show \sameType{A}{A'}{\alpha}{\tau_{\omega}} for some $\alpha$. By assumption, 
  \sameComp{}{A}{A'}{\sem{\univ{i}}}. By definition,
  \[
  \sem{\univ{i}} =  \{(A,B) \mid \exists \alpha.\, \tau_i(A,B,\alpha)\} 
  \]
  Hence \sameType{A}{A'}{\alpha}{\tau_i} for some $\alpha$, and so the result holds by Lemma~\ref{lemma:cumulativity}.
\end{proof}
\subsection{Equality}

\begin{lemma}[Equality formation]\label{lemma:eqf}
  If 
  \begin{enumerate}
    \item \eqComp{A}{A'}{\univ{i}}
    \item \eqComp{M}{M'}{A}
    \item \eqComp{N}{N'}{A}
  \end{enumerate}
  then \eqVal{\eqty{A}{M}{N}}{\eqty{A'}{M'}{N'}}{\univ{i}}
\end{lemma}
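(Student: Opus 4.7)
The plan is to unfold the conclusion and reduce to applying the Eq clause in the definition of $\text{Types}$. Since both \eqty{A}{M}{N} and \eqty{A'}{M'}{N'} are already values by the value relation, the goal reduces to producing some PER $\phi$ with $\tau_{\omega}(\eqty{A}{M}{N}, \eqty{A'}{M'}{N'}, \phi)$ and simultaneously witnessing $\phi(\triv,\triv)$ under the appropriate membership condition; more precisely, I need to show \sameType{\univ{i}}{\univ{i}}{\sem{\univ{i}}}{\tau_\omega} and $\sem{\univ{i}}(\eqty{A}{M}{N}, \eqty{A'}{M'}{N'})$. By unfolding $\sem{\univ{i}} = \{(X,Y) \mid \exists \alpha.\, \tau_i(X,Y,\alpha)\}$, it suffices to construct $\alpha$ such that $\tau_i(\eqty{A}{M}{N}, \eqty{A'}{M'}{N'}, \alpha)$ using the Eq clause.

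First, I would unpack assumption (1). From \eqComp{A}{A'}{\univ{i}} we have \sameComp{}{A}{A'}{\sem{\univ{i}}}, which gives values $A_0, A_0'$ with \eval{A}{A_0}, \eval{A'}{A_0'}, and $\tau_i(A_0,A_0',\alpha)$ for some $\alpha$. This directly supplies the type-equality component $\sameType{A}{A'}{\alpha}{\tau_i}$ required by the Eq clause. Next, I would unpack assumptions (2) and (3) to extract PERs $\alpha_M, \alpha_N$ with \sameType{A}{A}{\alpha_M}{\tau_\omega} and \sameType{A}{A}{\alpha_N}{\tau_\omega} together with \sameComp{}{M}{M'}{\alpha_M} and \sameComp{}{N}{N'}{\alpha_N}.

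The key step is a unicity argument to identify $\alpha$ with $\alpha_M$ and $\alpha_N$. By Lemma~\ref{lemma:cumulativity}, $\tau_i(A_0,A_0',\alpha)$ lifts to $\tau_\omega(A_0,A_0',\alpha)$. Applying symmetry and transitivity of the type system $\tau_\omega$, we derive $\tau_\omega(A_0,A_0,\alpha)$. Unicity of $\tau_\omega$ then forces $\alpha = \alpha_M = \alpha_N$. Consequently, the membership facts \sameComp{}{M}{M'}{\alpha} and \sameComp{}{N}{N'}{\alpha} hold, completing the three conditions of the Eq clause and yielding $\tau_i(\eqty{A}{M}{N}, \eqty{A'}{M'}{N'}, \phi)$ where $\phi = \{(\triv,\triv) \mid \sameComp{}{M}{N}{\alpha}\}$.

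The main obstacle is precisely this unicity/cumulativity shuffle: the assumption on $A$ in the universe lives in $\tau_i$, while the assumptions about $M$ and $N$ only give us PERs in the ambient $\tau_\omega$, and the Eq clause we want to invoke lives inside $\tau_i$. Once these PERs are identified via Lemma~\ref{lemma:cumulativity} and unicity, the remaining verification is purely definitional, and the value-hood of the two equality expressions follows immediately from the rules in Figure~\ref{fig:opersem}.
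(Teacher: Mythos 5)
Your proposal matches the paper's proof: both reduce the goal to the $\text{Eq}$ clause of $\text{Types}$ via the definition of $\sem{\univ{i}}$, obtain $\sameType{A}{A'}{\alpha}{\tau_i}$ from assumption (1), and then use Lemma~\ref{lemma:cumulativity} together with unicity of $\tau_{\omega}$ to identify the PERs from assumptions (2) and (3) with $\alpha$ before picking $\phi = \{(\triv,\triv) \mid \sameComp{}{M}{N}{\alpha}\}$. You spell out the symmetry/transitivity/unicity step more explicitly than the paper does, but the argument is the same.
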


\begin{proof}
  By definition, $\sem{\univ{i}} = \{(A,B) \mid \exists \phi.\,
  \tau_i(A,B,\phi)\}$, so 
  we need to show $\tau_i(\eqty{A}{M}{N}, \eqty{A'}{M'}{N'}, \phi)$ for some $\phi$.
  Since $\tau_{i} = \text{Types}(v_{i},\tau_{i})$, it suffices to show 
  $\text{Eq}(\tau_i)(\eqty{A}{M}{N}, \eqty{A'}{M'}{N'}, \phi)$. By assumption
  and definition of \sem{\univ{i}}, we have 
  \sameType{A}{A}{\alpha}{\tau_i} for some $\alpha$. Furthermore, by
  Lemma~\ref{lemma:cumulativity} and unicity of $\tau_{\omega}$, we have 
  \begin{enumerate}
    \item \sameComp{}{M}{M'}{\alpha}
    \item \sameComp{}{N}{N'}{\alpha}
  \end{enumerate}
  So the result holds by taking $\phi = \{(\triv, \triv) \mid \sameComp{}{M}{N}{\alpha} \}$.
\end{proof}

\begin{lemma}[Equality introduction]\label{lemma:eqi}
  If 
  \begin{enumerate}
    \item \eqComp{M}{N}{A}
  \end{enumerate}
  then \eqVal{\triv}{\triv}{\eqty{A}{M}{N}}.
\end{lemma}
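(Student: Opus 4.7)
The plan is to directly unpack the definitions and invoke the $\text{Eq}$ clause of $\text{Types}$. By assumption, there is some $\alpha$ such that \sameType{A}{A}{\alpha}{\tau_{\omega}} and \sameComp{}{M}{N}{\alpha}. First I would apply PER valuation (item 2 of the type system definition together with Lemma~\ref{lemma:per}) to the witness \sameComp{}{M}{N}{\alpha} to derive \sameComp{}{M}{M}{\alpha} and \sameComp{}{N}{N}{\alpha}. This yields all three conjuncts required by the $\text{Eq}(\tau_{\omega})$ clause in Figure~\ref{fig:construction}.

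Next I would define
\[
\phi \;=\; \{(\triv,\triv) \mid \sameComp{}{M}{N}{\alpha}\},
\]
and verify that $\text{Eq}(\tau_{\omega})(\eqty{A}{M}{N}, \eqty{A}{M}{N}, \phi)$ holds by assembling the three premises just established. Since $\tau_{\omega} = \text{Types}(v_{\omega}, \tau_{\omega})$ and $\text{Eq}(\tau_{\omega}) \subseteq \text{Types}(v_{\omega}, \tau_{\omega})$, this gives \sameType{\eqty{A}{M}{N}}{\eqty{A}{M}{N}}{\phi}{\tau_{\omega}}.

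It then remains to show \sameComp{}{\triv}{\triv}{\phi}. Because $\triv$ is a value (so \eval{\triv}{\triv}), and because the side condition \sameComp{}{M}{N}{\alpha} in the definition of $\phi$ is precisely the hypothesis we started with, we obtain $\phi(\triv,\triv)$ and thus \sameComp{}{\triv}{\triv}{\phi}. Combined with $\final{\triv}$, this yields \eqVal{\triv}{\triv}{\eqty{A}{M}{N}} as required.

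There is no real obstacle here; the proof is a direct unfolding of the $\text{Eq}$ clause, and the only subtlety worth flagging is that we do \emph{not} need $A$ to be in any specific universe \univ{i} (so Lemma~\ref{lemma:eqf} is not required) — having \eqComp{M}{N}{A} already furnishes the PER $\alpha = \sem{A}$ via Lemma~\ref{lemma:closed_sem}, which is all the Eq clause needs.
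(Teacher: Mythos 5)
Your proof is correct, but it takes a different route from the paper's. The paper discharges the typehood obligation $\sameType{\eqty{A}{M}{N}}{\eqty{A}{M}{N}}{\phi}{\tau_{\omega}}$ by citing Lemma~\ref{lemma:eqf} (Equality formation), whereas you unfold the $\text{Eq}(\tau_{\omega})$ clause of $\text{Types}$ directly, supplying its three conjuncts by hand --- in particular deriving \sameComp{}{M}{M}{\alpha} and \sameComp{}{N}{N}{\alpha} from the hypothesis \sameComp{}{M}{N}{\alpha} via the PER property of $\alpha$. Both routes land on the same $\phi = \{(\triv,\triv) \mid \sameComp{}{M}{N}{\alpha}\}$ and the same final step, namely that $\phi(\triv,\triv)$ is exactly the hypothesis. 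What your version buys is worth noting: Lemma~\ref{lemma:eqf} has \eqComp{A}{A'}{\univ{i}} among its premises, which the statement of this lemma does not supply (it only presupposes that $A$ is a $\tau_{\omega}$-type via \eqComp{M}{N}{A}), so the paper's appeal to it is strictly speaking a small gap that your direct unfolding avoids; the cost is the extra (easy) PER argument to establish reflexivity of $M$ and $N$ in $\alpha$, which the paper gets for free from the premises of Lemma~\ref{lemma:eqf}.
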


\begin{proof}
  First, we need to show \sameType{\eqty{A}{M}{N}}{\eqty{A}{M}{N}}{\phi}{\tau_{\omega}} for some $\phi$, which holds by Lemma~\ref{lemma:eqf}.
  By definition, $\phi = \{(\triv, \triv) \mid \sameComp{}{M}{N}{\sem{A}} \}$. Next, we need to show 
  $\phi(\triv,\triv)$. It suffices to show \sameComp{}{M}{N}{\sem{A}}, which holds by assumption.
\end{proof}

\begin{lemma}[Equality elimination]\label{lemma:eqe}
  If 
  \begin{enumerate}
    \item \eqComp{P}{P'}{\eqty{A}{M}{N}}
  \end{enumerate}
  then \eqComp{M}{N}{A}.
\end{lemma}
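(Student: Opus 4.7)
The plan is to unpack the definition of \eqComp{P}{P'}{\eqty{A}{M}{N}} using the Eq clause in the construction of $\tau_{\omega}$, and read off directly that $M$ and $N$ must be related in $\sem{A}$, which is what the conclusion asks for.

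More concretely, first I would unfold the closed equality judgment: by assumption there is some $\phi$ with \sameType{\eqty{A}{M}{N}}{\eqty{A}{M}{N}}{\phi}{\tau_{\omega}} and \sameComp{}{P}{P'}{\phi}. Since \eqty{A}{M}{N} is already a value, $\tau_{\omega}$ must classify this equality type via the $\text{Eq}(\tau_{\omega})$ clause in $\text{Types}(v_{\omega},\tau_{\omega})$. By unicity, there exists $\alpha$ with \sameType{A}{A}{\alpha}{\tau_{\omega}}, together with \sameComp{}{M}{M}{\alpha} and \sameComp{}{N}{N}{\alpha}, and the PER $\phi$ is forced to be
\[
\phi = \{(\triv,\triv) \mid \sameComp{}{M}{N}{\alpha}\}.
\]

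Next, I would exploit \sameComp{}{P}{P'}{\phi}: this gives \eval{P}{V}, \eval{P'}{V'}, and $\phi(V,V')$. By the shape of $\phi$, the only way $\phi(V,V')$ can hold is if $V = V' = \triv$ and, crucially, \sameComp{}{M}{N}{\alpha}. This last fact is precisely the side condition built into the Eq clause; it comes for free once we know $\phi$ is nonempty on the values $P$ and $P'$ evaluate to.

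Finally, I would assemble \eqComp{M}{N}{A}: we have \sameType{A}{A}{\alpha}{\tau_{\omega}} and \sameComp{}{M}{N}{\alpha}, which matches the definition of \eqComp{M}{N}{A} verbatim. There is no real obstacle here — the argument is essentially a dereferencing of the Eq clause's payload. The only step that requires any care is noting that the information ``\sameComp{}{M}{N}{\alpha}'' is genuinely recovered from inhabitation of $\phi$ rather than needing to be imposed separately; once \sameComp{}{P}{P'}{\phi} produces any pair in $\phi$, that side condition is extracted for free.
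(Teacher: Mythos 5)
Your proposal is correct and follows essentially the same route as the paper's proof: unfold the judgment, use the Eq clause and unicity to pin down $\phi = \{(\triv,\triv) \mid \sameComp{}{M}{N}{\alpha}\}$, and extract \sameComp{}{M}{N}{\alpha} from the fact that \sameComp{}{P}{P'}{\phi} forces $\phi$ to be inhabited. The only difference is that you spell out the evaluation of $P$ and $P'$ to $\triv$ explicitly, which the paper leaves implicit.
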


\begin{proof}
  First, we need to show \sameType{A}{A}{\alpha}{\tau_{\omega}}. By assumption, we have 
  \sameType{\eqty{A}{M}{N}}{\eqty{A}{M}{N}}{\phi}{\tau_{\omega}} for some $\phi$. 
  This implies \sameType{A}{A}{\alpha}{\tau_{\omega}}. By unicity,
  we have $\phi = \{(\triv, \triv) \mid \sameComp{}{M}{N}{\alpha} \}$.
  It remains to show \sameComp{}{M}{N}{\alpha}.
  By assumption, \sameComp{}{P}{P'}{\phi}, so we have \sameComp{}{M}{N}{\alpha}, as required. 
\end{proof}
\subsection{Nat}

Before introducing judgments involving complexities, we need to prove some
lemmas governing the construction and use of natural numbers. As we alluded to in
section~\ref{sec:setup}, the complexity component of a judgment is simply a
natural number in the programming language itself. Aside from the rules for
eliminating foreign functions, there are also two elimination rules --
Lemma~\ref{lemma:nate1} is the ordinary eliminator, and Lemma~\ref{lemma:nate2}
is the eliminator that additionally tracks the complexity of the case analysis
given the complexity of each branch. Note that there is no syntactic form for
the recursor: it is an instance of general recursion. The associated induction
rule is also an instance of the elimination rule for the funtime type (see
Lemma~\ref{lemma:pitimei} and~\ref{lemma:pitimee}).

\begin{lemma}[Nat formation]\label{lemma:natf}
  \ \\
  \eqVal{\nat}{\nat}{\univ{i}}.
\end{lemma}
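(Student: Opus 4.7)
The goal is to establish \eqVal{\nat}{\nat}{\univ{i}}, which by definition of the value-subscripted judgment amounts to showing (i) \final{\nat}, and (ii) \eqComp{\nat}{\nat}{\univ{i}}, i.e.\ exhibiting a PER $\phi$ with \sameType{\univ{i}}{\univ{i}}{\phi}{\tau_\omega} and \sameComp{}{\nat}{\nat}{\phi}. Part (i) is immediate from the value rules in Figure~\ref{fig:opersem}, since \nat{} is a canonical form. For part (ii), the plan is to peel back the definitions one level at a time: first I would identify the PER $\phi$ designated by the universe, then verify that $(\nat,\nat)$ lies in it.

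For the first half of (ii), I would essentially replay the argument of Lemma~\ref{lemma:univf}: since $\tau_\omega = \mathrm{Types}(v_\omega,\tau_\omega)$ and $\mathrm{Univ}(v_\omega) \subseteq \mathrm{Types}(v_\omega,\tau_\omega)$, it suffices to produce a witness in $v_\omega$, and by its definition we may take
\[
\phi \;=\; \{(A,B) \mid \exists \alpha.\, \tau_i(A,B,\alpha)\}.
\]
Both \nat{} and \univ{i} evaluate to themselves, so the only remaining work is to verify $\phi(\nat,\nat)$.

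For that, I need to produce an $\alpha$ with $\tau_i(\nat,\nat,\alpha)$. Since $\tau_i = \mu\tau.\,\mathrm{Types}(v_i,\tau)$, and the $\mathrm{Nat}$ clause in Figure~\ref{fig:construction} is $\{(\nat,\nat,\phi) \mid \phi = \omega\}$, I take $\alpha = \omega$ and appeal to the fact that $\mathrm{Nat} \subseteq \mathrm{Types}(v_i,\tau_i)$, yielding $\tau_i(\nat,\nat,\omega)$ by unfolding the fixed point once. This closes the argument.

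The proof is structurally routine; the only subtlety, and the only place where a reader could easily get lost, is the bookkeeping across the two levels of the hierarchy ($\tau_i$ inside the PER designated by \univ{i}, which itself lives inside $v_\omega \subseteq \tau_\omega$). I expect the presentation to mirror Lemma~\ref{lemma:univf} closely, so the main ``obstacle'' is really just being careful to quote the right clause of $\mathrm{Types}$ at the right level, rather than any genuine mathematical difficulty.
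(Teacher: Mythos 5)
Your proposal is correct and follows essentially the same route as the paper: the paper's proof likewise reduces the goal, via the definition of $\sem{\univ{i}}$, to exhibiting $\tau_i(\nat,\nat,\omega)$ and obtains it from the $\mathrm{Nat}$ clause by unfolding the fixed point $\tau_i = \mathrm{Types}(v_i,\tau_i)$ once. The only difference is that you spell out the universe-level bookkeeping (which the paper delegates to Lemma~\ref{lemma:univf} and the $\sem{\univ{i}}$ notation), and that extra care does no harm.
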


\begin{proof}
  By definition of $\sem{\univ{i}}$, it suffices to show 
  $\tau_i(\nat, \nat, \alpha)$ for some $\alpha$.
  Unfolding definitions, we have $\text{Nat}(\tau_{i}) \subseteq \text{Types}(v_{\omega},\tau_{i}) = \tau_{i}$.
  So take $\alpha = \omega$, since $\text{Nat}(\tau_i)(\nat,\nat,\omega)$. 
\end{proof}

\begin{lemma}[Nat introduction]\label{lemma:nati}
  \ 
  \begin{enumerate}
    \item \eqComp{\zero}{\zero}{\nat}.
    \item If \eqComp{M}{M'}{\nat}, then \eqComp{\suc{M}}{\suc{M'}}{\nat}.
  \end{enumerate}
\end{lemma}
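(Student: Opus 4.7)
The plan is to unfold the definitions of \eqComp{\_}{\_}{\nat} through Lemma~\ref{lemma:natf} and then verify the two obligations directly from the definition of $\omega$ and the operational semantics.

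First, by Lemma~\ref{lemma:natf} we have \sameType{\nat}{\nat}{\omega}{\tau_{\omega}}, so by unicity $\sem{\nat} = \omega$. Hence to prove each part it suffices to exhibit the corresponding instance of \sameComp{}{\_}{\_}{\omega}.

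For part (1), I would observe that \zero is a value, so \eval{\zero}{\zero}, and $\omega(\zero,\zero)$ holds directly from the base clause of the fixed point $\omega = \mu\alpha.\, \{(\zero,\zero)\} \cup \{(\suc{v},\suc{v'}) \mid \alpha(v,v')\}$. This gives \sameComp{}{\zero}{\zero}{\omega}, hence \eqComp{\zero}{\zero}{\nat}.

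For part (2), assume \eqComp{M}{M'}{\nat}, which unfolds to \sameComp{}{M}{M'}{\omega}: that is, \eval{M}{V}, \eval{M'}{V'}, and $\omega(V,V')$ for some values $V,V'$. Using the step rule for $\suc$ from Figure~\ref{fig:opersem}, the reduction sequences for $M$ and $M'$ lift to reduction sequences for \suc{M} and \suc{M'}, yielding \eval{\suc{M}}{\suc{V}} and \eval{\suc{M'}}{\suc{V'}}, both of which are values. Then $\omega(\suc{V},\suc{V'})$ follows from the successor clause of $\omega$ applied to $\omega(V,V')$. Combining these gives \sameComp{}{\suc{M}}{\suc{M'}}{\omega}, i.e., \eqComp{\suc{M}}{\suc{M'}}{\nat}.

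There is no real obstacle here: the only subtlety is the appeal to Lemma~\ref{lemma:natf} (via unicity) to identify $\sem{\nat}$ with $\omega$, and the observation that evaluation under $\suc{(\cdot)}$ is sanctioned by the congruence step rule for $\suc$. Both points are routine given the machinery already established.
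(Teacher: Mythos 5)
Your proposal is correct and follows essentially the same route as the paper's proof: identify $\sem{\nat}$ with $\omega$ via Lemma~\ref{lemma:natf} and unicity, then verify both parts directly from the fixed-point definition of $\omega$ and the congruence step rule for $\suc$. No differences worth noting.
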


\begin{proof}
  First, by Lemma~\ref{lemma:natf}, we know there is $\alpha$ such that \sameType{\nat}{\nat}{\alpha}{\tau}. 
  By unicity of $\tau_{\omega}$, $\alpha = \omega$. For part 1, it suffices to show \sameComp{}{\zero}{\zero}{\omega}.
  This holds by definition, as $\omega = \{(\zero,\zero)\} \cup \{(\suc{V}, \suc{V'}) \mid \omega(V,V')\}$. 
  Now suppose \eqComp{M}{M'}{\nat}, which by above, means \sameComp{}{M}{M'}{\omega}. We need to show 
  \sameComp{}{\suc{M}}{\suc{M'}}{\nat}. By assumption, \eval{M}{V}, \eval{M'}{V'}, and $\omega(V,V')$.
  Hence \eval{\suc{M}}{\suc{V}} and \eval{\suc{M'}}{\suc{V'}}. By definition, $\omega(\suc{V},\suc{V'})$,
  and so\sameComp{}{\suc{M}}{\suc{M'}}{\nat}.
\end{proof}

\begin{lemma}[FF elimination (1)]\label{lemma:ffe}
  \ 
  \begin{enumerate}
    \item If $f : \N \to \N$ and \eqComp{M}{M'}{\nat}, then \eqComp{\cffone{f}{M}}{\cffone{f}{M'}}{\nat}.
    \item If $f : \N \to \N \to \N$ and \eqComp{M_1}{M_1'}{\nat} and \eqComp{M_2}{M_2'}{\nat}, then 
    \eqComp{\cfftwo{f}{M_1}{M_2}}{\cfftwo{f}{M_1'}{M_2'}}{\nat}.
  \end{enumerate}
\end{lemma}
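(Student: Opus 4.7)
The plan is to unfold the hypotheses through the PER semantics, read off that the two natural-number subjects must evaluate to the same numeral, and then push the evaluation one more step through the congruence and reduction rules for the foreign-function operator. Since there is no ambient context, I work directly with the closed judgment.

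First, Lemma~\ref{lemma:natf} together with unicity of $\tau_\omega$ gives $\sem{\nat} = \omega$, so the hypothesis \eqComp{M}{M'}{\nat} unfolds to \sameComp{}{M}{M'}{\omega}. The key observation about $\omega$ is that, being the least fixed point of the operator $\alpha \mapsto \{(\zero,\zero)\} \cup \{(\suc v, \suc v') \mid \alpha(v,v')\}$, any related pair consists of syntactically identical numerals. Hence the hypothesis yields some $m \in \N$ with \eval{M}{\bar m} and \eval{M'}{\bar m}.

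For part~(1), the congruence rule for $\cffone{f}{\_}$ in Figure~\ref{fig:opersem} combined with the reduction \step{\cffone{f}{\bar m}}{\overline{f(m)}} gives \eval{\cffone{f}{M}}{\overline{f(m)}} and, by the same argument, \eval{\cffone{f}{M'}}{\overline{f(m)}}. Reflexivity of $\omega$ on every numeral supplies $\omega(\overline{f(m)}, \overline{f(m)})$, hence \sameComp{}{\cffone{f}{M}}{\cffone{f}{M'}}{\omega}, which is the desired conclusion. Part~(2) is identical modulo using the two-argument congruence rules to evaluate $M_1, M_1'$ and then $M_2, M_2'$ down to matching numerals $\bar m$ and $\bar n$, and finally firing \step{\cfftwo{f}{\bar m}{\bar n}}{\overline{f(m,n)}}.

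The only non-mechanical step is the observation that $\omega$ is strict on its arguments, i.e. that $\omega(V, V')$ forces $V = V' = \bar m$ for some $m$. This requires a short induction on the inductive definition of $\omega$, and I expect it to be the sole subtlety; everything else is direct unfolding of the operational semantics together with reflexivity of $\omega$ on numerals in $f$'s image.
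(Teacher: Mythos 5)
Your proposal is correct and follows essentially the same route as the paper's proof: both extract $\sem{\nat}=\omega$ via unicity, use the fact that $\omega(V,V')$ forces $V=V'=\bar m$, push the evaluation through the congruence and $\beta$ rules for the foreign-function operator, and conclude by reflexivity of $\omega$ on $\overline{f(m)}$ (the paper phrases this last step as an appeal to Lemma~\ref{lemma:headexp1}, which is just your direct unfolding of \sameComp{}{\cffone{f}{M}}{\cffone{f}{M'}}{\omega}). The strictness observation you flag as the one subtlety is likewise assumed without ceremony in the paper's proof.
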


\begin{proof}
  We prove the unary case, as the binary case follows similarly. 
  Since \sameType{\nat}{\nat}{\omega}{\tau_{\omega}}, it suffices to show 
  \sameComp{}{\cffone{f}{M}}{\cffone{f}{M'}}{\omega}.
  By assumption and Lemma~\ref{lemma:closed_sem}, we know 
  \eval{M}{V}, \eval{M'}{V'}, and $\omega(V,V')$. This means $V = \bar{m}$ and $V' = \bar{m}$ for some $m$. 
  Hence, 
  \begin{align*}
    \cffone{f}{M} &\mapsto^* \cffone{f}{\bar{m}} \mapsto \overline{f(m)}\\
    \cffone{f}{M'} &\mapsto^* \cffone{f}{\bar{m}} \mapsto \overline{f(m)}
  \end{align*}
  By Lemma~\ref{lemma:headexp1}, it suffices to show \sameComp{}{\overline{f(m)}}{\overline{f(m)}}{\omega}, 
  which holds by definition of $\omega$.
\end{proof}

\begin{lemma}[FF elimination (2)]\label{lemma:ffe2}
  \ 
  \begin{enumerate}
    \item If $f : \N \to \N$ and \eqVal{V}{V'}{\nat}, then \eqCComp{\cffone{f}{V}}{\cffone{f}{V'}}{\nat}{\bar{1}}.
    \item If $f : \N \to \N \to \N$ and \eqVal{V_1}{V_1'}{\nat} and \eqVal{V_2}{V_2'}{\nat}, then 
    \eqCComp{\cfftwo{f}{V_1}{V_2}}{\cfftwo{f}{V_1'}{V_2'}}{\nat}{\bar{1}}.
  \end{enumerate}
\end{lemma}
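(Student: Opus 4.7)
The plan is to reduce each case to the base case where the arguments are literal numerals, then apply one step of head expansion to account for the single reduction step performed by the foreign function.

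First I would handle the unary case. From \eqVal{V}{V'}{\nat} together with Lemma~\ref{lemma:natf}, Lemma~\ref{lemma:closed_sem}, and unicity of $\tau_{\omega}$, we obtain \sameComp{}{V}{V'}{\omega}. Because $V$ and $V'$ are values and $\omega$ by construction contains only pairs of syntactically identical numerals, we conclude $V = V' = \bar{m}$ for some $m \in \N$.

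Next I would establish the zero-cost base case \sameCComp{}{\overline{f(m)}}{\overline{f(m)}}{\omega}{\zero}{}. Since $\overline{f(m)}$ is a value of $\nat$ (by iterated use of Lemma~\ref{lemma:nati}), we have \evalCost{\overline{f(m)}}{0}{\overline{f(m)}}, \eval{\zero}{\bar{0}}, and $0 \ge \max(0,0)$, so the condition unfolds directly. Now by the operational semantics, $\cffone{f}{\bar{m}} \mapsto \overline{f(m)}$ in exactly one step, on both the $V$ and $V'$ side simultaneously. Applying Lemma~\ref{lemma:headexp1}(2) with $P = \zero$ then yields \sameCComp{}{\cffone{f}{V}}{\cffone{f}{V'}}{\omega}{\suc{\zero}}{}, and since $\suc{\zero} = \bar{1}$, this gives \eqCComp{\cffone{f}{V}}{\cffone{f}{V'}}{\nat}{\bar{1}} as required.

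The binary case is entirely analogous: from \eqVal{V_1}{V_1'}{\nat} and \eqVal{V_2}{V_2'}{\nat} we obtain $V_1 = V_1' = \bar{m}$ and $V_2 = V_2' = \bar{n}$; the operational rule $\cfftwo{f}{\bar{m}}{\bar{n}} \mapsto \overline{f(m,n)}$ fires in one step because both arguments are already values (no congruence steps are needed); and one application of head expansion bumps the cost from $\zero$ to $\bar{1}$.

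There is essentially no main obstacle here --- the lemma is a direct cost-aware refinement of Lemma~\ref{lemma:ffe}. The only subtle point is to notice that because $V, V'$ are already values, the premises of the two congruence rules for \cfftwo{f}{\_}{\_} never fire, so the evaluation cost really is exactly one rather than something larger, matching the claimed bound $\bar{1}$ exactly. The symmetric form of Lemma~\ref{lemma:headexp1}(2), which steps both sides at once, is what makes this clean; otherwise we would have to juggle asymmetric cost accounting.
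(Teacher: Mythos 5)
Your proof is correct and follows the same route the paper intends: the paper's proof of this lemma is simply ``Similar to proof of Lemma~\ref{lemma:ffe}'', and your argument is exactly that proof refined with the cost bookkeeping (numerals from $\omega$, the zero-cost base case, and one application of Lemma~\ref{lemma:headexp1}(2) to account for the single $\delta$-step). No gaps.
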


\begin{proof}
  Similar to proof of Lemma~\ref{lemma:ffe}.
\end{proof}

\begin{lemma}[Nat elimination (1)]\label{lemma:nate1}
  If
  \begin{enumerate}
    \item \openComp{\isOf{a}{\nat}}{A}{\univ{i}},
    \item \eqComp{M}{M'}{\nat},
    \item \openEqComp{\isOf{p}{\eqty{\nat}{\zero}{M}}}{M_0}{M_0'}{[\zero/a]A},
    \item \openEqComp{\isOf{a}{\nat},\isOf{p}{\eqty{\nat}{\suc{a}}{M}}}{M_1}{M_1'}{[\suc{a}/a]A},
  \end{enumerate}
  then \eqComp{\ifz{M}{M_0}{a}{M_1}}{\ifz{M'}{M_0'}{a}{M_1'}}{\lsub{M}{a}{A}}.
\end{lemma}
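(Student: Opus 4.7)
The plan is to unpack the membership \eqComp{M}{M'}{\nat} to case-analyze the value that $M$ and $M'$ both reduce to, then for each case invoke the corresponding branch assumption via a suitably constructed equal substitution instance, and finally pull everything back along head expansion. By Lemma~\ref{lemma:natf} and unicity, hypothesis 2 gives \sameComp{}{M}{M'}{\omega}, so \evalCost{M}{\_}{V}, \evalCost{M'}{\_}{V'}, and $\omega(V,V')$. By the inductive definition of $\omega$, either both values are \zero{}, or $V=\suc{V_0}$ and $V'=\suc{V_0'}$ with $\omega(V_0,V_0')$.

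In the zero case, I would first observe \eqComp{\zero}{M}{\nat} by head expansion of \eqComp{\zero}{\zero}{\nat} along $M \to^* \zero$, then apply Lemma~\ref{lemma:eqi} to get \isVal{\triv}{\eqty{\nat}{\zero}{M}}. This gives the equal instance \eqInst{[p\mapsto\triv]}{[p\mapsto\triv]}{\isOf{p}{\eqty{\nat}{\zero}{M}}}, so hypothesis 3 yields \eqComp{M_0}{M_0'}{[\zero/a]A}. Since $\ifz{M}{M_0}{a}{M_1} \to^* \ifz{\zero}{M_0}{a}{M_1} \to M_0$ (and similarly for the primed term), iterating Lemma~\ref{lemma:headexp}(2) lifts this to \eqComp{\ifz{M}{M_0}{a}{M_1}}{\ifz{M'}{M_0'}{a}{M_1'}}{[\zero/a]A}. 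Finally, $\lsub{M}{a}{A} = \letcst{M}{a}{A} \to^* \letcst{\zero}{a}{A} \to [\zero/a]A$, so by Lemma~\ref{lemma:headexp}(1) and Lemma~\ref{lemma:resp_eq} I can transport the membership to the goal type $\lsub{M}{a}{A}$.

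The successor case is parallel. I would first show \eqVal{V_0}{V_0'}{\nat} (directly from $\omega(V_0,V_0')$ and \isType{\nat}), and \eqComp{\suc{V_0}}{M}{\nat} again by head expansion on \eqComp{\suc{V_0}}{\suc{V_0}}{\nat}. Lemma~\ref{lemma:eqi} then gives \eqVal{\triv}{\triv}{\eqty{\nat}{\suc{V_0}}{M}}, yielding the equal instance \eqInst{[a\mapsto V_0, p\mapsto \triv]}{[a\mapsto V_0', p\mapsto \triv]}{\isOf{a}{\nat},\isOf{p}{\eqty{\nat}{\suc{a}}{M}}}. Instantiating hypothesis 4 produces \eqComp{[V_0/a]M_1}{[V_0'/a]M_1'}{[\suc{V_0}/a]A}, and the reduction $\ifz{M}{M_0}{a}{M_1} \to^* [V_0/a]M_1$ plus the analogous chain for $M'$ lets me head-expand to the conclusion, again transporting along \eqType{\lsub{M}{a}{A}}{[\suc{V_0}/a]A}.

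The main obstacle is bookkeeping rather than conceptual: verifying that each constructed substitution genuinely satisfies the relevant \eqInst{}{}{} conditions (in particular that the equality-type witnesses are well-typed using the already-computed value of $M$), and iterating the single-step head expansion of Lemma~\ref{lemma:headexp} across the multi-step reductions $M\to^*\zero$ or $M\to^*\suc{V_0}$ inside both the subject and the type $\lsub{M}{a}{A}$. A minor care point is that the hypothesis 4 is stated with respect to the open variable $a$, so I must ensure that the substitution used to instantiate it matches the one used when reducing \ifz{\suc{V_0}}{M_0}{a}{M_1} to $[V_0/a]M_1$.
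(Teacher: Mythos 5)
Your proposal is correct and follows essentially the same route as the paper's own proof: unpack \sameComp{}{M}{M'}{\omega}, case on whether the common value is \zero{} or a successor, use Lemma~\ref{lemma:eqi} (after head-expanding \eqComp{\zero}{M}{\nat} or \eqComp{\suc{V_0}}{M}{\nat}) to build the equality witness needed to instantiate the branch hypothesis, and pull the result back along head expansion. You are in fact slightly more explicit than the paper in transporting the goal type $\lsub{M}{a}{A}$ to $[\zero/a]A$ (resp.\ $[\suc{V_0}/a]A$) via Lemma~\ref{lemma:headexp}(1) and Lemma~\ref{lemma:resp_eq}, a step the paper leaves implicit.
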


\begin{proof}
  By assumption 2, 
  \sameComp{}{M}{M'}{\omega}, so 
  \begin{enumerate}
    \item \eval{M}{V}
    \item \eval{M'}{V'}
    \item $\omega(V,V')$
  \end{enumerate}
  One possibility is that $(V,V') = (\zero,\zero)$. 
  In this case, we know that 
  \begin{align*}
    \ifz{M}{M_0}{a}{M_1} &\mapsto^* \ifz{\zero}{M_0}{a}{M_1} \mapsto M_0\\
    \ifz{M'}{M_0'}{a}{M_1'} &\mapsto^* \ifz{\zero}{M_0'}{a}{M_1'} \mapsto M_0'
  \end{align*} 
  By Lemma~\ref{lemma:headexp}, it suffices to show \eqComp{M_0}{M_0'}{[\zero/a]A}, 
  which holds by the third assumption, given that we show \eqVal{\zero{}}{M}{\nat},
  which again follows from Lemma~\ref{lemma:headexp}.
  The other possibility is that $(V,V') = (\suc{U},\suc{U'})$ where
  $\omega(U,U')$. Then we have  
  \begin{align*}
    \ifz{M}{M_0}{a}{M_1} &\mapsto^* \ifz{\suc{U}}{M_0}{a}{M_1} \mapsto [U/a]M_1\\
    \ifz{M'}{M_0'}{a}{M_1'} &\mapsto^* \ifz{\suc{U'}}{M_0'}{a}{M_1'} \mapsto [U'/a]M_1'
  \end{align*} 
  By Lemma~\ref{lemma:headexp}, it suffices to show \eqComp{[U/a]M_1}{[U'/a]M_1'}{[\suc{U}/a]A},
  which holds by the fourth assumption, given we show that \isVal{P}{\eqty{\nat}{\suc{U}}{M}} for some $P$. 
  By Lemma~\ref{lemma:eqi}, it suffices to show \eqComp{\suc{U}}{M}{\nat}, which holds by Lemma~\ref{lemma:headexp}.
\end{proof}

\begin{lemma}[Nat elimination (2)]\label{lemma:nate2}
  If
  \begin{enumerate}
    \item \openComp{\isOf{a}{\nat}}{A}{\univ{i}},
    \item \eqVal{V}{V'}{\nat},
    \item \eqComp{P_0}{P_0}{\nat},
    \item \openEqComp{\isOf{a}{\nat}}{P_1}{P_1}{\nat},
    \item \openEqCComp{\isOf{p}{\eqty{\nat}{\zero}{V}}}{M_0}{M_0'}{[\zero/a]A}{P_0},
    \item \openEqCComp{\isOf{a}{\nat},\isOf{p}{\eqty{\nat}{\suc{a}}{V}}}{M_1}{M_1'}{[\suc{a}/a]A}{P_1},
  \end{enumerate}
  then \eqCComp{\ifz{V}{M_0}{a}{M_1}}{\ifz{V'}{M_0'}{a}{M_1'}}{[V/a]A}{\ifz{V}{\suc{P_0}}{a}{\suc{P_1}}}.
\end{lemma}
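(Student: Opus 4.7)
The plan is to mirror the proof of Lemma~\ref{lemma:nate1}, adding bookkeeping for the cost component. The distinctively cost-aware steps are (i) showing that $\ifz{V}{\suc{P_0}}{a}{\suc{P_1}}$ is itself well-formed in \nat, and (ii) accounting for the one step consumed by each \ifz{}{}{a}{} reduction on top of the cost of the branch taken.

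First I would verify that the target cost lives in \nat. Weakening hypotheses~3 and~4 (Lemma~\ref{lemma:weaken}) and applying Lemma~\ref{lemma:nati} turns $P_0$ and $P_1$ into $\suc{P_0}$ and $\suc{P_1}$ at \nat, and then Lemma~\ref{lemma:nate1} with $A := \nat$ yields $\eqComp{\ifz{V}{\suc{P_0}}{a}{\suc{P_1}}}{\ifz{V}{\suc{P_0}}{a}{\suc{P_1}}}{\nat}$. This supplies the $\omega$-well-formedness premise of the cost-aware judgment.

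Next I would case-split on $V$ using $\omega(V,V')$ from hypothesis~2. The fixed-point definition of $\omega$ gives either $V = V' = \zero$ or $V = \suc{U}$, $V' = \suc{U'}$ with $\omega(U,U')$. In the zero case, both $\ifz{V}{M_0}{a}{M_1}$ and $\ifz{V'}{M_0'}{a}{M_1'}$ reduce in one step to $M_0$ and $M_0'$, and $\ifz{V}{\suc{P_0}}{a}{\suc{P_1}}$ reduces to $\suc{P_0}$. Applying Lemma~\ref{lemma:eqi} to $\eqComp{\zero}{V}{\nat}$ (itself from Lemma~\ref{lemma:headexp}) produces $\eqVal{\triv}{\triv}{\eqty{\nat}{\zero}{V}}$, which I substitute into hypothesis~5 to get $\eqCComp{M_0}{M_0'}{[\zero/a]A}{P_0}$. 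Lemma~\ref{lemma:headexp} part~4, with $Q := \suc{\zero}$ and $c_1 = c_2 = 1$, then produces cost $\suc{\zero} \hat{+} P_0$, which is equal in \nat{} to $\ifz{V}{\suc{P_0}}{a}{\suc{P_1}}$, so Lemma~\ref{lemma:headexp} part~5 repackages it into the desired form; Lemma~\ref{lemma:resp_eq} converts the type from $[\zero/a]A$ to $[V/a]A$ (trivially, as $V = \zero$). The successor case is symmetric: both ifz terms step in one step to $[U/a]M_1$ and $[U'/a]M_1'$, the cost expression steps to $\suc{[U/a]P_1}$, and hypothesis~6 (instantiated at $a := U$ with the \triv{} witness from Lemma~\ref{lemma:eqi}) supplies $\eqCComp{[U/a]M_1}{[U'/a]M_1'}{[\suc{U}/a]A}{[U/a]P_1}$, after which the same head-expansion/cost-equality combo finishes.

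The main obstacle is bookkeeping rather than conceptual: keeping the substitution instances of $P_1$ aligned, converting the costs $\suc{\zero} \hat{+} P_0$ and $\suc{\zero} \hat{+} [U/a]P_1$ into the syntactic form $\ifz{V}{\suc{P_0}}{a}{\suc{P_1}}$ via Lemma~\ref{lemma:headexp} part~5, and threading the type through Lemma~\ref{lemma:resp_eq}. Once the case split is in hand, each case reduces to unfolding and applying the head-expansion lemmas in sequence.
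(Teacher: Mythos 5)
Your proposal is correct and follows essentially the same route as the paper's proof: a case split on $\omega(V,V')$, head expansion of both branches and of the cost expression, and instantiation of hypotheses~5 and~6 via the $\triv$ witness from Lemma~\ref{lemma:eqi}. The only cosmetic difference is that you account for the $\ifz{}{}{a}{}$ step via Lemma~\ref{lemma:headexp} parts~4 and~5 with $Q = \suc{\zero}$, where the paper uses the $\suc{P}$ form (part~3) together with replacement of equal costs.
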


\begin{proof}
  By assumption 2, we know $\omega(V,V')$. One possibility is that $(V,V') = (\zero,\zero)$. 
  In this case, we know that 
  \begin{align*}
    \ifz{\zero}{M_0}{a}{M_1} & \mapsto M_0\\
    \ifz{\zero'}{M_0'}{a}{M_1'} & \mapsto M_0'\\
    \ifz{\zero}{\suc{P_0}}{a}{\suc{P_1}} &\mapsto^* \suc{P_0}
  \end{align*} 
  By Lemma~\ref{lemma:headexp}, we have \eqComp{\ifz{M}{\suc{P_0}}{a}{\suc{P_1}}}{\suc{P_0}}{\nat}.
  So by Lemma~\ref{lemma:headexp}, it suffices to show 
  \eqCComp{M_0}{M_0'}{[\zero/a]A}{P_0}. This holds by assumption 5, given 
  that we show \eqVal{\zero}{V}{\nat}, which holds by Lemma~\ref{lemma:nati}.
  The other possibility is that $(V,V') = (\suc{U},\suc{U'})$ where
  $\omega(U,U')$. Then we have  
  \begin{align*}
    \ifz{\suc{U}}{M_0}{a}{M_1} & \mapsto [U/a]M_1\\
    \ifz{\suc{U'}}{M_0'}{a}{M_1'} &\mapsto [U'/a]M_1'\\
    \ifz{\suc{U}}{\suc{P_0}}{a}{\suc{P_1}} &\mapsto^* [U/a]\suc{P_1}
  \end{align*} 
  As above, by Lemma~\ref{lemma:headexp}, we have \eqComp{\ifz{\suc{U}}{\suc{P_0}}{a}{\suc{P_1}}}{[U/a]\suc{P_1}}{\nat}.
  So by Lemma~\ref{lemma:headexp}, it suffices to show 
  \eqCComp{[U/a]M_1}{[U'/a]M_1'}{[\suc{U}/a]A}{[U/a]P_1}, which holds by assumption 6, given that we 
  show \isVal{P}{\eqty{\nat}{\suc{U}}{M}} for some $P$. 
  By Lemma~\ref{lemma:eqi}, it suffices to show \eqComp{\suc{U}}{M}{\nat}, which holds by Lemma~\ref{lemma:headexp}.
\end{proof}
\subsection{Subset}

\begin{lemma}[Subset formation]\label{lemma:subsetf}
If
\begin{enumerate}
\item \eqComp{A}{A'}{\univ{i}}
\item \openEqComp{\isOf{a}{A}}{B}{B'}{\univ{i}}
\end{enumerate}
then \eqVal{\subsetty{\isOf{a}{A}}{B}}{\subsetty{\isOf{a}{A'}}{B'}}{\univ{i}}
\end{lemma}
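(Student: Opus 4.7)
The plan is to mirror the template used for Lemma~\ref{lemma:eqf} (equality formation) and Lemma~\ref{lemma:natf} (nat formation): reduce the goal to exhibiting a tuple in the $\text{Subset}$ clause of $\text{Types}(v_i, \tau_i)$. Since both $\subsetty{\isOf{a}{A}}{B}$ and $\subsetty{\isOf{a}{A'}}{B'}$ are canonical forms (hence values by the rules in Figure~\ref{fig:opersem}), the \final{\_} obligations implicit in \eqVal are immediate, and the real content is to produce the PER witnessing the type equality.

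Unfolding the definition of $\sem{\univ{i}}$, it suffices to exhibit some $\phi$ such that $\tau_i(\subsetty{\isOf{a}{A}}{B}, \subsetty{\isOf{a}{A'}}{B'}, \phi)$. Since $\text{Subset}(\tau_i) \subseteq \text{Types}(v_i, \tau_i) = \tau_i$, it is enough to build such a $\phi$ via the Subset clause of Figure~\ref{fig:construction}, which requires three ingredients: a relation $\alpha$ with \sameType{A}{A'}{\alpha}{\tau_i}, a family $\beta$ with \sameTypeOne{a}{\alpha}{B}{B'}{\beta}{\tau_i}, and a specific prescription for $\phi$.

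For the first ingredient, I use assumption 1: \eqComp{A}{A'}{\univ{i}} unfolds to \sameComp{}{A}{A'}{\sem{\univ{i}}}, and by definition of \sem{\univ{i}} this gives some $\alpha$ with \sameType{A}{A'}{\alpha}{\tau_i}. For the second ingredient, I apply Lemma~\ref{lemma:open_sem} to this $\alpha$ together with assumption 2 to obtain a $\beta$ with \sameTypeOne{a}{\alpha}{B}{B'}{\beta}{\tau_i}. For the third ingredient, take $\phi = \{(V,V') \mid \alpha(V,V') \land \exists U,U'.\, \beta_{V,V'}(U,U')\}$ exactly as prescribed by the Subset clause; this is a well-defined PER because $\alpha$ and each $\beta_{V,V'}$ are PERs, inherited from the fact that $\tau_i$ is a type system (Theorem~\ref{lemma:hierachy}).

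No serious obstacle is anticipated; the proof is an unfolding exercise exactly parallel to Lemma~\ref{lemma:eqf}, with Lemma~\ref{lemma:open_sem} doing the real work of turning an open equality-in-universe judgment into a functional family of PERs indexed at the correct universe level. The only point deserving care is keeping the universe index straight: both assumptions live at \univ{i}, so both $\alpha$ and $\beta$ are extracted already in $\tau_i$, and no appeal to cumulativity (Lemma~\ref{lemma:cumulativity}) is needed in this proof.
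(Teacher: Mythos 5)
Your proposal is correct and follows essentially the same route as the paper's own proof: unfold $\sem{\univ{i}}$, reduce to the $\text{Subset}$ clause of $\text{Types}(v_i,\tau_i)$, extract $\alpha$ from assumption 1, obtain $\beta$ via Lemma~\ref{lemma:open_sem}, and take $\phi$ as prescribed by the clause. The extra observations about canonicity of the subset values and the universe index bookkeeping are fine but not points where the paper's argument differs.
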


\begin{proof}
  By definition of \sem{\univ{i}}, 
  we need to show $\tau_i(\subsetty{\isOf{a}{A}}{B}, \subsetty{\isOf{a}{A}}{B}, \phi)$ 
  for some $\phi$.
  By definition, $\tau_i = \text{Types}(v_{i},\tau_{i})$, so it suffices to show 
  $\text{Subset}(\tau_{i})(\subsetty{\isOf{a}{A}}{B},\subsetty{\isOf{a}{A'}}{B'},\phi)$ for some $\phi$.
  By assumption, we have \sameType{A}{A'}{\alpha}{\tau_{i}} for some $\alpha$. It remains to show 
  \sameTypes{\isOf{a}{\alpha}}{B}{B'}{\beta}{\tau_{i}} for some $\beta$, which exists 
  by Lemma~\ref{lemma:open_sem}. By definition, 
  we can take $\phi = \{(V,V') \mid \alpha(V,V') \land \exists U,U'.\beta_{V,V'}(U,U')\}$.
\end{proof}

\begin{lemma}[Subset introduction]\label{lemma:subseti}
  If 
  \begin{enumerate}
    \item \isComp{A}{\univ{i}}
    \item \openComp{\isOf{a}{A}}{B}{\univ{i}}
    \item \eqCComp{M}{M'}{A}{P}
    \item \eqComp{N}{N'}{\lsub{M}{a}{B}}
  \end{enumerate}
  then \eqCComp{M}{M'}{\subsetty{\isOf{a}{A}}{B}}{P}
\end{lemma}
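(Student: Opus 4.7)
The plan is to unpack the goal \eqCComp{M}{M'}{\subsetty{\isOf{a}{A}}{B}}{P} into its two obligations: (i) that the subset type is well-formed with some membership PER $\phi$, and (ii) that \sameCComp{}{M}{M'}{\phi}{P}{} holds. Obligation (i) follows immediately by applying Lemma~\ref{lemma:subsetf} to assumptions~1 and~2, yielding \sameType{\subsetty{\isOf{a}{A}}{B}}{\subsetty{\isOf{a}{A}}{B}}{\phi}{\tau_{\omega}} with $\phi = \{(V,V') \mid \alpha(V,V') \land \exists U,U'.\,\beta_{V,V'}(U,U')\}$, where $\alpha = \sem{A}$ and $\beta = \sem{\isOf{a}{\alpha}.(B,B)}$ (using Lemma~\ref{lemma:closed_sem} and Lemma~\ref{lemma:open_sem}).

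For obligation (ii), I would unpack assumption~3 to obtain the evaluations \evalCost{M}{c}{V} and \evalCost{M'}{c'}{V'}, together with \eval{P}{\bar{p}}, $p \ge \max(c,c')$, \sameComp{}{P}{P}{\omega}, and $\alpha(V,V')$. All the cost and PER data needed for \sameCComp{}{M}{M'}{\phi}{P}{} is present; what remains is to promote $\alpha(V,V')$ to $\phi(V,V')$, i.e., to exhibit $U, U'$ with $\beta_{V,V'}(U,U')$.

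To produce such witnesses, I would use assumption~4 together with head expansion. Since $M \mapsto^* V$, the sequencing construct reduces $\lsub{M}{a}{B} \mapsto^* [V/a]B$, so Lemma~\ref{lemma:headexp} converts assumption~4 into \eqComp{N}{N'}{[V/a]B}. Unpacking this gives \eval{N}{U}, \eval{N'}{U'}, and $\sem{[V/a]B}(U,U')$. The key identification is $\sem{[V/a]B} = \beta_{V,V'}$: by the construction of $\beta$ in Lemma~\ref{lemma:open_sem} we have $\beta_{V,V} = \sem{[V/a]B}$, and by functionality (Lemma~\ref{lemma:func_per}) $\beta_{V,V} = \beta_{V,V'}$ because $\alpha(V,V)$ and $\alpha(V,V')$ both hold (the former by PER symmetry/transitivity applied to $\alpha(V,V')$). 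This yields $\beta_{V,V'}(U,U')$ and hence $\phi(V,V')$, completing the proof.

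The main obstacle is this final alignment step: the assumption is about $[M/a]B$ sequenced through a let-binding, but the membership PER of the subset type mentions $\beta_{V,V'}$, which is defined via substitution of the value $V$. Bridging these requires head-expanding the let through the evaluation $M \mapsto^* V$ and then invoking functionality of $\beta$ to identify the two PERs. Once that identification is made, the rest is bookkeeping.
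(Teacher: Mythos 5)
Your proposal is correct and follows essentially the same route as the paper's own proof: form the subset type via Lemma~\ref{lemma:subsetf}, read off $\phi$ by unicity, extract the cost data and $\alpha(V,V')$ from assumption~3, head-expand assumption~4 through $\lsub{M}{a}{B} \mapsto^* [V/a]B$ to obtain witnesses in $\sem{[V/a]B}$, and identify $\sem{[V/a]B}$ with $\beta_{V,V'}$ via Lemma~\ref{lemma:open_sem} and functionality (Lemma~\ref{lemma:func_per}). No substantive difference.
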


\begin{proof}
  First, we need to show \sameType{\subsetty{\isOf{a}{A}}{B}}{\subsetty{\isOf{a}{A}}{B}}{\phi}{\tau_{\omega}} for some $\phi$,
  which follows from Lemma~\ref{lemma:subsetf} and~\ref{lemma:cumulativity}. By unicity of $\tau_{\omega}$, 
  $\phi = \{(V,V') \mid \sem{A}(V,V') \land \exists U,U'.\sem{\isOf{a}{\alpha}(B,B)}_{V,V'}(U,U')\}$. 
  It suffices to show \sameCComp{}{M}{M'}{\phi}{P}. By assumption 3, we know that 
  \begin{enumerate}
    \item \evalCost{M}{c}{V}
    \item \evalCost{M'}{c'}{V'}
    \item \eval{P}{\bar{p}}
    \item $p \ge \max(c,c')$
    \item $\sem{A}(V,V')$
  \end{enumerate}
  It remains to show $\exists U,U'.\sem{\isOf{a}{\alpha}.(B,B')}_{V,V'}(U,U')$.
  By Lemma~\ref{lemma:headexp}, \eqComp{N}{N'}{[V/a]B}, which means
  \begin{enumerate}
    \item \eval{N}{U}
    \item \eval{N'}{U'}
    \item $\sem{[V/a]B}(U,U')$
  \end{enumerate}
  It suffices to show $\sem{[V/a]B} = \sem{\isOf{a}{\alpha}.(B,B)}_{V,V'}$. By unicity,  
  $\sem{[V/a]B} = \sem{\isOf{a}{\alpha}.(B,B)}_{V,V}$, so the result follows from Lemma~\ref{lemma:func_per}.
\end{proof}

\begin{lemma}[Subset elimination]\label{lemma:subsete}
  If 
  \begin{enumerate}
    \item \eqVal{V}{V'}{\subsetty{\isOf{a}{A}}{B}}
  \end{enumerate}
  then \eqVal{V}{V'}{A} and there exists $U,U'$ such that \eqVal{U}{U'}{[V/a]B}.
\end{lemma}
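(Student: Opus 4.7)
The plan is to unfold the assumption using the Subset clause in the definition of $\text{Types}$ and then extract both conclusions essentially by projection, invoking Lemma~\ref{lemma:func_per} to align the PERs that equip $[V/a]B$.

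First, from \eqVal{V}{V'}{\subsetty{\isOf{a}{A}}{B}} I know that $V,V'$ are values and that there is some $\phi$ with \sameType{\subsetty{\isOf{a}{A}}{B}}{\subsetty{\isOf{a}{A}}{B}}{\phi}{\tau_{\omega}} and \sameComp{}{V}{V'}{\phi}; since $V,V'$ are values the latter reduces to $\phi(V,V')$. Unfolding the Subset clause (and using unicity of $\tau_{\omega}$) I obtain $\alpha,\beta$ such that \sameType{A}{A}{\alpha}{\tau_{\omega}}, \sameTypes{\isOf{a}{\alpha}}{B}{B}{\beta}{\tau_{\omega}}, and $\phi = \{(V,V') \mid \alpha(V,V') \land \exists U,U'.\, \beta_{V,V'}(U,U')\}$. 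From $\phi(V,V')$ I can then read off $\alpha(V,V')$ as well as the existence of $U,U'$ with $\beta_{V,V'}(U,U')$; these $U,U'$ are automatically values since all PERs live on $\val \times \val$.

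For the first conclusion, \sameType{A}{A}{\alpha}{\tau_{\omega}} is already in hand, and since $V,V'$ are values $\alpha(V,V')$ is just \sameComp{}{V}{V'}{\alpha}, giving \eqVal{V}{V'}{A}. For the second conclusion, I need a type system fact about $[V/a]B$ whose PER contains $(U,U')$. From $\alpha(V,V')$, PER properties of $\alpha$ yield $\alpha(V,V)$, and functionality of $B$ supplies \sameType{[V/a]B}{[V/a]B}{\beta_{V,V}}{\tau_{\omega}}. Lemma~\ref{lemma:func_per} then identifies $\beta_{V,V} = \beta_{V,V'}$, so $\beta_{V,V'}(U,U')$ gives \sameComp{}{U}{U'}{\beta_{V,V'}} (since $U,U'$ are values), hence \eqVal{U}{U'}{[V/a]B}.

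The main obstacle is the mild bookkeeping around which PER equips $[V/a]B$: the Subset clause provides witnesses in $\beta_{V,V'}$, whereas functionality of $B$ naturally gives the PER $\beta_{V,V}$ for the closed type $[V/a]B$. Lemma~\ref{lemma:func_per} is exactly what bridges this gap; aside from that, the argument is a straightforward unfolding of definitions.
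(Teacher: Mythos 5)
Your proof is correct and follows essentially the same route as the paper's: unfold the Subset clause to obtain $\alpha$, $\beta$, and $\phi$, read off $\alpha(V,V')$ and the witnesses $U,U'$ from $\phi(V,V')$, and then supply a PER for $[V/a]B$. The only (immaterial) difference is at the last step: the paper takes the PER to be $\beta_{V,V'}$ directly, justified by symmetry and transitivity of $\tau_{\omega}$ applied to the functionality instance at $(V,V')$, whereas you go through $\beta_{V,V}$ and invoke Lemma~\ref{lemma:func_per} to identify it with $\beta_{V,V'}$ --- two packagings of the same coherence argument.
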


\begin{proof}
  By assumption, \sameType{\subsetty{\isOf{a}{A}}{B}}{\subsetty{\isOf{a}{A}}{B}}{\phi}{\tau_{\omega}}. This implies there 
  exists $\alpha,\beta$ such that \sameType{A}{A}{\alpha}{\tau_{\omega}} and \sameTypes{\isOf{a}{\alpha}}{B}{B}{\beta}{\tau_{\omega}}.
  Furthermore, we know that $\phi(V,V')$.
  By definition of $\phi$, $\alpha(V,V')$ and $\beta_{V,V'}(U,U')$ for some $U,U'$. 
  Hence we have \eqVal{V}{V'}{A}. For the second part, we first need to show 
  \sameType{[V/a]B}{[V/a]B}{\rho}{\tau_{\omega}} for some $\rho$. 
  Taking $\rho = \beta_{V,V'}$, this follows from 
  \sameTypes{\isOf{a}{\alpha}}{B}{B}{\beta}{\tau_{\omega}} and symmetry and transitivity of $\tau_{\omega}$.
  The fact that \sameComp{}{U}{U'}{\rho} then holds by definition of $\phi$.
\end{proof}
\subsection{Relations}

Some properties about natural numbers can be lifted from the metatheory. Given a relation 
$r : \mathcal{P}(\N \times \N)$, \rel{r}{M}{N} is the internal representation of $r(m,n)$ when 
\eval{M}{\bar{m}} and \eval{N}{\bar{n}}. For the following lemmas, we just prove the binary versions, 
noting that the proof stays the same for any relation $\mathcal{P}(\N^k)$ of
fixed arity $k$. 

As for subsets, the rules for relations cannot be generalized
to arbitrary contexts in the expected way. The introduction rule
(Lemma~\ref{lemma:reli}) involves identifying metatheoretic numbers that
represent closed theoretic natural numbers. This is not possible for open terms
of type \nat. The elimination rule (Lemma~\ref{lemma:rele}) has a similar
problem. As in the case for subsets, we just instantiate the
semantics involving open judgments for relations. 

\begin{lemma}[Relation Formation]\label{lemma:relf}
  If
  \begin{enumerate}
    \item $r : \mathcal{P}(\N \times \N)$
    \item \eqComp{M}{M'}{\nat}
    \item \eqComp{N}{N'}{\nat}
  \end{enumerate}
  then \eqVal{\rel{r}{M}{N}}{\rel{r}{M'}{N'}}{\univ{i}}.
\end{lemma}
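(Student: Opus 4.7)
The plan is to follow the same template as the other formation lemmas (e.g.\ Lemma~\ref{lemma:subsetf}, Lemma~\ref{lemma:eqf}): unfold \sem{\univ{i}} down to a membership claim in $\tau_i$ via the \textsc{Rel2} clause, produce the witness PER, and discharge the side conditions using the assumptions and earlier infrastructure.

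First I would note that by definition of \sem{\univ{i}} it suffices to exhibit some $\phi$ with $\tau_i(\rel{r}{M}{N},\rel{r}{M'}{N'},\phi)$. Since $\tau_i = \text{Types}(v_i,\tau_i)$ and $\text{Rel2}(\tau_i) \subseteq \text{Types}(v_i,\tau_i)$, it is enough to show \textsc{Rel2}$(\tau_i)$ holds of this triple. Inspecting the \textsc{Rel2} clause in Figure~\ref{fig:construction}, the obligations are: (i) $r : \mathcal{P}(\N \times \N)$, which is assumption 1; (ii) \sameComp{}{M}{M'}{\omega} and \sameComp{}{N}{N'}{\omega}; and (iii) that $\phi$ is the designated relation $\{(\triv,\triv) \mid \exists m,n.\, r(m,n) \land \sameComp{}{M}{\bar{m}}{\omega} \land \sameComp{}{N}{\bar{n}}{\omega}\}$, which I simply take as the definition of $\phi$.

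The only real step is (ii). By Lemma~\ref{lemma:natf} we have \sameType{\nat}{\nat}{\alpha}{\tau_{\omega}} for some $\alpha$, and by unicity together with the definition of \textsc{Nat} we get $\alpha = \omega$, so $\sem{\nat} = \omega$. Unfolding the definitions of \eqComp{M}{M'}{\nat} and \eqComp{N}{N'}{\nat} from assumptions 2 and 3 then yields exactly \sameComp{}{M}{M'}{\omega} and \sameComp{}{N}{N'}{\omega}. This discharges (ii).

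There is essentially no obstacle here; the statement is a pure formation rule and the work is routine unfolding. The one subtlety worth flagging is that the \textsc{Rel2} clause is indexed by $\tau$ only nominally — the premises mention $\omega$ rather than $\tau$ — so no appeal to Lemma~\ref{lemma:cumulativity} or open-denotation machinery (Lemma~\ref{lemma:open_sem}) is needed, unlike in Lemma~\ref{lemma:subsetf}. The proof concludes by assembling these facts to conclude \sameType{\rel{r}{M}{N}}{\rel{r}{M'}{N'}}{\phi}{\tau_i}, hence $\phi(\rel{r}{M}{N},\rel{r}{M'}{N'})$ inside \sem{\univ{i}}, which is the desired \eqVal{\rel{r}{M}{N}}{\rel{r}{M'}{N'}}{\univ{i}} (the two subjects being values by the value rules in Figure~\ref{fig:opersem}).
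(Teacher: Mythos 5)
Your proposal is correct and follows essentially the same route as the paper's proof: unfold $\sem{\univ{i}}$ to a $\tau_i$-membership claim, discharge it via the $\text{Rel2}$ clause with the designated $\phi$, and obtain \sameComp{}{M}{M'}{\omega} and \sameComp{}{N}{N'}{\omega} from the assumptions together with unicity (the paper compresses this to ``by assumption and unicity of $\tau_{\omega}$''). Your closing phrase should read that $\sem{\univ{i}}$ relates the two type expressions (with $\phi$ as the witness PER), rather than $\phi$ itself relating them, but this is a notational slip, not a gap.
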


\begin{proof}
  By definition of \sem{\univ{i}},
  we need to show $\tau_i(\rel{r}{M}{N}, \rel{r}{M'}{N'}, \phi)$ for some $\phi$. 
  By definition, $\tau_{i} = \text{Types}(v_{i},\tau_{i})$, so it suffices to show 
  $\text{Rel2}(\tau_{i})(\rel{r}{M}{N}, \rel{r}{M'}{N'}, \phi)$ for some $\phi$.
  By assumption and unicity of $\tau_{\omega}$, we know \sameComp{}{M}{M'}{\omega} and \sameComp{}{N}{N'}{\omega}, 
  so by definition, take $\phi =  \{ (\triv,\triv) \mid \exists m,n.\, r(m,n) \land \sameComp{}{M}{\bar{m}}{\omega} \land \sameComp{}{N}{\bar{n}}{\omega} \}$.
\end{proof}

\begin{lemma}[Relation introduction]\label{lemma:reli}
  If 
  \begin{enumerate}
    \item \eqComp{M}{\bar{m}}{\nat} for some $m$
    \item \eqComp{N}{\bar{n}}{\nat} for some $n$
    \item $r(m,n)$
  \end{enumerate}
  then \eqVal{\triv}{\triv}{\rel{r}{M}{N}}. 
\end{lemma}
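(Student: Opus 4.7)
The plan is to mirror the structure of the proof of Lemma~\ref{lemma:eqi} (Equality introduction): first show that the putative type \rel{r}{M}{N} is well-formed, and then unfold the definition of its membership relation $\phi$ to check that $(\triv,\triv)$ is contained in it.

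First I would apply Lemma~\ref{lemma:relf} to obtain \sameType{\rel{r}{M}{N}}{\rel{r}{M}{N}}{\phi}{\tau_{\omega}} for some $\phi$. To invoke this lemma I need \eqComp{M}{M}{\nat} and \eqComp{N}{N}{\nat}, which follow from the given assumptions \eqComp{M}{\bar{m}}{\nat} and \eqComp{N}{\bar{n}}{\nat} by symmetry and transitivity of the closed judgments (Lemma~\ref{lemma:per}). By construction of \textup{Rel2} and unicity of $\tau_{\omega}$, we have
\[
  \phi = \{(\triv,\triv) \mid \exists m',n'.\, r(m',n') \land \sameComp{}{M}{\bar{m'}}{\omega} \land \sameComp{}{N}{\bar{n'}}{\omega}\}.
\]

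It then remains to verify $\phi(\triv,\triv)$. I would instantiate the existential witnesses with the very $m, n$ provided by the hypotheses. The predicate $r(m,n)$ holds by assumption. For \sameComp{}{M}{\bar{m}}{\omega} and \sameComp{}{N}{\bar{n}}{\omega}, note that \sameType{\nat}{\nat}{\omega}{\tau_{\omega}} (by Lemma~\ref{lemma:natf} together with unicity and the fact that \textup{Nat} is defined with $\phi = \omega$), so unfolding \eqComp{M}{\bar{m}}{\nat} and \eqComp{N}{\bar{n}}{\nat} with the unique denotation $\omega$ delivers exactly what is needed.

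The proof is essentially a matter of unfolding definitions; the only subtlety is the identification of the membership relation for \nat with $\omega$, which is a standard application of unicity and Lemma~\ref{lemma:cumulativity}, as used repeatedly in the preceding lemmas (e.g.\ in the proofs of Lemma~\ref{lemma:nati} and Lemma~\ref{lemma:ffe}). I do not anticipate any real obstacle beyond being careful to cite the unicity/cumulativity step when coercing between denotations at different levels of the hierarchy.
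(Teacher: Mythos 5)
Your proposal is correct and follows essentially the same route as the paper's proof: invoke Lemma~\ref{lemma:relf} to establish well-formedness of $\rel{r}{M}{N}$, pin down $\phi$ by unicity, and witness $\phi(\triv,\triv)$ with the given $m,n$. The extra bookkeeping you supply (deriving the reflexive premises for Lemma~\ref{lemma:relf} via Lemma~\ref{lemma:per}, and identifying $\sem{\nat}$ with $\omega$ by unicity) is exactly what the paper leaves implicit.
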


\begin{proof}
  First, we need to show \sameType{\rel{r}{M}{N}}{\rel{r}{M}{N}}{\phi}{\tau_{\omega}} for some $\phi$, 
  which holds by Lemma~\ref{lemma:relf}. By unicity,
  \[\phi =\{ (\triv,\triv) \mid \exists m,n.\, r(m,n) \land \sameComp{}{M}{\bar{m}}{\omega} \land \sameComp{}{N}{\bar{n}}{\omega} \}\]
  Next, we need to show $\phi(\triv,\triv)$. 
  It suffices to show $r(m,n)$, \sameComp{}{M}{\bar{m}}{\omega}, and \sameComp{}{N}{\bar{n}}{\omega} for some $m,n$, which 
  follows from the assumptions.
\end{proof}

\begin{lemma}[Relation elimination]\label{lemma:rele}
  If 
  \begin{enumerate}
    \item \eqComp{P}{P'}{\rel{r}{M}{N}}
  \end{enumerate}
  then \eqComp{M}{\bar{m}}{\nat}, \eqComp{N}{\bar{n}}{\nat}, and $r(m,n)$ for some $m,n$.
\end{lemma}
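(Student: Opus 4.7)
The plan mirrors the proof of the equality elimination rule (Lemma~\ref{lemma:eqe}). First, I would unfold the hypothesis \eqComp{P}{P'}{\rel{r}{M}{N}} to obtain a PER $\phi$ with \sameType{\rel{r}{M}{N}}{\rel{r}{M}{N}}{\phi}{\tau_{\omega}} and \sameComp{}{P}{P'}{\phi}. The presuppositions of this judgment — namely that $r : \mathcal{P}(\N \times \N)$ and that $M,N$ are in \nat — give, via Lemma~\ref{lemma:relf}, that the Rel2 clause of Types classifies \rel{r}{M}{N} with the specific PER
$$\phi_0 = \{(\triv,\triv) \mid \exists m,n.\, r(m,n) \land \sameComp{}{M}{\bar{m}}{\omega} \land \sameComp{}{N}{\bar{n}}{\omega}\}.$$
By unicity of $\tau_{\omega}$, $\phi = \phi_0$.

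Next, from \sameComp{}{P}{P'}{\phi} I get that $P$ and $P'$ each evaluate to values $V, V'$ with $\phi(V,V')$; the shape of $\phi_0$ forces $V = V' = \triv$ and supplies witnesses $m,n$ with $r(m,n)$, \sameComp{}{M}{\bar{m}}{\omega}, and \sameComp{}{N}{\bar{n}}{\omega}. Since by Lemma~\ref{lemma:natf} and unicity we have $\sem{\nat} = \omega$, the last two facts repackage to \eqComp{M}{\bar{m}}{\nat} and \eqComp{N}{\bar{n}}{\nat}, which together with $r(m,n)$ yield the conclusion.

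The only slightly delicate step is the appeal to unicity to identify $\phi$ with $\phi_0$: one must observe that among the clauses of Types only Rel2 can produce a PER at the pair of values $(\rel{r}{M}{N}, \rel{r}{M}{N})$, which is immediate by inspection of the definition in Figure~\ref{fig:construction}. Everything else is routine unfolding, and the argument goes through uniformly for the $k$-ary variants (in particular Rel3) with no change of substance.
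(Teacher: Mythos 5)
Your proposal is correct and follows essentially the same route as the paper's proof: identify $\phi$ with the Rel2 clause by inspection of the Types operator and unicity of $\tau_{\omega}$, then read off the witnesses $m,n$ from $\phi(\triv,\triv)$. The only cosmetic difference is that the paper obtains \sameComp{}{M}{M}{\omega} directly from the typehood of \rel{r}{M}{N} rather than treating it as a presupposition fed into Lemma~\ref{lemma:relf}, but the substance is identical.
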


\begin{proof}
  By assumption, \sameType{\rel{r}{M}{N}}{\rel{r}{M}{N}}{\phi}{\tau_{\omega}}, which implies 
  \sameComp{}{M}{M}{\omega} and \sameComp{}{N}{N}{\omega}. By unicity, 
  \[\phi =\{ (\triv,\triv) \mid \exists m,n.\, r(m,n) \land \sameComp{}{M}{\bar{m}}{\omega} \land \sameComp{}{N}{\bar{n}}{\omega} \}\]
  So we have \eval{P}{\triv}, \eval{P'}{\triv}, and $\phi(\triv,\triv)$, so the result follows from definition of $\phi$.
\end{proof}
\subsection{Dependent Product}

\begin{lemma}[Sigma formation]\label{lemma:sigmaf}
  If 
  \begin{enumerate}
    \item \eqComp{A}{A'}{\univ{i}}
    \item \openEqComp{\isOf{a}{A}}{B}{B'}{\univ{i}}
  \end{enumerate}
then \eqVal{\sigmaty{\isOf{a}{A}}{B}}{\sigmaty{\isOf{a}{A'}}{B'}}{\univ{i}}
\end{lemma}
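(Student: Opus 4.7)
The plan is to follow the same recipe used for the Subset formation lemma (Lemma~\ref{lemma:subsetf}) and the Equality formation lemma (Lemma~\ref{lemma:eqf}), since Sigma is constructed analogously in Figure~\ref{fig:construction}. First, I would unfold the goal: by definition of $\sem{\univ{i}}$, showing \eqVal{\sigmaty{\isOf{a}{A}}{B}}{\sigmaty{\isOf{a}{A'}}{B'}}{\univ{i}} reduces to exhibiting a $\phi$ such that $\tau_i(\sigmaty{\isOf{a}{A}}{B}, \sigmaty{\isOf{a}{A'}}{B'}, \phi)$. Since $\tau_i = \text{Types}(v_i,\tau_i)$ and $\text{Sigma}(\tau_i) \subseteq \text{Types}(v_i,\tau_i)$, it is enough to prove that the triple lies in $\text{Sigma}(\tau_i)$.

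Unpacking the definition of $\text{Sigma}(\tau_i)$ from Figure~\ref{fig:construction}, I need three ingredients: (i) a relation $\alpha$ with \sameType{A}{A'}{\alpha}{\tau_i}, (ii) a family $\beta$ with \sameTypes{\isOf{a}{\alpha}}{B}{B'}{\beta}{\tau_i}, and (iii) the witness $\phi = \{(\pair{U}{V}, \pair{U'}{V'}) \mid \alpha(U,U') \land \beta_{U,U'}(V,V') \}$. For (i), the first hypothesis \eqComp{A}{A'}{\univ{i}} together with the definition of $\sem{\univ{i}}$ directly yields such an $\alpha$ at level $i$. For (ii), I would apply Lemma~\ref{lemma:open_sem} to the second hypothesis \openEqComp{\isOf{a}{A}}{B}{B'}{\univ{i}}, which provides exactly the dependent family $\beta = \sem{\isOf{a}{\alpha}.(B,B')}$ satisfying \sameTypes{\isOf{a}{\alpha}}{B}{B'}{\beta}{\tau_i}. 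With $\alpha$, $\beta$, and $\phi$ in hand, the clause for Sigma is satisfied and the result follows.

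I do not expect any real obstacle: the construction is essentially identical to Subset formation, only simpler because the $\phi$ for Sigma is given directly by pairs of related components, without the existential witness required for the subset predicate. The one subtle point worth being careful about is ensuring we are working at the right universe level --- in particular that the $\alpha$ obtained from assumption (1) lives in $\tau_i$ (not just $\tau_\omega$), since we need to build membership in $\tau_i$ to conclude membership in $\sem{\univ{i}}$. This is handled uniformly by appealing to the definition of $\sem{\univ{i}}$ as $\{(A,B) \mid \exists \alpha.\, \tau_i(A,B,\alpha)\}$, exactly as in the proof of Lemma~\ref{lemma:subsetf}.
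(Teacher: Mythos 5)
Your proposal is correct and follows essentially the same route as the paper's proof: unfold $\sem{\univ{i}}$, reduce to the $\text{Sigma}(\tau_i)$ clause of $\text{Types}(v_i,\tau_i)$, obtain $\alpha$ from the first hypothesis, obtain $\beta$ via Lemma~\ref{lemma:open_sem}, and take $\phi = \{(\pair{U}{V}, \pair{U'}{V'}) \mid \alpha(U,U') \land \beta_{U,U'}(V,V')\}$. Your remark about staying at level $\tau_i$ rather than $\tau_\omega$ is exactly the right point of care, and matches what the paper does implicitly.
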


\begin{proof}
  By definition of \sem{\univ{i}}, we need to show
  $\tau_i(\sigmaty{\isOf{a}{A}}{B}, \sigmaty{\isOf{a}{A}}{B}, \phi)$ for some $\phi$.
  By definition, $\tau_{i} = \text{Types}(v_{i},\tau_{i})$, so it suffices to show 
  $\text{Sigma}(\tau_{i})(\sigmaty{\isOf{a}{A}}{B},\sigmaty{\isOf{a}{A'}}{B'},\phi)$ for some $\phi$.
  By assumption, we have \sameType{A}{A'}{\alpha}{\tau_{i}} for some $\alpha$. It remains to show 
  \sameTypes{\isOf{a}{\alpha}}{B}{B'}{\beta}{\tau_{i}} for some $\beta$, which exists 
  by Lemma~\ref{lemma:open_sem}. By definition, take 
  $\phi = \{(\pair{V}{U}, \pair{V'}{U'}) \mid \alpha(V,V') \land \beta_{V,V'}(U,U') \}$.
\end{proof}

\begin{lemma}[Sigma introduction]\label{lemma:sigmai}
  If 
  \begin{enumerate}
    \item \isComp{A}{\univ{i}}
    \item \openComp{\isOf{a}{A}}{B}{\univ{i}}
    \item \eqVal{V}{V'}{A}
    \item \eqVal{U}{U'}{[V/a]B}
  \end{enumerate}
  then \eqVal{\pair{V}{U}}{\pair{V'}{U'}}{\sigmaty{\isOf{a}{A}}{B}}.
\end{lemma}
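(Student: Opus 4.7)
The plan is to follow the template established by Lemma~\ref{lemma:subseti}: first exhibit the type system witness for the sigma type, then identify its membership PER via unicity, and finally verify that the pair lies in that PER.

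First, I would invoke Lemma~\ref{lemma:sigmaf} together with Lemma~\ref{lemma:cumulativity} to obtain \sameType{\sigmaty{\isOf{a}{A}}{B}}{\sigmaty{\isOf{a}{A}}{B}}{\phi}{\tau_{\omega}} for some $\phi$. By unicity of $\tau_{\omega}$, $\phi$ must be the relation prescribed by the $\text{Sigma}$ clause, namely
\[
\phi = \{(\pair{W_1}{W_2}, \pair{W_1'}{W_2'}) \mid \sem{A}(W_1,W_1') \land \sem{\isOf{a}{\sem{A}}.(B,B)}_{W_1,W_1'}(W_2,W_2')\}.
\]
Since \pair{V}{U} and \pair{V'}{U'} are already values (both $V,V',U,U'$ are values by assumption), it remains only to check that the pair belongs to $\phi$.

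The left component $\sem{A}(V,V')$ is immediate from assumption~3 and unicity: \eqVal{V}{V'}{A} unfolds to \sameComp{}{V}{V'}{\sem{A}}, which for values means $\sem{A}(V,V')$. For the right component, assumption~4 gives \sameComp{}{U}{U'}{\sem{[V/a]B}}, so I need to identify $\sem{[V/a]B}$ with $\sem{\isOf{a}{\sem{A}}.(B,B)}_{V,V'}$. By Lemma~\ref{lemma:open_sem} applied to assumption~2 we obtain $\beta = \sem{\isOf{a}{\sem{A}}.(B,B)}$ with \sameTypes{\isOf{a}{\sem{A}}}{B}{B}{\beta}{\tau_{\omega}}, so by definition $\beta_{V,V}$ classifies $[V/a]B$; by unicity $\sem{[V/a]B} = \beta_{V,V}$. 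Then Lemma~\ref{lemma:func_per} gives $\beta_{V,V} = \beta_{V,V'}$, yielding $\sem{[V/a]B} = \beta_{V,V'}$, which closes the argument.

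The only subtle step is this last identification of semantic PERs, which is where one must carefully apply functionality; the rest is bookkeeping with unicity and the definition of $\text{Sigma}(\tau_i)$. I anticipate no further difficulty since the pair and its components are already in value form, so no head expansion or evaluation-cost reasoning enters.
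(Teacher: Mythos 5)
Your proof is correct and follows essentially the same route as the paper's: exhibit the witness via Lemma~\ref{lemma:sigmaf}, use unicity to pin down $\phi$, verify the first component directly from assumption~3, and identify $\sem{[V/a]B}$ with $\beta_{V,V'}$ via Lemma~\ref{lemma:func_per}. The only difference is that you make the intermediate appeal to Lemma~\ref{lemma:open_sem} and unicity explicit, which the paper's proof leaves implicit.
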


\begin{proof}
  First, we need to show \sameType{\sigmaty{\isOf{a}{A}}{B}}{\sigmaty{\isOf{a}{A}}{B}}{\phi}{\tau_{\omega}} for some $\phi$.
  This follows from Lemma~\ref{lemma:sigmaf}. By unicity of $\tau_{\omega}$, 
  $\phi = \{(\pair{V}{U}, \pair{V'}{U'}) \mid \sem{A}(V,V') \land \sem{\isOf{a}{\alpha}.(B,B)}_{V,V'}(U,U') \}$.
  Next, we need to show \sameComp{}{\pair{V}{U}}{\pair{V'}{U'}}{\phi}. By assumption, 
  $\sem{A}(V,V')$. It remains to show $\sem{\isOf{a}{\alpha}.(B,B)}_{V,V'}(U,U')$.
  By assumption, $\sem{[V/a]B}(U,U)$, so the result follows from Lemma~\ref{lemma:func_per}.
\end{proof}

\begin{lemma}[Sigma elimination]\label{lemma:sigmae}
  \ \\
  If \eqVal{V}{V'}{\sigmaty{\isOf{a}{A}}{B}}
  then 
  \begin{enumerate}
    \item \eqCComp{\fst{V}}{\fst{V'}}{A}{\bar{1}}
    \item \eqCComp{\snd{V}}{\snd{V'}}{\lsub{\fst{V}}{a}{B}}{\bar{1}}
  \end{enumerate}
\end{lemma}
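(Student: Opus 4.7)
The plan is to unpack the assumption using the \text{Sigma} clause of the type system and then reduce both obligations to applications of head expansion.

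First, I would unfold the hypothesis. Since $\eqVal{V}{V'}{\sigmaty{\isOf{a}{A}}{B}}$, there is some $\phi$ with $\sameType{\sigmaty{\isOf{a}{A}}{B}}{\sigmaty{\isOf{a}{A}}{B}}{\phi}{\tau_{\omega}}$ and $\phi(V,V')$. By the \text{Sigma} clause together with unicity of $\tau_{\omega}$, we have $\phi = \{(\pair{U}{W},\pair{U'}{W'}) \mid \alpha(U,U') \land \beta_{U,U'}(W,W')\}$ for $\alpha = \sem{A}$ and $\beta = \sem{\isOf{a}{\alpha}.(B,B)}$. Since $V,V'$ are values related by $\phi$, they must be of the form $V = \pair{U_1}{U_2}$ and $V' = \pair{U_1'}{U_2'}$ with $\alpha(U_1,U_1')$ and $\beta_{U_1,U_1'}(U_2,U_2')$.

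For part (1), the operational semantics gives $\stepIn{\fst{V}}{1}{U_1}$ and $\stepIn{\fst{V'}}{1}{U_1'}$. From $\alpha(U_1,U_1')$ we have $\eqVal{U_1}{U_1'}{A}$, and by the Conversion proposition this is equivalent to $\eqCComp{U_1}{U_1'}{A}{\zero}$. Applying Lemma~\ref{lemma:headexp} part (4) with $c_1 = c_2 = 1$ and $Q = \bar{1}$ (noting $\eval{\bar{1} \hat{+} \zero}{\bar{1}}$ and invoking part (5) to rewrite the cost), we conclude $\eqCComp{\fst{V}}{\fst{V'}}{A}{\bar{1}}$.

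For part (2), the same reasoning applies on the term side: $\stepIn{\snd{V}}{1}{U_2}$ and $\stepIn{\snd{V'}}{1}{U_2'}$, and $\beta_{U_1,U_1'}(U_2,U_2')$. The subtle point is matching the target type $\lsub{\fst{V}}{a}{B}$, so I would first argue that $\beta_{U_1,U_1'} = \sem{[U_1/a]B}$: by Lemma~\ref{lemma:func_per}, $\beta_{U_1,U_1'} = \beta_{U_1,U_1}$, and $\beta_{U_1,U_1}$ agrees with $\sem{[U_1/a]B}$ by unicity applied to $\sameTypes{\isOf{a}{\alpha}}{B}{B}{\beta}{\tau_{\omega}}$. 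Hence $\eqVal{U_2}{U_2'}{[U_1/a]B}$, which by Conversion and Lemma~\ref{lemma:headexp} part (4) yields $\eqCComp{\snd{V}}{\snd{V'}}{[U_1/a]B}{\bar{1}}$. Finally, the type $\lsub{\fst{V}}{a}{B}$ reduces to $[U_1/a]B$ via $\fst{\pair{U_1}{U_2}} \mapsto U_1$ followed by the let-reduction rule, so Lemma~\ref{lemma:headexp} part (1) gives $\eqType{\lsub{\fst{V}}{a}{B}}{[U_1/a]B}$, and Lemma~\ref{lemma:resp_eq} transports the cost-aware membership back across this type equality.

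The main obstacle I anticipate is the bookkeeping in part (2): the target type mentions $\fst{V}$ rather than the component $U_1$, so I must carefully combine functionality (Lemma~\ref{lemma:func_per}) with head expansion at the type level (Lemma~\ref{lemma:headexp}(1)) and then transport membership along that type equality (Lemma~\ref{lemma:resp_eq}). Part (1) is more or less immediate once the pair structure of $V$ and $V'$ is exposed.
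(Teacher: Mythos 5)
Your proposal is correct and follows essentially the same route as the paper's proof: unfold the Sigma clause via unicity to expose the pair structure of $V$ and $V'$, reduce each projection by head expansion, and reconcile $\lsub{\fst{V}}{a}{B}$ with $[U_1/a]B$ using functionality and unicity. The only cosmetic difference is the cost bookkeeping --- the paper applies the relation-level head expansion to obtain $\suc{\zero}$ directly, whereas you route through Lemma~\ref{lemma:headexp}(4) with $Q \hat{+} P$ and then replacement --- which changes nothing substantive.
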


\begin{proof}
  By assumption, \sameType{\sigmaty{\isOf{a}{A}}{B}}{\sigmaty{\isOf{a}{A}}{B}}{\phi}{\tau_{\omega}}. 
  This implies there exists $\alpha,\beta$ such that \sameType{A}{A}{\alpha}{\tau_{\omega}} and 
  \sameTypes{\isOf{a}{\alpha}}{B}{B}{\beta}{\tau_{\omega}}. By unicity, 
  $\phi = \{(\pair{V}{U}, \pair{V'}{U'}) \mid \alpha(V,V') \land \beta_{V,V'}(U,U') \}$.
  Furthermore, we have $\phi(V,V')$. This means 
  \begin{enumerate}
    \item $V = \pair{V_1}{V_2}$
    \item $V' = \pair{V_1'}{V_2'}$
    \item $\alpha(V_1,V_1')$ \label{fact:3}
    \item $\beta_{V_1,V_1'}(V_2,V_2')$ \label{fact:4}
  \end{enumerate}
  For part 1, it suffices to show \sameCComp{}{\fst{V}}{\fst{V'}}{\alpha}{\bar{1}}. First 
  \sameComp{}{\bar{1}}{\bar{1}}{\omega} by Lemma~\ref{lemma:nati} and unicity.
  We know that 
  \begin{align*}
    \fst{\pair{V_1}{V_2}} &\mapsto V_1\\
    \fst{\pair{V_1'}{V_2'}} &\mapsto V_1'
  \end{align*}
  so by Lemma~\ref{lemma:headexp1}, it suffices to show \sameCComp{}{V_1}{V_1'}{\alpha}{\zero}, 
  which holds by~(\ref{fact:3}).

  For part 2, by Lemma~\ref{lemma:headexp}, it suffices to show 
  \eqCComp{\snd{V}}{\snd{V'}}{[V_1/a]B}{\bar{1}}.
  First, we need to show \sameType{[V_1/a]B}{[V_1/a]B}{\rho}{\tau_{\omega}} for some $\rho$.
   Taking $\rho = \beta_{V_1,V_1'}$, this follows from 
  \sameTypes{\isOf{a}{\alpha}}{B}{B}{\beta}{\tau_{\omega}} and symmetry and transitivity of $\tau_{\omega}$.
  Now it remains to show \sameCComp{}{\snd{V}}{\snd{V'}}{\rho}{\bar{1}}. 
  As above, we have \sameComp{}{\bar{1}}{\bar{1}}{\omega} by Lemma~\ref{lemma:nati} and unicity. 
  We know that 
  \begin{align*}
    \snd{\pair{V_1}{V_2}} &\mapsto V_2\\
    \snd{\pair{V_1'}{V_2'}} &\mapsto V_2'
  \end{align*}
  so by Lemma~\ref{lemma:headexp1}, it suffices to show \sameCComp{}{V_2}{V_2'}{\rho}{\zero}. 
  This follows from~(\ref{fact:4}), provided that $\rho = \beta_{V_1,V_1'}$, which follows by definition of $\rho$.
\end{proof}
\subsection{Funtime}

Now we have all the ingredients to introduce the new dependent function type
constructor ``funtime'' that internalizes the complexity-aware membership
judgment \openCComp{\Gamma}{M}{A}{P}.

\begin{lemma}[Funtime formation]\label{lemma:pitimef}
  If 
  \begin{enumerate}
    \item \eqComp{A}{A'}{\univ{i}}
    \item \openEqComp{\isOf{a}{A}}{B}{B'}{\univ{i}}
    \item \openEqComp{\isOf{a}{A}}{P}{P'}{\nat}
  \end{enumerate}
  then
  \eqVal{\arrtimev{\isOf{a}{A}}{B}{P}}{\arrtimev{\isOf{a}{A'}}{B'}{P'}}{\univ{i}}.
\end{lemma}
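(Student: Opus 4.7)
The plan is to follow the same template as Lemma~\ref{lemma:sigmaf} and Lemma~\ref{lemma:subsetf}, augmented with the cost clause. First I would reduce the goal via $\sem{\univ{i}}$: by definition and by Lemma~\ref{lemma:closed_sem}, showing \eqVal{\arrtimev{\isOf{a}{A}}{B}{P}}{\arrtimev{\isOf{a}{A'}}{B'}{P'}}{\univ{i}} amounts to exhibiting a $\phi$ with $\tau_i(\arrtimev{\isOf{a}{A}}{B}{P},\arrtimev{\isOf{a}{A'}}{B'}{P'},\phi)$. Since $\tau_i = \text{Types}(v_i,\tau_i)$ and $\text{Funtime}(\tau_i) \subseteq \text{Types}(v_i,\tau_i)$, it suffices to inhabit the $\text{Funtime}(\tau_i)$ clause for these two funtime types.

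Reading off the three conjuncts of $\text{Funtime}$ in Figure~\ref{fig:construction}, I would supply the witnesses as follows. Assumption~1 together with the definition of $\sem{\univ{i}}$ yields an $\alpha$ with \sameType{A}{A'}{\alpha}{\tau_i}, discharging the first conjunct. Lemma~\ref{lemma:open_sem} applied to $\alpha$ and assumption~2 produces a family $\beta$ with \sameTypes{\isOf{a}{\alpha}}{B}{B'}{\beta}{\tau_i}, discharging the second. The third conjunct, \sameComp{\isOf{a}{\alpha}}{P}{P'}{\omega}, follows from assumption~3 by unfolding \openEqComp{\isOf{a}{A}}{P}{P'}{\nat} at each value pair in $\alpha$ and using unicity together with Lemma~\ref{lemma:natf} to identify $\sem{\nat}$ with $\omega$. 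I would then take $\phi$ to be precisely the set of pairs of recursive lambdas prescribed in the $\text{Funtime}$ clause, parameterised by these $\alpha$, $\beta$, and $P$.

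No step looks genuinely hard: this is a pure formation lemma and its content is essentially bookkeeping over the construction in Figure~\ref{fig:construction}. The one mild subtlety is keeping the universe index consistent — all witnesses must be produced at level $i$ rather than $\omega$, which is exactly what Lemma~\ref{lemma:open_sem} and the definition of $\sem{\univ{i}}$ deliver — so no appeal to Lemma~\ref{lemma:cumulativity} is required inside the argument itself. The cost clause does not introduce any new obligation on $\phi$ beyond functionality of $P$ in $a$, because the complexity constraint on function bodies is baked into the definition of the membership relation $\phi$ rather than into the type-equality side conditions; this is what makes the proof essentially a routine extension of the Sigma formation case.
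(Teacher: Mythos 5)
Your proposal is correct and follows essentially the same route as the paper: reduce to the $\text{Funtime}(\tau_i)$ clause, obtain $\alpha$ from assumption~1, $\beta$ from Lemma~\ref{lemma:open_sem}, the cost conjunct from assumption~3 via unicity against $\omega$, and read off $\phi$ from the clause. The only cosmetic difference is that the paper discharges the cost conjunct by citing Lemma~\ref{lemma:open_comp} (obtaining some $\rho$ with $\rho=\omega$ by unicity) where you inline the same pointwise unfolding.
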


\begin{proof}
  By definition of \sem{\univ{i}}, 
  we need to show $\tau_i(\arrtimev{\isOf{a}{A}}{B}{P},
  \arrtimev{\isOf{a}{A'}}{B'}{P'}, \phi)$ for some $\phi$.
  Since $\tau_{i} = \text{Types}(v_{i},\tau_{i})$, it suffices to show 
  $\text{Funtime}(\tau_i)(\arrtimev{\isOf{a}{A}}{B}{P}, \arrtimev{\isOf{a}{A'}}{B'}{P'}, \phi)$ for some $\phi$.
  By assumption, \sameType{A}{A'}{\alpha}{\tau_{i}} for some $\alpha$. By Lemma~\ref{lemma:open_sem}, 
  \sameTypes{\isOf{a}{\alpha}}{B}{B'}{\beta}{\tau_{i}} for some $\beta$. By Lemma~\ref{lemma:open_comp},
  \sameTypes{\isOf{a}{\alpha}}{\nat}{\nat}{\rho}{\tau_{\omega}} and \sameComp{\isOf{a}{\alpha}}{P}{P'}{\rho}.  
  By unicity $\rho = \omega$. Then by definition, we can pick 
  $\phi = \{(\fun{f}{a}{N},\fun{f'}{a'}{N'}) \mid \sameCComp{\isOf{a}{\alpha}}{[\fun{f}{a}{N}/f]N}{[\fun{f}{a}{N'}/f]N'}{\beta}{P}{} \}$.  
\end{proof}

Before we prove the introduction rule, we will need some auxiliary definitions: 
\begin{enumerate}
  \item Let the domain of a PER $\alpha$ be $\dom{\alpha} = \{V \mid \alpha(V,V)\}$.
  \item A complexity description \sameComp{\isOf{a}{\alpha}}{P}{P}{\omega} induces a measure 
    $m_{\isOf{a}{\alpha}.P} : \dom{\alpha} \to \N$, defined as $m_{\isOf{a}{\alpha}.P}(V) = p$, where \eval{[V/a]P}{\bar{p}}.
  \item Given \sameComp{\isOf{a}{\alpha}}{P}{P}{\omega} and $V_1,V_2 : \dom{\alpha}$, write $\ltt{V_1}{\isOf{a}{\alpha}.P}{V_2}$
  when $m_{\isOf{a}{\alpha}.P} V_1 < m_{\isOf{a}{\alpha}.P} V_2$.  
\end{enumerate}

\begin{lemma}
  Given a set $A$ and a function $f : A \to \N$, $(A,\prec)$ is well-founded,
  where $a \prec b \iff f(a) < f(b)$.
\end{lemma}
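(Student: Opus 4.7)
The plan is to reduce well-foundedness of $(A,\prec)$ to well-foundedness of $(\N,<)$, which is a standard fact of the ambient metatheory. I will use the accessibility-predicate formulation, since it is the form most likely to be used downstream in the funtime introduction rule: $(A,\prec)$ is well-founded exactly when every $a \in A$ is accessible, where $a$ is accessible iff every $\prec$-predecessor of $a$ is accessible.

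First, I would prove the auxiliary claim that for every $n \in \N$ and every $a \in A$ with $f(a) = n$, $a$ is accessible under $\prec$. The proof is by strong induction on $n$. In the inductive step, suppose the claim holds for every $m < n$, and fix $a \in A$ with $f(a) = n$. To show $a$ is accessible, pick an arbitrary $a' \in A$ with $a' \prec a$; unfolding the definition of $\prec$ gives $f(a') < f(a) = n$, so by the induction hypothesis applied at $m = f(a')$, $a'$ is accessible. Since $a'$ was arbitrary, every $\prec$-predecessor of $a$ is accessible, and hence $a$ is accessible.

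With the auxiliary claim in hand, the conclusion is immediate: for any $a \in A$, instantiate the claim at $n = f(a)$ to obtain that $a$ is accessible, and therefore $(A,\prec)$ is well-founded.

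There is no serious obstacle here; the result is essentially ``pullback of a well-founded order along any function is well-founded,'' and well-foundedness of $(\N,<)$ is presupposed. The only mild subtlety is picking the right formulation of well-foundedness to match how the lemma will be invoked in the proof of the funtime introduction rule (Lemma~\ref{lemma:pitimei}), where the induced ordering $\ltt{\cdot}{\isOf{a}{\alpha}.P}{\cdot}$ on $\dom{\alpha}$ must support recursion on accessible elements; the accessibility formulation above is exactly what is needed.
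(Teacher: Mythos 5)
Your proof is correct and uses the same essential idea as the paper's: strong induction on the natural number $f(\cdot)$ to pull well-foundedness back from $(\N,<)$ along $f$. The only difference is cosmetic --- you phrase well-foundedness via the accessibility predicate while the paper quantifies over arbitrary inductive subsets $S \subseteq A$ --- and these formulations are interderivable, so nothing is gained or lost.
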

\begin{proof}
  Suppose $S \subseteq A$ is inductive: $\forall x \in A, (\forall y \in A, y \prec x \implies y
  \in S) \implies x \in S$. It suffices to show $S = A$, and hence it suffices
  to show $A \subseteq S$. 
  Suppose $x \in A$. By assumption, it suffices to show 
  \[\forall y \in A, y \prec x \implies y \in S \iff \forall y \in A, f(y) < f(x)
  \implies y \in S\]  
  Proceed by induction on $f(x)$. If $f(x) = 0$, then the result holds
  vacuously. Otherwise, let $f(x) = m$, and suppose $y \in A$ and $f(y) < m$. We need to show $y \in
  S$. By the assumption that $S$ inductive, it suffices to show 
  \[
    \forall z \in A, f(z) < f(y) \implies z \in S  
  \]
  which holds by induction.
\end{proof}

\begin{corollary}
  Given \sameComp{\isOf{a}{\alpha}}{P}{P}{\omega}, $(\dom{\alpha}, \lessdot_{\isOf{a}{\alpha}.P})$ is a well-founded set.
\end{corollary}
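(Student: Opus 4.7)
The plan is to observe that this corollary is essentially a direct instantiation of the preceding lemma about well-foundedness of orderings induced by functions into $\N$. Concretely, I would take $A = \dom{\alpha}$ and $f = m_{\isOf{a}{\alpha}.P}$, so that the ordering $\lessdot_{\isOf{a}{\alpha}.P}$ on $\dom{\alpha}$ is exactly the ordering $\prec$ on $A$ induced by $f$. Then the previous lemma immediately gives well-foundedness.

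The only thing that requires a small check is that $m_{\isOf{a}{\alpha}.P}$ is actually a well-defined function $\dom{\alpha} \to \N$. This requires that for each $V \in \dom{\alpha}$, the term $[V/a]P$ evaluates to a unique numeral $\bar{p}$. Existence follows from the hypothesis \sameComp{\isOf{a}{\alpha}}{P}{P}{\omega}: instantiating the functionality condition at the pair $(V,V)$ (which is in $\alpha$ precisely because $V \in \dom{\alpha}$) yields \sameComp{}{[V/a]P}{[V/a]P}{\omega}, and by the definition of $\omega$ this gives \eval{[V/a]P}{\bar{p}} for some $p$. Uniqueness of $p$ follows from determinism of the step relation, since numerals are values.

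Having established well-definedness of $m_{\isOf{a}{\alpha}.P}$, the corollary follows by applying the preceding lemma with $A = \dom{\alpha}$ and $f = m_{\isOf{a}{\alpha}.P}$; the ordering $\lessdot_{\isOf{a}{\alpha}.P}$ agrees with the induced $\prec$ by definition. No real obstacle is expected here, since the heavy lifting is done by the previous general lemma and by the determinism/functionality properties already established; the main thing to get right is simply citing the correct hypotheses when verifying that $m_{\isOf{a}{\alpha}.P}$ is total and single-valued on $\dom{\alpha}$.
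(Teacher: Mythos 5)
Your proposal is correct and matches the paper's (implicit) argument exactly: the corollary is intended as a direct instantiation of the preceding lemma with $A = \dom{\alpha}$ and $f = m_{\isOf{a}{\alpha}.P}$, since $\lessdot_{\isOf{a}{\alpha}.P}$ is by definition the ordering induced by that measure. Your additional check that the measure is total and single-valued on $\dom{\alpha}$ is a reasonable point of care, though the paper already folds it into the definition of $m_{\isOf{a}{\alpha}.P}$ via the functionality hypothesis.
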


As foreshadowed in the introduction, the proof for the introduction rule
proceeds by induction on the well-founded ordering on the inputs to the
recursive calls. We conjecture that that this rule is sufficient for verifying all
first-order functions. Curried functions and closures present problems in some
situations because the first argument does not necessarily decrease in a partial
function application. One possible solution to this would be to define an
ordering on the resulting closure and demand that partial applications induce a
``smaller'' closure in an appropriate sense. We leave this for future explorations. 

\begin{lemma}[Funtime introduction]\label{lemma:pitimei}
  If 
  \begin{enumerate}
    \item \eqComp{A}{A}{\univ{i}}
    \item \openEqComp{\isOf{a}{A}}{B}{B}{\univ{i}}
    \item \openEqComp{\isOf{a}{A}}{P}{P}{\nat}
    \item \openEqCComp{\isOf{a}{A},\isOf{f}{\arrtimev{\isOf{a}{\subsetty{\isOf{a'}{A}}{\rel{<}{[a'/a]P}{P}}}}{B}{P}}}{N}{N'}{B}{P}
  \end{enumerate}
  then \eqVal{\fun{a}{f}{N}}{\fun{a}{f}{N'}}{\arrtimev{\isOf{a}{A}}{B}{P}}.
\end{lemma}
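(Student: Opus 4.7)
The plan is to apply Lemma~\ref{lemma:pitimef} to hypotheses 1--3 in order to obtain $\sameType{\arrtimev{\isOf{a}{A}}{B}{P}}{\arrtimev{\isOf{a}{A}}{B}{P}}{\phi}{\tau_{\omega}}$; by unicity and the Funtime clause, $\phi$ relates $\fun{f}{a}{N}$ and $\fun{f}{a}{N'}$ exactly when for every $V, V'$ with $\alpha(V, V')$ (where $\alpha = \sem{A}$),
\[
\sameCComp{}{[\fun{f}{a}{N}/f][V/a]N}{[\fun{f}{a}{N'}/f][V'/a]N'}{\beta_{V,V'}}{[V/a]P}{}
\]
with $\beta = \sem{\isOf{a}{\alpha}.(B,B)}$. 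The task thus reduces to verifying this cost-bounded equality at every such pair $(V, V')$.

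For this I would proceed by well-founded induction on $V$ with respect to the measure $m_{\isOf{a}{\alpha}.P}$, which is well-founded by the corollary immediately preceding the lemma. The induction hypothesis supplies the above equality for every $(U, U')$ with $\alpha(U, U')$ and $m_{\isOf{a}{\alpha}.P}(U) < m_{\isOf{a}{\alpha}.P}(V)$.

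The main obstacle is establishing that the self-references $\fun{f}{a}{N}$ and $\fun{f}{a}{N'}$ inhabit the \emph{restricted} funtime type demanded by hypothesis 4 at $a := V$, whose domain is the subset $\subsetty{\isOf{a'}{A}}{\rel{<}{[a'/a]P}{[V/a]P}}$. Unfolding Funtime and Subset and then invoking Lemmas~\ref{lemma:subseti},~\ref{lemma:relf},~\ref{lemma:reli},~\ref{lemma:rele}, membership of the self-reference reduces to: for every $(U, U')$ with $\alpha(U, U')$ together with a witness of $\rel{<}{[U/a]P}{[V/a]P}$, the displayed cost-bounded equality holds at $(U, U')$. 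By Lemma~\ref{lemma:rele}, such a witness is precisely $m_{\isOf{a}{\alpha}.P}(U) < m_{\isOf{a}{\alpha}.P}(V)$, so the induction hypothesis applies directly.

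With the restricted funtime membership in hand, I would instantiate hypothesis 4 with $a := V, V'$ (justified by $\alpha(V, V')$) and $f := \fun{f}{a}{N}, \fun{f}{a}{N'}$ to obtain $\eqCComp{[\fun{f}{a}{N}/f][V/a]N}{[\fun{f}{a}{N'}/f][V'/a]N'}{[V/a]B}{[V/a]P}$. Lemmas~\ref{lemma:closed_sem} and~\ref{lemma:func_per} then identify the underlying PER with $\beta_{V,V'}$, closing the induction step. Aside from the self-reference step, the chief bookkeeping burden will be unwinding the nested Subset/Funtime substitutions correctly under the shadowed binder $a$, and repeatedly invoking Lemma~\ref{lemma:func_per} to align PERs at varying substitution instances.
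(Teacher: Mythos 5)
Your proposal matches the paper's proof essentially step for step: Lemma~\ref{lemma:pitimef} plus unicity to pin down $\phi$, well-founded induction on the measure $m_{\isOf{a}{\alpha}.P}$ induced by $P$, establishing membership of the self-references in the restricted funtime type by unwinding the subset/relation semantics so that the ordering witness is exactly the measure decrease needed to invoke the induction hypothesis, and finally instantiating hypothesis 4 with Lemma~\ref{lemma:func_per} to align the PERs. The only cosmetic difference is that the paper unfolds the Subset and Rel semantics directly via unicity where you cite the introduction/elimination lemmas, but the content is the same.
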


\begin{proof}
  First, we need to show \sameType{\arrtimev{\isOf{a}{A}}{B}{P}}{\arrtimev{\isOf{a}{A}}{B}{P}}{\phi}{\tau_{\omega}} for some $\phi$.
  This follows from Lemma~\ref{lemma:pitimef}. By unicity, we have that 
  \[\phi =\{(\fun{f}{a}{N},\fun{f'}{a'}{N'}) \mid \sameCComp{\isOf{a}{\sem{A}}}{[\fun{f}{a}{N}/f]N}{[\fun{f}{a}{N'}/f]N'}{\sem{\isOf{a}{\alpha}.(B,B)}}{P} \}\] 
  Next, we need to show $\phi(\fun{a}{f}{N}, \fun{a}{f}{N'})$. It suffices to show 
  \[
    \sameCComp{\isOf{a}{\sem{A}}}{[\fun{a}{f}{N}/f]N}{[\fun{a}{f}{N'}/f]N'}{\sem{\isOf{a}{\alpha}.(B,B)}}{P}
  \]
  Let $\alpha = \sem{A}$, $\beta = \sem{\isOf{a}{\alpha}.(B,B)}$, $F = \fun{a}{f}{N}$ and $F' = \fun{a}{f}{N'}$.
  Let $\mathbb{P} : \dom{\alpha} \to 2$ be define as $\mathbb{P}(V)$ iff
  \[\forall V' : \dom{\alpha}, \alpha(V,V') \implies \sameCComp{}{[V/a,F/f]N}{[V'/a,F'/f]N'}{\beta_{V,V'}}{[V/a]P}\]
  Hence it suffices to show $\mathbb{P}(V)$ for all $V : \dom{\alpha}$. Proceed by induction on $(\dom{\alpha}, \lessdot_{\isOf{a}{\alpha}.P})$.
  Let $V : \dom{\alpha}$. Suppose that $\mathbb{P}(U)$ for all $\ltt{U}{\isOf{a}{\alpha}.P}{V}$. We need to show 
  $\mathbb{P}(V)$. Let $V' : \dom{\alpha}$ such that $\alpha(V,V')$. We need to show 
  \sameCComp{}{[V/a,F/f]N}{[V'/a,F'/f]N'}{\beta_{V,V'}}{[V/a]P}. Since $\beta_{V,V'} = \sem{[V/a]B}$ by Lemma~\ref{lemma:func_per},
  the result follows from assumption 4, given that we show 
  \[\eqVal{\fun{a}{f}{N}}{\fun{a}{f}{N'}}{\arrtimev{\isOf{a}{\subsetty{\isOf{a'}{A}}{\rel{<}{[a'/a]P}{[V/a]P}}}}{B}{P}}\]
  First, we need to show there exists some $\rho$ such that
  \[\sameType{\arrtimev{\isOf{a}{\subsetty{\isOf{a'}{A}}{\rel{<}{[a'/a]P}{[V/a]P}}}}{B}{P}}{\arrtimev{\isOf{a}{\subsetty{\isOf{a'}{A}}{\rel{<}{[a'/a]P}{[V/a]P}}}}{B}{P}}{\rho}{\tau_{\omega}}\]
  Let $A' =\subsetty{\isOf{a'}{A}}{\rel{<}{[a'/a]P}{[V/a]P}}$. 
  By Lemma~\ref{lemma:pitimef}, it suffices to show 
  \begin{enumerate}
    \item \eqComp{A'}{A'}{\univ{i}}
    \item \openEqComp{\isOf{a}{A'}}{B}{B}{\univ{i}}
    \item \openEqComp{\isOf{a}{A'}}{P}{P}{\nat}
  \end{enumerate}
  
  \noindent\underline{\eqComp{A'}{A'}{\univ{i}}}. We need to show that there exists some $\sigma$ such that 
  \[
    \sameType{\subsetty{\isOf{a'}{A}}{\rel{<}{[a'/a]P}{P}}}{\subsetty{\isOf{a'}{A}}{\rel{<}{[a'/a]P}{P}}}{\sigma}{\tau_{\omega}}
  \]
  By Lemma~\ref{lemma:subsetf}, it suffices to show 
  \begin{enumerate}
    \item \eqComp{A}{A}{\univ{i}}
    \item \openEqComp{\isOf{a'}{A}}{\rel{<}{[a'/a]P}{[V/a]P}}{\rel{<}{[a'/a]P}{[V/a]P}}{\univ{i}}
  \end{enumerate}
  The former is assumption 1. For the latter, by Lemma~\ref{lemma:relf}, it suffices to show 
  \begin{enumerate}
    \item \openEqComp{\isOf{a'}{A}}{[a'/a]P}{[a'/a]P}{\nat} 
    \item \openEqComp{\isOf{a'}{A}}{[V/a]P}{[V/a]P}{\nat}
  \end{enumerate}
  Both follow from assumption 3.
  By Lemma~\ref{lemma:open_sem}, let the open semantics of \openEqType{\isOf{a'}{A}}{\rel{<}{[a'/a]P}{[V/a]P}}{\rel{<}{[a'/a]P}{[V/a]P}}
  be 
  \[
    \kappa = \sem{\isOf{a'}{\alpha}.(\rel{<}{[a'/a]P}{[V/a]P},\rel{<}{[a'/a]P}{[V/a]P})}
  \] 
  By Lemma~\ref{lemma:func_per} and unicity, we have the following for all $\alpha(U,U')$:
  \begin{align*}
    \kappa_{U,U'} &= \kappa_{U,U}\\
    & = \sem{\rel{<}{[U/a]P}{[V/a]P}} \\
    & = \{ (\triv,\triv) \mid \exists m,n.\, m < n \land \sameComp{}{[U/a]P}{\bar{m}}{\omega} \land \sameComp{}{[V/a]P}{\bar{n}}{\omega} \}
  \end{align*}
  Hence by unicity, we know the semantics $\sigma$ of the subset type must be the following:
  \[
    \sigma = \{(U,U') \mid \alpha(U,U') \land \exists W,W'.\kappa_{U,U'}(W,W')\}
  \]
  \noindent\underline{\openEqType{\isOf{a}{A'}}{B}{B}}. Suppose \eqVal{U}{U'}{A'}. We need to show 
  \eqType{[U/a]B}{[U'/a]B}. This follows from the assumption 2, given that we know \eqVal{U}{U'}{A}. 
  Note that $\sem{A'} = \sigma$ and $\sem{A} = \alpha$, and $\sigma(U,U') \implies \alpha(U,U')$ by definition of $\sigma$.
  Hence we have \eqVal{U}{U'}{A}.\\

  \noindent\underline{\openEqComp{\isOf{a}{A'}}{P}{P}{\nat}}. Similar to above.\\

  Similarly, by unicity, we know the semantics $\rho$ of the funtime type must be the following:
  \[
    \rho = \{(\fun{f}{a}{N},\fun{f'}{a'}{N'}) \mid \sameCComp{\isOf{a}{\sigma}}{[F/f]N}{[F'/f]N'}{\beta}{P} \}
  \]
  Lastly, we need to show 
  \[
      \rho(\fun{a}{f}{N}, \fun{a}{f}{N'})
  \]
  It suffices to show 
  \[
    \sameCComp{\isOf{a}{\sigma}}{[F/f]N}{[F'/f]N'}{\beta}{P}  
  \]
  Let $\sigma(U,U')$. We need to show 
  \sameCComp{}{[U/a,F/f]N}{[U'/a,F'/f]N'}{\beta_{U,U'}}{[U/a]P}. 
  Note this follows from the assumption that $\mathbb{P}(U)$, given that we show \ltt{U}{\isOf{a}{\alpha}.P}{V}. 
  Let \eval{[U/a]P}{\bar{p'}} and \eval{[V/a]P}{\bar{p}}. It suffices to show $p' < p$. 
  By definition of $\sigma$, we know there exists $W,W'$ such that $\kappa_{U,U'}(W,W')$. By definition of $\kappa$, 
  this means there are $m,n$ such that 
  \begin{enumerate}
    \item $m < n$
    \item \sameComp{}{[U/a]P}{\bar{m}}{\omega}
    \item \sameComp{}{[V/a]P}{\bar{n}}{\omega}
  \end{enumerate}
  Hence $m = p'$, $n = p$, and $p' < p$.
\end{proof}

\begin{lemma}[Funtime Elimination]\label{lemma:pitimee}
  If 
  \begin{enumerate}
    \item \eqVal{\fun{a}{f}{N}}{\fun{a}{f}{N'}}{\arrtimev{\isOf{a}{A}}{B}{P}}
    \item \eqVal{V}{V'}{A}
  \end{enumerate}
  then \eqCComp{\ap{\fun{a}{f}{N}}{V}}{\ap{\fun{a}{f}{N'}}{V'}}{[V/a]B}{\suc{[V/a]P}}.
\end{lemma}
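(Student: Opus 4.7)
The plan is to unfold the Funtime PER from Figure~\ref{fig:construction} on assumption~1 to expose the cost-bounded functionality of the function bodies, specialize to the pair $V,V'$ supplied by assumption~2, and then absorb the single $\beta$-reduction on each side into the cost bound via head expansion.

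Concretely, by Lemma~\ref{lemma:pitimef} together with unicity of $\tau_\omega$, the canonical PER $\phi$ of $\arrtimev{\isOf{a}{A}}{B}{P}$ is the one prescribed by the Funtime clause, namely
\[
\phi = \{(\fun{f}{a}{N_1},\fun{f}{a}{N_2}) \mid \sameCComp{\isOf{a}{\alpha}}{[\fun{f}{a}{N_1}/f]N_1}{[\fun{f}{a}{N_2}/f]N_2}{\beta}{P}{} \},
\]
where $\alpha = \sem{A}$ and $\beta = \sem{\isOf{a}{\alpha}.(B,B)}$. Setting $F = \fun{a}{f}{N}$ and $F' = \fun{a}{f}{N'}$, assumption~1 yields $\phi(F,F')$, which unfolds to $\sameCComp{\isOf{a}{\alpha}}{[F/f]N}{[F'/f]N'}{\beta}{P}{}$. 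Assumption~2 supplies $\alpha(V,V')$, and specialization gives $\sameCComp{}{[V/a,F/f]N}{[V'/a,F'/f]N'}{\beta_{V,V'}}{[V/a]P}{}$.

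Next, the operational rule for application of a recursive function to a value gives $\stepIn{\ap{F}{V}}{1}{[F/f,V/a]N}$ and symmetrically $\stepIn{\ap{F'}{V'}}{1}{[F'/f,V'/a]N'}$. I then invoke Lemma~\ref{lemma:headexp1}(4) with $Q = \bar{1}$ and $c_1 = c_2 = 1$ to fold both $\beta$-steps into the cost bound, obtaining $\sameCComp{}{\ap{F}{V}}{\ap{F'}{V'}}{\beta_{V,V'}}{\bar{1} \hat{+} [V/a]P}{}$. Two reconciliations finish the argument: first, Lemma~\ref{lemma:func_per} combined with Lemma~\ref{lemma:open_sem} and unicity gives $\beta_{V,V'} = \beta_{V,V} = \sem{[V/a]B}$, so the PER matches the stated conclusion type; second, $\bar{1} \hat{+} [V/a]P$ and $\suc{[V/a]P}$ both evaluate to the numeral $\overline{p+1}$ whenever $[V/a]P$ evaluates to $\bar{p}$, hence are $\omega$-related, so Lemma~\ref{lemma:headexp1}(5) rewrites the cost bound to $\suc{[V/a]P}$ and we conclude $\eqCComp{\ap{F}{V}}{\ap{F'}{V'}}{[V/a]B}{\suc{[V/a]P}}$.

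The main piece of bookkeeping is the cost conversion at the end: head expansion most naturally produces the bound $\bar{1} \hat{+} [V/a]P$, whereas the lemma states the bound as $\suc{[V/a]P}$, so one appeals to the $\omega$-replacement clause of head expansion to bridge the two. Every other step is either a mechanical unfolding of the Funtime clause, a direct appeal to the operational semantics, or an application of functionality/unicity for the PER-denotations of the types involved.
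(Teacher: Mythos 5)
Your proposal is correct and follows essentially the same route as the paper's proof: unfold the Funtime PER via unicity, specialize the cost-aware functionality of the bodies to $V,V'$, and absorb the single $\beta$-step on each side by head expansion, reconciling $\beta_{V,V'}$ with $\sem{[V/a]B}$ along the way. The only (harmless) difference is that the paper invokes clause~2 of Lemma~\ref{lemma:headexp1} to obtain the bound $\suc{[V/a]P}$ directly, whereas you detour through clause~4 with $Q=\bar{1}$ and then clause~5 to convert $\bar{1}\mathbin{\hat{+}}[V/a]P$ into $\suc{[V/a]P}$.
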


\begin{proof}
  By assumption, \sameType{\arrtimev{\isOf{a}{A}}{B}{P}}{\arrtimev{\isOf{a}{A}}{B}{P}}{\rho}{\tau_{\omega}}, which implies there exists $\alpha,\beta$
  such that \sameType{A}{A}{\alpha}{\tau_{\omega}} and \sameTypes{\isOf{a}{\alpha}}{B}{B}{\beta}{\tau_{\omega}}. 
  Furthermore, we have \sameComp{\isOf{a}{\alpha}}{P}{P}{\omega}. 
  First, we need to show \sameType{[V/a]B}{[V/a]B}{\phi}{\tau_{\omega}} for some $\phi$. 
  By assumption, $\alpha(V,V')$, so this holds by taking $\phi = \beta_{V,V'}$ and apply symmetry and transitivity.
  Next, we need to show \sameCComp{}{\ap{\fun{a}{f}{N}}{V}}{\ap{\fun{a}{f}{N'}}{V'}}{\beta_{V,V'}}{\suc{[V/a]P}}. 
  Let $F = \fun{a}{f}{N}$ and $F' = \fun{a}{f}{N'}$. Then
  \begin{align*}
    \ap{\fun{a}{f}{N}}{V} &\mapsto [V/a,F/f]N\\
    \ap{\fun{a}{f}{N'}}{V'} &\mapsto [V'/a,F'/f]N'
  \end{align*}
  So by Lemma~\ref{lemma:headexp1}, it suffices to show \sameCComp{}{[V/a,F/f]N}{[V/a,F'/f]N'}{\beta_{V,V'}}{[V/a]P}. 
  Since $\rho(F,F')$, by definition, we know: 
  \[
    \sameCComp{\isOf{a}{\alpha}}{[F/f]N}{[F'/f]N'}{\beta}{P}
  \]
  So \sameCComp{}{[V/a,F/f]N}{[V'/a,F'/f]N'}{\beta_{V,V'}}{[V/a]P}, as required.
\end{proof}
\subsection{Weakening}

Cost bounds can be relaxed: 

\begin{lemma}[Cost Weakening]\label{lemma:cost_weak}
  If 
  \begin{enumerate}
    \item \eqCComp{M}{M'}{A}{P}
    \item \eqComp{Q}{Q'}{\rel{\le}{P}{P'}}
  \end{enumerate}
  then \eqCComp{M}{M'}{A}{P'}.
\end{lemma}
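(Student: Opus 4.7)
The plan is to unpack the hypothesis about the relation type via Lemma~\ref{lemma:rele} and then repackage the cost-aware judgment with the new, larger bound. The key observation is that cost bounds are just numerals tracked semantically, so relaxing a bound should be a matter of swapping in an evaluator-equivalent numeral that happens to be at least as large.

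First I would apply Lemma~\ref{lemma:rele} (Relation elimination) to hypothesis 2, instantiated with $r = {\le}$. This yields natural numbers $m, n$ together with \eqComp{P}{\bar{m}}{\nat}, \eqComp{P'}{\bar{n}}{\nat}, and the meta-theoretic fact $m \le n$. In particular we extract \sameComp{}{P'}{P'}{\omega}, which supplies the ``$P'$ is a natural number'' presupposition required by the target cost-aware judgment.

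Next I would unfold hypothesis 1 to obtain \sameType{A}{A}{\alpha}{\tau_{\omega}} for some $\alpha$, together with the data \evalCost{M}{c}{V}, \evalCost{M'}{c'}{V'}, \eval{P}{\bar{p}}, the bound $p \ge \max(c,c')$, and $\alpha(V,V')$. Since evaluation is deterministic and we also know \eval{P}{\bar{m}} from step one, we have $p = m$, and therefore $m \ge \max(c,c')$. Composing with $m \le n$ gives $n \ge \max(c,c')$.

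Finally I would assemble the conclusion \eqCComp{M}{M'}{A}{P'} by exhibiting $\alpha$ as the witnessing PER: the type condition \sameType{A}{A}{\alpha}{\tau_\omega} is already in hand, and for \sameCComp{}{M}{M'}{\alpha}{P'} we use the previously obtained \evalCost{M}{c}{V}, \evalCost{M'}{c'}{V'}, $\alpha(V,V')$, together with \eval{P'}{\bar{n}} and the newly derived inequality $n \ge \max(c,c')$. There is no real obstacle here; the only subtlety is making sure to invoke Lemma~\ref{lemma:rele} rather than trying to interpret $\le$ directly in the semantics, and to use determinism of $\mapsto$ to identify the two numerals to which $P$ evaluates.
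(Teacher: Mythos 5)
Your proposal is correct and follows essentially the same route as the paper's proof: both extract numerals $m \le n$ witnessing the relation, identify $p = m$ by determinism of evaluation, and reassemble the cost-aware judgment with $P'$ and the bound $n \ge \max(c,c')$. The only cosmetic difference is that you invoke Lemma~\ref{lemma:rele} to unpack hypothesis 2, whereas the paper inlines that same unfolding via unicity of $\tau_\omega$ on the semantics of $\rel{\le}{P}{P'}$.
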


\begin{proof}
  By assumption, \sameType{A}{A}{\alpha}{\tau_{\omega}} for some $\alpha$.
  It remains to show \sameCComp{}{M}{M'}{\alpha}{P'}. By assumption, 
  \sameType{\rel{\le}{P}{P'}}{\rel{\le}{P}{P'}}{\phi}{\tau_{\omega}}. By unicity, 
  \[\phi = \{ (\triv,\triv) \mid \exists m,n.\, m \le n \land \sameComp{}{P}{\bar{m}}{\omega} \land \sameComp{}{P'}{\bar{n}}{\omega} \}\]
  which implies\sameComp{}{P'}{P'}{\omega}. By assumption, we know that: 
  \begin{enumerate}
    \item \eval{M}{c}{V}
    \item \eval{M'}{c'}{V'}
    \item $\alpha(V,V')$
    \item \eval{P}{\bar{p}}
    \item $p \ge \max(c,c')$
  \end{enumerate}
  We need to show \eval{P'}{\bar{p'}} and $p' \ge \max(c,c')$. By assumption, 
  \eval{Q}{\triv}, \eval{Q'}{\triv}, and $\phi(\triv,\triv)$, so \eval{P'}{\bar{n}} and $n \ge m = p \ge \max(c,c')$. 
\end{proof}
\section{Machines}\label{sec:machines}

So far, we have considered a straightforward model of computation where each transition of the operation semantics 
is assigned unit cost. Now, we extend the operational semantics with additional primitive forms of computation that are assigned 
unit cost. This is similar to the instruction set of physical machines and represent operations that are considered
to be unit cost with the justification that they are usually implemented efficiently in practice. Without these primitives,
addition and multiplication defined internally would have linear and super-linear complexities, which seemingly goes against 
common intuition. The crux of the issue is that these operations have essentially constant time complexity, under the assumption
that the inputs fit within the \emph{word size} of a given machine. Hence, we index the operational semantics with a 
parameter $w$ representing the word size and introduce primitives that operate on word-sized inputs. 

Transition becomes a family of relations $\_ \mapsto_{w} \_ :
\mathcal{P}(\mathsf{Exp} \times \mathsf{Exp})$ over the parameter $w : \N$,
the maximum value of a number that fits in the word size. 
There are new syntactic forms \op{f}{M} and \arith{f}{M}{N} representing unary and binary arithmetic operators on natural numbers, respectively. 
They have the following associated transition steps:
\begin{mathpar}
\inferrule{
  \step{E}{E'}
}{
  \step{\op{f}{E}}{\op{f}{E'}}
}

\inferrule{
  m < w
}{
  \step{\op{f}{\bar{m}}}{\overline{f(m)}}
}  

\inferrule{
  \step{E_1}{E_1'}
}{
  \step{\arith{f}{E_1}{E_2}}{\arith{f}{E_1'}{E_1}}
}

\inferrule{
  \final{V}\\
  \step{E_2}{E_2'}
}{
  \step{\arith{f}{E_1}{E_2}}{\arith{f}{E_1}{E_2'}}
}

\inferrule{
  m,n < w
}{
  \step{\arith{f}{\bar{m}}{\bar{n}}}{\overline{f(m,n)}}
}  
\end{mathpar}

In the beta rules, inputs to the operator are required to fit in one word. What is ``reasonably'' primitive depends on the 
particular machine one has in mind, but we will postulate that $f$ can be addition, subtraction, multiplication, division, and
modulus.

Note that while we already have a foreign function interface via \cffone{f}{M} and \cfftwo{f}{M}{N}, these are intended to be 
used in the \emph{cost expressions} and not programs. As such, the the foreign function interface has no restriction on the 
inputs to the operators, and when subjected to complexity verification, invocation of the interface should be thought of 
as functions calls to an ``oracle'' that can compute any function in constant time. This is useful when one is interested 
in a more abstract measure of cost, such as the number of recursive calls a function makes as opposed to the actual running
time of said function. 

There is a new associated elimination rule for primitive computations on natural numbers:

\begin{lemma}[Primitive elimination]\label{lemma:arithe}
  If 
  \begin{enumerate}
    \item If \eqVal{V}{V'}{\nat} and \isComp{P}{\rel{<}{V}{\bar{w}}}, then \eqCComp{\op{f}{V}}{\op{f}{V'}}{\nat}{\bar{1}}.
    \item If \eqVal{V}{V'}{\nat}, \eqVal{U}{U'}{\nat}, \isComp{P}{\rel{<}{V}{\bar{w}}}, \isComp{Q}{\rel{<}{U}{\bar{w}}},
    then \eqCComp{\arith{f}{V}{U}}{\arith{f}{V}{U}}{\nat}{\bar{1}}.
  \end{enumerate}
\end{lemma}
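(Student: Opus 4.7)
My plan is to handle the two parts symmetrically, so I describe the unary case and then indicate the extension.

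The strategy is to reduce the problem to the primitive beta-step via head expansion (Lemma~\ref{lemma:headexp}). First, since $\eqVal{V}{V'}{\nat}$ and both are values, the interpretation $\sem{\nat} = \omega$ forces $V = V' = \bar m$ for some $m \in \N$. Next, the hypothesis $\isComp{P}{\rel{<}{V}{\bar{w}}}$ is eliminated via Lemma~\ref{lemma:rele}, which yields $m', w' \in \N$ with $m' < w'$, $\eqComp{V}{\bar{m'}}{\nat}$, and $\eqComp{\bar{w}}{\bar{w'}}{\nat}$; by determinism of evaluation these force $m = m'$ and $w = w'$, so $m < w$. This is exactly the side-condition required to fire the beta-rule, giving $\step{\op{f}{\bar m}}{\overline{f(m)}}$.

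With that in hand, I would first establish the base case $\eqCComp{\overline{f(m)}}{\overline{f(m)}}{\nat}{\bar{0}}$: the value $\overline{f(m)}$ evaluates to itself in $0$ steps, $\bar{0}$ evaluates to itself, and reflexivity on numerals is immediate from iterated application of Lemma~\ref{lemma:nati}. Now apply Lemma~\ref{lemma:headexp}(4) with $\stepIn{\op{f}{V}}{1}{\overline{f(m)}}$ and $\stepIn{\op{f}{V'}}{1}{\overline{f(m)}}$, taking $Q = \bar{1}$ and $P = \bar 0$; the numerical condition $1 \ge \max(1,1)$ holds trivially. This yields $\eqCComp{\op{f}{V}}{\op{f}{V'}}{\nat}{\bar{1} \hat{+} \bar{0}}$. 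Finally, invoke Lemma~\ref{lemma:headexp}(5) to rewrite the cost along $\eqComp{\bar{1} \hat{+} \bar{0}}{\bar{1}}{\nat}$ (which holds since the primitive addition reduces $\bar{1} \hat{+} \bar{0}$ to $\bar{1}$ and numeral equality in $\omega$ is just propositional equality of indices).

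For the binary case, the argument is essentially the same but with both eliminations of the $<$-hypotheses, followed by the binary beta-rule $\step{\arith{f}{\bar m}{\bar n}}{\overline{f(m,n)}}$; the two source steps again take one reduction each, since both arguments are already values, so the same head-expansion accounting delivers cost $\bar 1$.

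The only delicate point is the cost accounting: one must be careful that $V$ and $V'$ are already values (so the $\op{f}{(-)}$ congruence steps contribute zero, and the whole redex fires in exactly one step on each side), and that the \emph{internal} primitive sum $\bar{1} \hat{+} \bar{0}$ is coerced to $\bar{1}$ via the cost-replacement clause of head expansion rather than being asserted syntactically. Everything else is an unfolding of definitions and an application of the primitive beta-rule, so I do not anticipate further obstacles.
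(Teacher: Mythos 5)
Your proposal is correct and follows essentially the same route as the paper, which simply says the proof is ``similar to that of Lemma~\ref{lemma:ffe}'': reduce to the primitive beta-step by head expansion and conclude by reflexivity of numerals in $\omega$. The one genuinely necessary addition over Lemma~\ref{lemma:ffe} --- discharging the side condition $m < w$ via Lemma~\ref{lemma:rele} so the beta-rule may fire --- is exactly the detail the paper elides, and you handle it correctly.
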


\begin{proof}
  Similar to proof of Lemma~\ref{lemma:ffe}.
\end{proof}
\section{Example: greatest common divisor}\label{sec:gcd}

We are now equipped to state and analyze the computational complexity for a general class of programs. 
We focus on analyzing the complexity of Euclid's algorithm for computing the gcd of two natural numbers. 
This is a prototypical example of an algorithm which is \emph{not} structurally recursive, and we demonstrate that 
the semantic framework of \cctt{} is capable of handling such recursive constructions.
\begin{remark}
For presentation purposes, we derive a basic bound and acknowledge that it's
not the tightest characterization. It is certainly possible to carry out a more detailed analysis in \cctt{}. 
\end{remark}

Let the ternary predicate $\textit{gcdProp} : \mathcal{P}(\N \times \N \times \N)$ be defined as $\textit{gcd}(d,m,n)$ if $d$ is the gcd of 
$m$ and $n$. This is then internalized as \tern{gcdProp}{D}{M}{N} for \isComp{D,M,N}{\nat}. It is possible to define this 
predicate internally, but it is convenient to delegate this routine verification to the metatheory. Let the gcd algorithm
be defined as follows:
\begin{align*}
    \textit{gcd} \triangleq\,&\fun{a}{f}{\\&\letcst{\fst{a}}{x}{\\&\letcst{\snd{a}}{y}{
      \\&\ifz{x}{y}{x'}{
        \letcst{\arith{\%}{y}{\suc{x'}}}{z}{\ap{f}{\pair{z}{\suc{x'}}}}}}}}
\end{align*}
Let 
\begin{align*}
    A &= \sigmaty{\subsetty{\isOf{x}{\nat}}{\rel{<}{x}{\bar{w}}}}{\subsetty{\isOf{y}{\nat}}{\rel{<}{y}{\bar{w}}}}\\
    B &= \lsub{\fst{a}}{x}{\lsub{\snd{a}}{y}{\subsetty{\isOf{d}{\nat}}{\tern{gcdProp}{d}{x}{y}}}}\\
    P &= \bar{1} \hat{+} \suc{\lsub{\fst{a}}{x}{\bar{1} \hat{+} \suc{ \lsub{\snd{a}}{y}{\ifz{x}{\bar{1}}{x'}{\suc{\bar{1} \hat{+} \suc{\suc{\bar{8}\hat{\times}\suc{x'}}}}} }}}}
\end{align*}
We will verify the following specification: 
\[
\isVal{\textit{gcd}}{\arrtimev{\isOf{a}{A}}{B}{P}}
\]
\begin{remark}
Note that we deliberately program in a fashion commonly referred to as monadic
form (or sometimes A-normal form)
even though the operational semantics places no 
restriction on the presentation of programs. This is because the proof theory was arranged
to facilitate verification of programs of this form. To verify programs written in other styles, 
different collections of lemmas would be appropriate. For the simplicity of presentation, we choose to
program in ANF because otherwise the rule for sequencing (Lemma~\ref{lemma:seq}) would be essentially duplicated 
in each elimination rule. 
\end{remark}

\begin{proof}
  By Lemma~\ref{lemma:pitimei}, it suffices to show
  \begin{enumerate}
    \item \isComp{A}{\univ{i}}
    \item \openComp{\isOf{a}{A}}{B}{\univ{i}}
    \item \openComp{\isOf{a}{A}}{P}{\nat}
    \item \openCComp{\isOf{a}{A}, \isOf{f}{\arrtimev{\isOf{a}{\subsetty{\isOf{a'}{A}}{\rel{<}{[a'/a]P}{P}}}}{B}{P}}}{N}{B}{P}
  \end{enumerate}
  For the first 3,
  \begin{enumerate}
    \item By Lemma~\ref{lemma:sigmaf}, it suffices to show 
    \begin{enumerate}
      \item \isComp{\subsetty{\isOf{x}{\nat}}{\rel{<}{x}{\bar{w}}}}{\univ{i}}. By Lemma~\ref{lemma:subseti}, it suffices to show 
      \begin{enumerate}
        \item \isComp{\nat}{\univ{i}}. By Lemma~\ref{lemma:natf}.
        \item \openComp{\isOf{x}{\nat}}{\rel{<}{x}{\bar{w}}}{\univ{i}}. By Lemma~\ref{lemma:relf}, it suffices to show 
        \begin{enumerate}
          \item \openComp{\isOf{x}{\nat}}{x}{\nat}. By Lemma~\ref{lemma:hypothesis}.
          \item \openComp{\isOf{x}{\nat}}{\bar{w}}{\nat}. By Lemma~\ref{lemma:nati}.
        \end{enumerate}
      \end{enumerate}
      \item \isComp{\subsetty{\isOf{y}{\nat}}{\rel{<}{y}{\bar{w}}}}{\univ{i}}. As above.
    \end{enumerate}
    \item 
    By Lemma~\ref{lemma:seq}, it suffices to show 
    \begin{enumerate}
      \item \openComp{\isOf{a}{A}}{\fst{a}}{\nat}. This follows by Lemma~\ref{lemma:sigmae},~\ref{lemma:subsete}, and ~\ref{lemma:hypothesis}.
      \item \openComp{\isOf{a}{A},\isOf{x}{\nat}}{\lsub{\snd{a}}{y}{\subsetty{\isOf{d}{\nat}}{\tern{\textit{gcdProp}}{d}{x}{y}}}}{\univ{i}}.
    \end{enumerate}
    For the latter, by Lemma~\ref{lemma:seq}, it suffices to show 
    \begin{enumerate}
      \item \openComp{\isOf{a}{A},\isOf{x}{\nat}}{\snd{a}}{\nat}. This follows by Lemma~\ref{lemma:sigmae},~\ref{lemma:subsete},~\ref{lemma:open_headexp}, and ~\ref{lemma:hypothesis}.
      \item \openComp{\isOf{a}{A},\isOf{x}{\nat},\isOf{y}{\nat}}{\subsetty{\isOf{d}{\nat}}{\tern{\textit{gcdProp}}{d}{x}{y}}}{\univ{i}}.
    \end{enumerate}
    For the latter, let $\Gamma = \isOf{a}{A},\isOf{x}{\nat},\isOf{y}{\nat}$.
    By Lemma~\ref{lemma:subsetf}, it suffices to show 
    \begin{enumerate}
      \item \openComp{\Gamma}{\nat}{\univ{i}}. This holds by Lemma~\ref{lemma:natf}.
      \item \openComp{\Gamma,\isOf{d}{\nat}}{\tern{\textit{gcdProp}}{d}{x}{y}}{\univ{i}}. By Lemma~\ref{lemma:relf}, it suffices to show 
      \begin{enumerate}
        \item \openComp{\Gamma,\isOf{d}{\nat}}{d}{\nat}. This holds by Lemma~\ref{lemma:hypothesis}.
        \item \openComp{\Gamma,\isOf{d}{\nat}}{x}{\nat}. This holds by Lemma~\ref{lemma:hypothesis}.
        \item \openComp{\Gamma,\isOf{d}{\nat}}{y}{\nat}. This holds by Lemma~\ref{lemma:hypothesis}.
      \end{enumerate}
    \end{enumerate}
   \item By Lemma~\ref{lemma:ffe}, it suffices to show 
   \begin{enumerate}
     \item \openComp{\isOf{a}{A}}{\bar{1}}{\nat}. By Lemma~\ref{lemma:nati}. 
     \item \openComp{\isOf{a}{A}}{\suc{\lsub{\fst{a}}{x}{\bar{1} \hat{+} \suc{ \lsub{\snd{a}}{y}{\ifz{x}{\bar{1}}{x'}{\suc{\bar{1} \hat{+} \suc{\suc{\bar{8}\hat{\times}\suc{x'}}}}} }}}}}{\nat}. 
   \end{enumerate}
   For the latter, by Lemma~\ref{lemma:nati}, it suffices to show 
     \[
       \openComp{\isOf{a}{A}}{\lsub{\fst{a}}{x}{\bar{1} \hat{+} \suc{ \lsub{\snd{a}}{y}{\ifz{x}{\bar{1}}{x'}{\suc{\bar{1} \hat{+} \suc{\suc{\bar{8}\hat{\times}\suc{x'}}}}} }}}}{\nat}. 
     \]
     By Lemma~\ref{lemma:seq}, it suffices to show 
     \begin{enumerate}
       \item \openComp{\isOf{a}{A}}{\fst{a}}{\nat}. Similar to above.
       \item \openComp{\isOf{a}{A},\isOf{x}{\nat}}{\bar{1} \hat{+} \suc{ \lsub{\snd{a}}{y}{\ifz{x}{\bar{1}}{x'}{\suc{\bar{1} \hat{+} \suc{\suc{\bar{8}\hat{\times}\suc{x'}}}}} }}}{\nat}. 
     \end{enumerate}
     For the latter, by Lemma~\ref{lemma:ffe}, it suffices to show 
     \begin{enumerate}
       \item \openComp{\isOf{a}{A},\isOf{x}{\nat}}{\bar{1}}{\nat}. Lemma~\ref{lemma:nati}. 
       \item \openComp{\isOf{a}{A},\isOf{x}{\nat}}{\suc{\lsub{\snd{a}}{y}{\ifz{x}{\bar{1}}{x'}{\suc{\bar{1} \hat{+} \suc{\suc{\bar{8}\hat{\times}\suc{x'}}}}} }}}{\nat}. 
     \end{enumerate}
     For the latter, by Lemma~\ref{lemma:nati}, it suffices to show 
     \[
       \openComp{\isOf{a}{A},\isOf{x}{\nat}}{\lsub{\snd{a}}{y}{\ifz{x}{\bar{1}}{x'}{\suc{\bar{1} \hat{+} \suc{\suc{\bar{8}\hat{\times}\suc{x'}}}}}}}{\nat}
    \]
     By Lemma~\ref{lemma:seq}, it suffices to show 
     \begin{enumerate}
       \item \openComp{\isOf{a}{A},\isOf{x}{\nat}}{\snd{a}}{\nat}. Similar to above.
       \item \openComp{\isOf{a}{A},\isOf{x}{\nat},\isOf{y}{\nat}}{\ifz{x}{\bar{1}}{x'}{\suc{\bar{1} \hat{+} \suc{\suc{\bar{8}\hat{\times}\suc{x'}}}}}}{\nat}.
     \end{enumerate}
     Let $\Gamma =\isOf{a}{A},\isOf{x}{\nat},\isOf{y}{\nat}$. For the latter, by Lemma~\ref{lemma:nate1} and~\ref{lemma:open_headexp}, it suffices to show 
     \begin{enumerate}
       \item \openComp{\Gamma,\isOf{z}{\nat}}{\nat}{\univ{i}}. By Lemma~\ref{lemma:natf}.
       \item \openComp{\Gamma}{x}{\nat}. By Lemma~\ref{lemma:hypothesis}. 
       \item \openComp{\Gamma}{\bar{1}}{\nat}. By Lemma~\ref{lemma:nati}. 
       \item \openComp{\Gamma,\isOf{z}{\nat}, \isOf{p}{\eqty{\nat}{\suc{z}}{x}}}{\suc{\bar{1} \hat{+} \suc{\suc{\bar{8}\hat{\times}\suc{z}}}}}{\nat}.
       By repeated application of Lemma~\ref{lemma:nati},~\ref{lemma:ffe}, and~\ref{lemma:hypothesis}.
     \end{enumerate}
  \end{enumerate}
  For the last premise, let 
  \begin{align*}
    C &= \arrtimev{\isOf{a}{\subsetty{\isOf{a'}{A}}{\rel{<}{[a'/a]P}{P}}}}{B}{P}\\ 
    N &= \letcst{\fst{a}}{x}{\\&\letcst{\snd{a}}{y}{
      \\&\ifz{x}{y}{x'}{
        \letcst{\arith{\%}{y}{\suc{x'}}}{z}{\ap{f}{\pair{z}{\suc{x'}}}}}}}
  \end{align*} 
  It suffices to show 
  \[
  \openCComp{\isOf{a}{A}, \isOf{f}{C}}{N}{B}{P}
  \]
  Let 
  \begin{align*}
  N_1 &= \letcst{\snd{a}}{y}{
      \ifz{x}{y}{x'}{
        \letcst{\arith{\%}{y}{\suc{x'}}}{z}{\ap{f}{\pair{z}{\suc{x'}}}}}}\\
  B_1 &= \lsub{\snd{a}}{y}{\subsetty{\isOf{d}{\nat}}{\tern{\textit{gcdProp}}{d}{x}{y}}}\\
  P_1 &=\bar{1} \hat{+} \suc{ \lsub{\snd{a}}{y}{\ifz{x}{\bar{1}}{x'}{\suc{\bar{1} \hat{+} \suc{\suc{\bar{8} \hat{\times} \suc{x'}}}}}} }\\
  \Gamma_1 &= \isOf{a}{A},\isOf{f}{C}
  \end{align*}
  By Lemma~\ref{lemma:seq}, it suffices to show: 
  \begin{enumerate}
    \item \openCComp{\Gamma_1}{\fst{a}}{\subsetty{\isOf{x}{\nat}}{\sigmaty{\rel{<}{x}{\bar{w}}}{\eqty{\nat}{x}{\fst{a}}}}}{\bar{1}}.
    By Lemma~\ref{lemma:subseti}, it suffices to show: 
    \begin{enumerate}
      \item \openCComp{\Gamma_1}{\fst{a}}{\nat}{\bar{1}}. 
        This follows from Lemma~\ref{lemma:subsete} and~\ref{lemma:sigmae}.
      \item For all \eqVal{V}{V'}{A} and \eqVal{F}{F'}{[V/a]C}, there exists $N,N'$ such that 
      \eqComp{N}{N'}{\lsub{\fst{V}}{x}{\sigmaty{\rel{<}{x}{\bar{w}}}{\eqty{\nat}{x}{\fst{V}}}}}.
      By definition, $V = \pair{V_1}{V_2}$ for some \isComp{V_1}{\subsetty{\isOf{x}{\nat}}{\rel{<}{x}{\bar{w}}}} 
      and \isComp{V_2}{\subsetty{\isOf{y}{\nat}}{\rel{<}{y}{\bar{w}}}}.  
      By Lemma~\ref{lemma:headexp}, it suffices to show
      \[\isComp{N}{\sigmaty{\rel{<}{V_1}{\bar{w}}}{\eqty{\nat}{V_1}{\fst{V}}}}\]
      By Lemma~\ref{lemma:sigmai}, it suffices to show 
      \begin{enumerate}
        \item \isComp{\triv}{\rel{<}{V_1}{\bar{w}}}. This follows from Lemma~\ref{lemma:subsete}.
        \item \isComp{\triv}{\eqty{\nat}{V_1}{\fst{V}}}. This follows from Lemma~\ref{lemma:eqi} and~\ref{lemma:headexp}.
      \end{enumerate}
    \end{enumerate}
    \item \openCComp{\Gamma_1,\isOf{x}{\subsetty{\isOf{x}{\nat}}{ \sigmaty{\rel{<}{x}{\bar{w}}}{\eqty{\nat}{x}{\fst{a}}} }}}{N_1}{B_1}{P_1}.
  \end{enumerate}
  For the last part, let 
  \begin{align*}
    N_2 &= \ifz{x}{y}{x'}{
        \letcst{\arith{\%}{y}{\suc{x'}}}{z}{\ap{f}{\pair{z}{\suc{x'}}}}}\\
    B_2 &= \subsetty{\isOf{d}{\nat}}{\tern{\textit{gcdProp}}{d}{x}{y}}\\
    P_2 &= \ifz{x}{\bar{1}}{x'}{\suc{\bar{1} \hat{+} \suc{\suc{\bar{8} \hat{\times} \suc{x'}}}}}\\
    \Gamma_2 &=\Gamma_1,\isOf{x}{\subsetty{\isOf{x}{\nat}}{\sigmaty{\rel{<}{x}{\bar{w}}}{\eqty{\nat}{x}{\fst{a}}}}} 
  \end{align*}
  By Lemma~\ref{lemma:seq}, it suffices to show 
  \begin{enumerate}
    \item \openCComp{\Gamma_2}{\snd{a}}{\subsetty{\isOf{y}{\nat}}{\rel{<}{y}{\bar{w}}}}{\bar{1}}.
    Similar to above.
    \item \openCComp{\Gamma_2,\isOf{y}{\subsetty{\isOf{y}{\nat}}{\sigmaty{\rel{<}{y}{\bar{w}}}{\eqty{\nat}{y}{\snd{a}}}}}}{N_2}{B_2}{P_2}.
  \end{enumerate}
  For the last part, let 
  \begin{align*}
    B_3 &= [n/x]B_2\\
    \Gamma_3 &= \Gamma_2,\isOf{y}{\subsetty{\isOf{y}{\nat}}{\sigmaty{\rel{<}{y}{\bar{w}}}{\eqty{\nat}{y}{\snd{a}}}}}
  \end{align*}
  By Lemma~\ref{lemma:nate2}, it suffices to show 
\begin{enumerate}
    \item \openComp{\Gamma_3,\isOf{n}{\nat}}{B_3}{\univ{i}}. Similar to derivation above.
    \item \openCComp{\Gamma_3}{x}{\nat}{\zero}. By Lemma~\ref{lemma:hypothesis}.
    \item \openComp{\Gamma_3}{\bar{1}}{\nat}. By Lemma~\ref{lemma:nati}. 
    \item \openComp{\Gamma_3,\isOf{n}{\nat}}{\bar{1} \hat{+} \suc{\suc{\bar{8} \times \suc{n}}}}{\nat}. By 
    Lemma~\ref{lemma:ffe} and~\ref{lemma:nati}. 
    \item \openCComp{\Gamma_3}{y}{[\zero/n]B_3}{\zero}.
    Let \eqInst{\gamma}{\gamma'}{\Gamma_3}. We know that 
      \[\gamma(y) = \gamma'(y) = \bar{y}\]
    For some $y$. By Lemma~\ref{lemma:subseti}, suffices to show 
     \begin{enumerate}
       \item \isComp{\bar{y}}{\nat}. This follows from Lemma~\ref{lemma:nati}.
       \item \isComp{\triv}{\tern{\textit{gcdProp}}{\bar{y}}{\zero}{\bar{y}}}. By Lemma~\ref{lemma:reli}, it suffices to show 
       $\textit{gcd}(0,y) = y$, which holds.
     \end{enumerate}
    \item \openCComp{\Gamma_3,\isOf{n}{\nat},\isOf{r}{\eqty{\nat}{\suc{n}}{x}}}{\letcst{\arith{\%}{y}{\suc{n}}}{z}{\ap{f}{\pair{z}{\suc{n}}}}\\}
    {[\suc{n}/n]B_3}{\bar{1} \hat{+} \suc{\bar{8} \hat{\times} \suc{n}}}.
  \end{enumerate}
  For the last part, let 
  \begin{align*}
    \Gamma_4 &= \Gamma_3,\isOf{n}{\nat},\isOf{r}{\eqty{\nat}{\suc{n}}{x}}
  \end{align*}
  By Lemma~\ref{lemma:open_headexp}, 
  \begin{enumerate}
    \item \openEqComp{\Gamma_4}{\lsub{\arith{\%}{y}{\suc{n}}}{z}{[\suc{n}/n]B_3}}{[\suc{n}/n]B_3}{\univ{i}}
    \item \openEqComp{\Gamma_4}{\lsub{\arith{\%}{y}{\suc{n}}}{z}{\suc{\bar{8} \hat{\times} \suc{n}}}}{\suc{\bar{8} \hat{\times}\suc{n}}}{\nat}
  \end{enumerate}
  So by Lemma~\ref{lemma:seq}, it suffices to show
  \begin{enumerate}
    \item \openCComp{\Gamma_4}{\arith{\%}{y}{\suc{n}}}{\subsetty{\isOf{z}{\nat}}{\sigmaty{\rel{<}{z}{\bar{w}}}{\arith{\%}{y}{\suc{n}}}}}{\bar{1}}.
    Let \eqInst{\gamma}{\gamma'}{\Gamma_4}. Then we know 
    \begin{enumerate}
      \item $\gamma(y) = \gamma'(y) = \bar{y}$ for some $y$ such that $y < w$.
      \item $\gamma(x) = \gamma'(x) = \bar{x}$ for some $x$ such that $x = n+1$ and $x<w$.
      \item $\gamma(n) = \gamma'(n) = \bar{n}$ for some $n$.
    \end{enumerate}
    It suffices to show 
    \[
    \isCComp{\arith{\%}{\bar{y}}{\suc{\bar{n}}}}{\subsetty{\isOf{z}{\nat}}{\sigmaty{\rel{<}{z}{\bar{w}}}{\eqty{\nat}{z}{\arith{\%}{y}{\suc{n}}}}}}{\bar{1}}
    \]
    By Lemma~\ref{lemma:subseti}, it suffices to show 
    \begin{enumerate}
      \item \isCComp{\arith{\%}{\bar{y}}{\suc{\bar{n}}}}{\nat}{\bar{1}}. By Lemma~\ref{lemma:arithe}.
      \item \isVal{\pair{\triv}{\triv}}{\lsub{\arith{\%}{\bar{y}}{\suc{\bar{n}}}}{z}{\sigmaty{\rel{<}{z}{\bar{w}}}{\eqty{\nat}{z}{\arith{\%}{y}{\suc{n}}}}}}.
      Let $k = y \% (n+1)$. By Lemma~\ref{lemma:headexp}, it suffices to show
      \[
        \isVal{\pair{\triv}{\triv}}{\sigmaty{\rel{<}{\bar{k}}{\bar{w}}}{\eqty{\nat}{\bar{k}}{\arith{\%}{y}{\suc{n}}}}}
      \]
      By Lemma~\ref{lemma:sigmai}, it suffices to show 
      \begin{enumerate}
        \item \isComp{\triv}{\rel{<}{\bar{k}}{\bar{w}}}. By Lemma~\ref{lemma:reli}, it suffices to show $k < w$, which holds since $k = y \% (n+1) < n+1 < w$.
        \item \isComp{\triv}{\eqty{\nat}{\bar{k}}{\arith{\%}{y}{\suc{n}}}}. By Lemma~\ref{lemma:nati} and~\ref{lemma:headexp}. 
      \end{enumerate}
    \end{enumerate}

    \item \openCComp{\Gamma_4,\isOf{z}{\subsetty{\isOf{z}{\nat}}{\sigmaty{\rel{<}{z}{w}}{\eqty{\nat}{z}{\arith{\%}{y}{\suc{n}}}}}}}{\ap{f}{\pair{z}{\suc{n}}}}{[\suc{n}/n]B_3}{\suc{\bar{8} \hat{\times} \suc{n}}}.
  \end{enumerate}
  For the last part, let 
  \begin{align*}
  \Gamma_5 &= \Gamma_4,\isOf{z}{\subsetty{\isOf{z}{\nat}}{\sigmaty{\rel{<}{z}{w}}{\eqty{\nat}{z}{\arith{\%}{y}{\suc{n}}}}}}
  \end{align*}
  Applying Lemma~\ref{lemma:pitimee} with the following, 
  \begin{enumerate}
    \item \openCComp{\Gamma_5}{f}{\arrtimev{\isOf{a}{\subsetty{\isOf{a'}{A}}{\rel{<}{[a'/a]P}{P}}}}{B}{P}}{\zero}. This follows 
    from Lemma~\ref{lemma:weaken} and Lemma~\ref{lemma:hypothesis}.
    \item \openCComp{\Gamma_5}{\pair{z}{\suc{n}}}{\subsetty{\isOf{a'}{A}}{\rel{<}{[a'/a]P}{P}}}{\zero}.
    Let \eqInst{\gamma}{\gamma'}{\Gamma_5}. We need to show 
    \eqVal{\hat\gamma\pair{z}{\suc{n}}}{\hat{\gamma'}\pair{z}{\suc{n}}}{\hat\gamma \subsetty{\isOf{a'}{A}}{\rel{<}{[a'/a]P}{P}} }. 
    We know that 
    \begin{enumerate}
      \item $\gamma(z) = \gamma'(z) = \bar{z}$ for some $z$.
      \item $\gamma(n) = \gamma'(n) = \bar{n}$ for some $n$.
      \item $\gamma(x) = \bar{x}$ for some $x$ and \isComp{\_}{\rel{<}{\bar{x}}{\bar{w}}}, so $x < w$.
      \item $\gamma(y) = \bar{y}$ for some $y$ and \isComp{\_}{\rel{<}{\bar{y}}{\bar{w}}}, so $y < w$.
      \item $\gamma(a) = \pair{\bar{a_1}}{\bar{a_2}}$.
      \item $\gamma(p) = \triv$, where \isVal{\triv}{\eqty{\nat}{\bar{x}}{\fst{\pair{\bar{a_1}}{\bar{a_2}}}}}, so $x = a_1$.
      \item $\gamma(q) = \triv$, where \isVal{\triv}{\eqty{\nat}{\bar{y}}{\snd{\pair{\bar{a_1}}{\bar{a_2}}}}}, so $y = a_2$.
      \item $\gamma(s) = \triv$, where \isVal{\triv}{\eqty{\nat}{\bar{z}}{\arith{\%}{\bar{y}}{\bar{x}}}}, so $z = y \% x$
      \item $\gamma(r) = \triv$, where \isVal{\triv}{\eqty{\nat}{\suc{\bar{n}}}{\bar{x}}}, so $x = n+1$.
    \end{enumerate}
    By Lemma~\ref{lemma:subseti}, it suffices to show 
    \begin{enumerate}
      \item \isVal{\pair{\bar{z}}{\suc{\bar{n}}}}{\hat\gamma A}. By Lemma~\ref{lemma:sigmai}, it suffices to show 
        \begin{enumerate}
          \item \isVal{\bar{z}}{\subsetty{\isOf{x}{\nat}}{\rel{<}{x}{\bar{w}}}}. By Lemma~\ref{lemma:subseti}, it suffices to show 
          \isVal{\triv}{\rel{<}{\bar{z}}{\bar{w}}}, which holds since $z = y\% x < x < w$.
          \item \isVal{\suc{\bar{n}}}{\subsetty{\isOf{y}{\nat}}{\rel{<}{y}{\bar{w}}}}. By Lemma~\ref{lemma:subseti}, it suffices to show 
          \isVal{\triv}{\rel{<}{\suc{\bar{n}}}{\bar{w}}}, which holds since $n+1 = x < w$.
        \end{enumerate}
      \item \isVal{\triv}{\rel{<}{[\pair{\bar{z}}{\suc{\bar{n}}}/a]\hat\gamma P}{\hat\gamma P}}. There are 
      two possibilities. If $z = 0$, we know that 
      \begin{align*}
        [\pair{\bar{z}}{\suc{\bar{n}}}/a]\hat\gamma P &\mapsto^* \bar{5}\\
        \hat\gamma P &\mapsto^* \overline{8 + 8 \times x}
      \end{align*}
      and it suffices to show \isVal{\triv}{\rel{<}{\bar{5}}{\overline{8 + 8 \times x}}}.
      Since $5 < 8 + 8 \times x$ for any $x$, we have the result by Lemma~\ref{lemma:reli}. 
      Otherwise $z = z'+1$ for some $z'$, and 
      \begin{align*}
        [\pair{\bar{z}}{\suc{\bar{n}}}/a]\hat\gamma P &\mapsto^* \overline{8 + 8\times(z'+1)}\\
        \hat\gamma P &\mapsto^* \overline{8 + 8 \times x}
      \end{align*}
      and it suffices to show \isVal{\triv}{\rel{<}{\overline{8 + 8\times(z'+1)}}{\overline{8 + 8 \times x}}}.
      Since $z' + 1 = z = y \% x < x$, this holds by Lemma~\ref{lemma:reli}. 
    \end{enumerate}
  \end{enumerate}
  We have 
  \[
   \openCComp{\Gamma_5}{\ap{f}{\pair{z}{\suc{n}}}}{[\pair{z}{\suc{n}}/a]B}{\suc{[\pair{z}{\suc{n}}/a]P}}
  \]
  Rewriting using equalities in $\Gamma_5$ and by Lemma~\ref{lemma:open_headexp}, we have 
  \begin{enumerate}
    \item \openEqComp{\Gamma_5}{[\pair{z}{\suc{n}}/a]B}{\subsetty{\isOf{d}{\nat}}{\tern{\textit{gcdProp}}{d}{\arith{\%}{y}{\suc{n}}}{\suc{n}}}}{\univ{i}}
    \item \openEqComp{\Gamma_5}{\suc{[\pair{z}{\suc{n}}/a]P}}{\suc{\bar{1} \hat{+} \suc{\bar{1} \hat{+} \suc{ \ifz{z}{\bar{1}}{z'}{\suc{\bar{1} \hat{+} \suc{\suc{\bar{8}\hat{\times}\suc{z'}}}}} }}}}{\nat}
  \end{enumerate}
  Furthermore, Given 
  \begin{align*}
    &\openComp{\Gamma_5}{D}{\subsetty{\isOf{d}{\nat}}{\tern{\textit{gcdProp}}{d}{\arith{\%}{y}{\suc{n}}}{\suc{n}}}}
  \end{align*}
  we have 
  \begin{align*}
    &\openComp{\Gamma_5}{D}{\subsetty{\isOf{d}{\nat}}{\tern{\textit{gcdProp}}{d}{\suc{n}}{y}}}
  \end{align*}
  because $\textit{gcdProp}(d,n+1,y) \iff \textit{gcdProp}(d,y\% (n+1), n+1)$.
  Let 
  \[
  P_3 = \suc{\bar{1} \hat{+} \suc{\bar{1} \hat{+} \suc{ \ifz{z}{\bar{1}}{z'}{\suc{\bar{1} \hat{+} \suc{\suc{\bar{8}\hat{\times}\suc{z'}}}}} }}}
  \]
  Lastly, by Lemma~\ref{lemma:cost_weak}, it suffices to show 
  \[
    \openComp{\Gamma_5}{\triv}{\rel{\le}{P_3}{\suc{\bar{8} \hat{\times} \suc{n}}}}
  \]
  Let \eqInst{\gamma}{\gamma'}{\Gamma_5}. It suffices to show 
  \isComp{\triv}{\rel{\le}{\hat\gamma P_3}{\suc{\bar{8} \hat{\times} \hat\gamma\suc{n}}}}.
  We know 
  \begin{enumerate}
    \item $\gamma(x) = \bar{x}$ for some $x$.
    \item $\gamma(y) = \bar{y}$ for some $y$.
    \item $\gamma(n) = \bar{n}$ for some $n$ such that $x = n+1$.
    \item $\gamma(z) = \bar{z}$ for some $z$ such that $z = y \% x$.
  \end{enumerate}
  There are two possibilities.  If $\gamma(z) = \zero$, then we have 
  \begin{align*}
    \hat\gamma P_3 &\mapsto^* \bar{6}\\
    \suc{\bar{8} \hat{\times} \suc{n}} &\mapsto \overline{8 \times (n+1)+1}
  \end{align*}
  Since $6 \le 8 \times (n+1)+1$ for any $n$, we have the result by Lemma~\ref{lemma:reli}.
  Otherwise, $\gamma(z) = \suc{\bar{z'}}$ for some $z'$, and 
  \begin{align*}
    \hat\gamma P_3 &\mapsto^* \overline{9 + (8 \times (z'+1))}\\
    \suc{\bar{8} \hat{\times} \suc{n}} &\mapsto \overline{8 \times (n+1)+1}
  \end{align*}
  Since $z'+1 = z = y \% x < x = n+1$ and so  $z'+1 \le n$,
  we have $9 + (8 \times (z'+1)) \le 9 + 8 \times n$, and again the result follows from 
  Lemma~\ref{lemma:reli}.
\end{proof}
\section{Parallelism}\label{sec:parallel}

So far we have developed a computational type theory for analyzing the computational complexity of sequential computations.
To extend the theory to account for parallel programs, we arrange the operational semantics so that
each transition step corresponds to one unit of cost in the \emph{span} of the computation. 
The idea is that sub-terms of a computation that are independent of each other transition simultaneously whenever possible. 
Intuitively, the resulting cost measures
the length of the longest sequence of \emph{dependencies} in the computation. Figure~\ref{fig:parallel_os} contains selected 
rules of the parallel operational semantics of \pl{}.
The omitted rules remain the same from the sequential operational semantics.
The unary sequential composition is generalized to the \emph{binary} sequencing operator \parin{E_1}{E_2}{a}{b}{E_3}, 
which transitions $E_1$ and $E_2$ simultaneously and binds the results to $a$ and $b$ for use in $E_3$. Binary sequencing can 
be taken as a primitive. Here we define it in terms of unary sequencing: 
\[
\parin{E_1}{E_2}{a}{b}{E_3} \triangleq \letcst{\pair{E_1}{E_2}}{c}{\letcst{\fst{c}}{a}{\letcst{\snd{c}}{b}{E_3}}}  
\]
As before, there is the notion of evaluation:
\begin{enumerate}
  \item \evalCost{M}{c}{V} when $\stepIn{M}{c}{V}$ and \final{V}.
  \item \eval{M}{V} when there is a $c$ s.t. \evalCost{M}{c}{V}.
\end{enumerate}
However, $c$ now represents the span, i.e., the parallel complexity of the computation $M$. The construction of the type system 
$\tau_{\omega}$ and the associated proof theory is unaffected by the change of the operational semantics. 

There is an associated derived binary sequencing rule:
\begin{lemma}[Binary sequence]\label{lemma:bin_seq}
  If 
  \begin{enumerate}
    \item \eqCComp{M_1}{M_1'}{A_1}{P_1}
    \item \eqCComp{M_2}{M_2'}{A_2}{P_2}
    \item \openEqCComp{\isOf{a_1}{A_1},\isOf{a_2}{A_2}}{N}{N'}{B}{Q}
  \end{enumerate}
  then 
  \begin{align*}
    \eqCComp{\parin{M_1}{M_2}{a_1}{a_2}{N}}{\parin{M_1'}{M_2'}{a_1}{a_2}{N'}}{\lsub{M_1}{a_1}{\lsub{M_2}{a_2}{B}}\\}{\fop{\max}(P_1,P_2) \fbin{+} 
  \bar{5} \fbin{+} \lsub{M_1}{a_1}{\lsub{M_2}{a_2}{Q}} }
  \end{align*}
\end{lemma}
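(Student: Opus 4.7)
The plan is to unfold \parin{M_1}{M_2}{a_1}{a_2}{N} according to the definition as \letcst{\pair{M_1}{M_2}}{c}{\letcst{\fst{c}}{a_1}{\letcst{\snd{c}}{a_2}{N}}} and then peel off one binding at a time by three applications of Lemma~\ref{lemma:seq}, part 3. The outermost application handles the pair binding, the next handles the \fst{c} binding (using Lemma~\ref{lemma:sigmae}, which contributes cost \bar{1}), and the innermost handles the \snd{c} binding similarly. The three successor wrappers introduced by the substitution rule together with the two \bar{1}'s from Sigma elimination account for the constant \bar{5} in the stated bound, while the body cost $Q$ and the maximum of $P_1$ and $P_2$ appear exactly where expected.

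The one step where the parallel operational semantics materially enters is the cost of \pair{M_1}{M_2}. First I would prove an auxiliary parallel pair lemma stating that, under the parallel step relation, if \evalCost{M_1}{c_1}{V_1} and \evalCost{M_2}{c_2}{V_2}, then \pair{M_1}{M_2} evaluates to \pair{V_1}{V_2} in $\max(c_1,c_2)$ steps, by induction on the simultaneous-step rules given in Figure~\ref{fig:parallel_os}. Lifting to the judgmental layer yields \eqCComp{\pair{M_1}{M_2}}{\pair{M_1'}{M_2'}}{\sigmaty{\isOf{a_1}{A_1}}{A_2}}{\fop{\max}(P_1,P_2)}, which feeds into the first application of Lemma~\ref{lemma:seq}. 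I expect this to be the main obstacle, since a careful case analysis on the interleaving of the two traces is required; the remaining cost bookkeeping is routine.

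After the three applications of Lemma~\ref{lemma:seq}, the accumulated type is \lsub{\pair{M_1}{M_2}}{c}{\lsub{\fst{c}}{a_1}{\lsub{\snd{c}}{a_2}{B}}} and the accumulated cost is \fop{\max}(P_1,P_2) \fbin{+} \suc{\lsub{\pair{M_1}{M_2}}{c}{\bar{1} \fbin{+} \suc{\lsub{\fst{c}}{a_1}{\bar{1} \fbin{+} \suc{\lsub{\snd{c}}{a_2}{Q}}}}}}. Once the letcst redex on \pair{M_1}{M_2} and the \fst and \snd redexes fire, both head-reduce to the desired \lsub{M_1}{a_1}{\lsub{M_2}{a_2}{B}} and \fop{\max}(P_1,P_2) \fbin{+} \bar{5} \fbin{+} \lsub{M_1}{a_1}{\lsub{M_2}{a_2}{Q}}, respectively. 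These rewriting steps must be carried out through Lemma~\ref{lemma:headexp} and its open variant Lemma~\ref{lemma:open_headexp}, rather than by direct computation, since (as noted in the remark before Lemma~\ref{lemma:seq}) membership in \cctt{} is not invariant under untyped rewriting.
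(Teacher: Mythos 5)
Your proposal is correct in substance but takes a genuinely different, more compositional route than the paper. The paper does not decompose the binary let through Lemma~\ref{lemma:seq} at all: it unfolds hypotheses 1--3 to obtain the evaluation traces and values, observes directly that under the parallel semantics the whole term $\parin{M_1}{M_2}{a_1}{a_2}{N}$ reduces to $[V_1/a_1,V_2/a_2]N$ in exactly $\max(c_1,c_2)+5$ steps (the pair contributing $\max(c_1,c_2)$, the remaining five steps being the let-substitution, the two projections, and the two inner let-substitutions), and then concludes with a single application of the replacement/prefix-cost clause of Lemma~\ref{lemma:headexp}. Your plan instead reuses the existing proof theory three times via Lemma~\ref{lemma:seq}, and your cost accounting checks out: the three successors from the substitution rule plus the two $\bar{1}$'s from Lemma~\ref{lemma:sigmae} do give the constant $\bar{5}$, and both your final cost expression and the stated one evaluate to $\max(p_1,p_2)+5+q$, so Lemma~\ref{lemma:headexp} (part 5) closes the gap. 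You also correctly isolate the one place where the parallel semantics matters --- the $\max(c_1,c_2)$ bound for the pair --- which is exactly the fact the paper uses inline in its trace. The price of your modularity is two extra obligations the paper avoids: (i) a new ``pair introduction with cost'' lemma, since Lemma~\ref{lemma:sigmai} only covers values at cost $\zero$ (though the induction you sketch is easier than you fear: the parallel pair rules are deterministic, so there is no real interleaving case analysis); and (ii) forming the intermediate product type $\sigmaty{\isOf{a_1}{A_1}}{A_2}$, which cannot be obtained from Lemma~\ref{lemma:sigmaf} as literally stated because the hypotheses of the present lemma supply only $\isType{A_1}$ and $\isType{A_2}$, not membership in a universe; you would need to argue directly at the level of the $\text{Sigma}(\tau)$ clause of the type system, which does go through. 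Neither point is a fatal gap, but both must be discharged for your route to be complete.
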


\begin{proof}
  By assumption, we have \sameCComp{}{M_1}{M_1'}{\sem{A_1}}{P_1} and \sameCComp{}{M_2}{M_2}{\sem{A_2}}{P_2}, which means
  \begin{enumerate}
    \item \evalCost{M_1}{c_1}{V_1}
    \item \evalCost{M_1'}{c_1'}{V_1'}
    \item $\sem{A_1}(V_1,V_1')$
    \item \eval{P_1}{\overline{p_1}}
    \item $p_1 \ge \max(c_1,c_1')$
    \item \evalCost{M_2}{c_2}{V_2}
    \item \evalCost{M_2'}{c_2'}{V_2'}
    \item $\sem{A_2}(V_2,V_2')$
    \item \eval{P_2}{\overline{p_2}}
    \item $p_2 \ge \max(c_2,c_2')$
  \end{enumerate}
  So \eqVal{V_1}{V_1'}{A_1} and \eqVal{V_2}{V_2'}{A_2}. By assumption 3, we have 
  \[\eqCComp{[V_1/a_1,V_2/a_2]N}{[V_1'/a_1,V_2'/a_2]N'}{[V_1/a_1,V_2/a_2]B}{[V_1/a_1,V_2/a_2]Q}\]
  By Lemma~\ref{lemma:headexp}, we have 
  \[\eqCComp{[V_1/a_1,V_2/a_2]N}{[V_1'/a_1,V_2'/a_2]N'}{\lsub{M_1}{a_1}{\lsub{M_2}{a_2}{B}}}{\lsub{M_1}{a_1}{\lsub{M_2}{a_2}{Q}}}\]
  Note that by definition, 
  \begin{align*}
    \parin{M_1}{M_2}{a_1}{a_2}{N} &\mapsto^c \letcst{\pair{V_1}{V_2}}{x}{\letcst{\fst{x}}{a_1}{\letcst{\snd{x}}{a_2}{N}}}\\
    &\mapsto \letcst{\fst{\pair{V_1}{V_2}}}{a_1}{\letcst{\snd{\pair{V_1}{V_2}}}{a_2}{N}}\\ 
    &\mapsto \letcst{V_1}{a_1}{\letcst{\snd{\pair{V_1}{V_2}}}{a_2}{N}}\\ 
    &\mapsto \letcst{\snd{\pair{V_1}{V_2}}}{a_2}{[V_1/a_1]N}\\ 
    &\mapsto \letcst{V_2}{a_2}{[V_1/a_1]N}\\ 
    &\mapsto [V_1/a_1,V_2/a_2]N
  \end{align*}
  Where $c = \max(c_1,c_2)$. So $\parin{M_1}{M_2}{a_1}{a_2}{N} \mapsto^{\max(c_1,c_2)+5} [V_1/a_1,V_2/a_2]N$. Similarly, 
  $\parin{M_1'}{M_2'}{a_1}{a_2}{N'} \mapsto^{\max(c_1',c_2')+5} [V_1'/a_1,V_2'/a_2]N'$. Furthermore, we have 
  \[
    \eval{\fop{\max}(P_1,P_2) \fbin{+} \bar{5}}{\max(p_1,p_2) + 5}
  \]
  and $\max(p_1,p_2) + 5 \ge \max(c_1,c_2) + 5$, so the result holds by Lemma~\ref{lemma:headexp}.
\end{proof}

\begin{figure}
  \begin{mathpar}
\inferrule{
    \step{E_1}{E_1'}\\
    \step{E_1}{E_2'}
}{
    \step{\ap{E_1}{E_2}}{\ap{E_1'}{E_2'}}
}

\inferrule{
    \final{E_1}\\
    \step{E_2}{E_2'}
}{
    \step{\ap{E_1}{E_2}}{\ap{E_1}{E_2'}}
}

\inferrule{
    \final{E_2}\\
    \step{E_1}{E_1'}
}{
    \step{\ap{E_1}{E_2}}{\ap{E_1'}{E_2}}
}

\inferrule{
    \final{E_2}
}{
    \step{(\fun{f}{a}{E}) E_2}{[\fun{f}{a}{E}/f,E_2/a]E}
}

\inferrule{
    \step{E_1}{E_1'}\\
    \step{E_2}{E_2'}
}{
    \step{\pair{E_1}{E_2}}{\pair{E_1'}{E_2'}}
}

\inferrule{
    \final{E_1}\\
    \step{E_2}{E_2'}
}{
    \step{\pair{E_1}{E_2}}{\pair{E_1}{E_2'}}
}

\inferrule{
    \final{E_2}\\
    \step{E_1}{E_1'}
}{
    \step{\pair{E_1}{E_2}}{\pair{E_1'}{E_2}}
}

\inferrule{
  \step{E_1}{E_1'}\\
  \step{E_2}{E_2'}
}{
  \step{\cfftwo{f}{E_1}{E_2}}{\cfftwo{f}{E_1'}{E_2'}}
}

\inferrule{
  \final{E_1}\\
  \step{E_2}{E_2'}
}{
  \step{\cfftwo{f}{E_1}{E_2}}{\cfftwo{f}{E_1}{E_2'}}
}

\inferrule{
  \final{E_2}\\
  \step{E_1}{E_1'}
}{
  \step{\cfftwo{f}{E_1}{E_2}}{\cfftwo{f}{E_1'}{E_2}}
}
\end{mathpar}
\caption{Selected rules for parallel operational semantics of  \pl{}.}
\label{fig:parallel_os}
\end{figure}
\section{Example: Fibonacci number}\label{sec:fib}

To demonstrate the type theory when instrumented with the parallel operational semantics, we analyze the Fibonacci function
when implemented directly as specified by the recursive definition: 
\[
  F(n) =
\begin{cases}
  0 \text{ if } n = 0\\
  1 \text{ if } n = 1\\
  F(n-1) + F(n-2) \text{ o.w. }
\end{cases}
\] 
There is the associated program in \pl{}:  
\begin{align*}
  fib &\triangleq \fun{n}{f}{\ifz{n}{\bar{0}}{n'}{\ifz{n'}{\bar{1}}{n''}{\parin{\ap{f}{n'}}{\ap{f}{n''}}{x}{y}{x \fbin{+} y}}}}
\end{align*}
We analyze the number of recursive calls $fib$ makes as a function of $n$ along the critical path. Let 
\begin{align*}
  P = \ifz{n}{\bar{1}}{n'}{\suc{\ifz{n'}{\bar{1}}{n''}{\suc{\fop{\max}(\bar{8} \fbin{\times} \suc{n'}, \bar{8} \fbin{\times} \suc{n''}) \fbin{+} \bar{5} \fbin{+} \bar{1}}}}}
\end{align*}
We verify the following specification: 
\begin{align*}
  \isVal{fib}{\arrtimev{\isOf{n}{\nat}}{\nat}{P}}
\end{align*}

\begin{proof}
  Let 
  \begin{align*}
    N &= \ifz{n}{\bar{0}}{n'}{\ifz{n'}{\bar{1}}{n''}{\parin{\ap{f}{n'}}{\ap{f}{n''}}{x}{y}{x \fbin{+} y}}}\\ 
  \end{align*}
  By Lemma~\ref{lemma:pitimei}, it suffices to show 
  \begin{enumerate}
    \item \isComp{\nat}{\univ{i}}. This holds by Lemma~\ref{lemma:natf}. 
    \item \openComp{\isOf{n}{\nat}}{\nat}{\univ{i}}. This holds by Lemma~\ref{lemma:natf}. 
    \item \openComp{\isOf{n}{\nat}}{P}{\nat}. By repeated applications of Lemma~\ref{lemma:natf},~\ref{lemma:hypothesis},
    ~\ref{lemma:nate1},~\ref{lemma:ffe2},and ~\ref{lemma:nati}. 
    \item \openComp{\isOf{n}{\nat},\isOf{f}{\arrtimev{\isOf{n}{\subsetty{\isOf{m}{\nat}}{[m/n]P \frel{<} P}}}{\nat}{P}}}{N}{P}.
  \end{enumerate}
  Let 
  \begin{align*}
    \Gamma_1 &= \isOf{n}{\nat},\isOf{f}{\arrtimev{\isOf{n}{\subsetty{\isOf{m}{\nat}}{[m/n]P \frel{<} P}}}{\nat}{P}}\\ 
    N_1 &= \ifz{n'}{\bar{1}}{n''}{\parin{\ap{f}{n'}}{\ap{f}{n''}}{x}{y}{x \fbin{+} y}}\\ 
    P_1 &= \ifz{n'}{\bar{1}}{n''}{\suc{\fop{\max}(\bar{8} \fbin{\times} \suc{n'}, \bar{8} \fbin{\times} \suc{n''}) \fbin{+} \bar{5} \fbin{+} \bar{1}}} 
  \end{align*}
  By Lemma~\ref{lemma:nate2}, it suffices to show 
  \begin{enumerate}
    \item \openComp{\Gamma_1,\isOf{n'}{\nat}}{\nat}{\univ{i}}. By Lemma~\ref{lemma:natf}.
    \item \openCComp{\Gamma_1}{n}{\nat}{\zero}. By Lemma~\ref{lemma:hypothesis}.
    \item \openComp{\Gamma_1}{\bar{0}}{\nat}. By Lemma~\ref{lemma:nati}. 
    \item \openComp{\Gamma_1,\isOf{n'}{\nat}}{P_1}{\nat}. By repeated applications of Lemma~\ref{lemma:hypothesis},
    ~\ref{lemma:nate1},~\ref{lemma:ffe},and ~\ref{lemma:nati}. 
    \item \openCComp{\Gamma_1}{\bar{0}}{\nat}{\bar{0}}. By Lemma~\ref{lemma:nati}. 
    \item \openCComp{\Gamma_1,\isOf{n'}{\nat}, \isOf{p}{\eqty{\nat}{\suc{n'}}{n}}}{N_1}{\nat}{P_1}. 
  \end{enumerate}
  Let 
  \begin{align*}
    \Gamma_2 &= \Gamma_1,\isOf{n'}{\nat}, \isOf{p}{\eqty{\nat}{\suc{n'}}{n}}\\ 
    N_2 &= \parin{\ap{f}{n'}}{\ap{f}{n''}}{x}{y}{x \fbin{+} y}\\
    P_2 &= \fop{\max}(\bar{8} \fbin{\times} \suc{n'}, \bar{8} \fbin{\times} \suc{n''}) \fbin{+} \bar{5} \fbin{+} \bar{1}
  \end{align*}
  By Lemma~\ref{lemma:nate2}, it suffices to show 
  \begin{enumerate}
    \item \openComp{\Gamma_2,\isOf{n''}{\nat}}{\nat}{\univ{i}}. By Lemma~\ref{lemma:natf}.
    \item \openCComp{\Gamma_2}{n'}{\nat}{\zero}. By Lemma~\ref{lemma:hypothesis}.
    \item \openComp{\Gamma_2}{\bar{0}}{\nat}. By Lemma~\ref{lemma:nati}. 
    \item \openComp{\Gamma_2,\isOf{n''}{\nat}}{P_2}{\nat}. By repeated applications of Lemma~\ref{lemma:hypothesis},
    ~\ref{lemma:ffe},and ~\ref{lemma:nati}. 
    \item \openCComp{\Gamma_2}{\bar{1}}{\nat}{\bar{0}}. By Lemma~\ref{lemma:nati}. 
    \item \openCComp{\Gamma_2,\isOf{n''}{\nat}, \isOf{q}{\eqty{\nat}{\suc{n''}}{n'}}}{N_2}{\nat}{P_2}. 
  \end{enumerate}
  Let 
  \begin{align*}
    \Gamma_3 &= \Gamma_2,\isOf{n''}{\nat}, \isOf{q}{\eqty{\nat}{\suc{n''}}{n'}}
  \end{align*}
  By Lemma~\ref{lemma:open_headexp} and~\ref{lemma:bin_seq}, it suffices to show 
  \begin{enumerate}
    \item \openCComp{\Gamma_3}{\ap{f}{n'}}{\nat}{\bar{8} \fbin{\times} \suc{n'}}. 
    In order to apply Lemma~\ref{lemma:pitimee}, we need to show 
    \[
    \openCComp{\Gamma_3}{n'}{\subsetty{\isOf{m}{\nat}}{[m/n]P \frel{<} P}}{\bar{0}}
    \]
    Let \eqInst{\gamma}{\gamma'}{\Gamma_3}. It suffices to show 
    \eqCComp{\gamma(n')}{\gamma'(n')}{\hat\gamma\subsetty{\isOf{m}{\nat}}{[m/n]P \frel{<} P}}{\bar{0}}. 
    We know that 
    \begin{enumerate}
      \item $\gamma(n') = \gamma'(n') = \bar{n'}$ for some $n'$.
      \item $\gamma(n) = \gamma'(n) = \bar{n}$ for some $n$.
      \item \isVal{\gamma(p)}{\eqty{\nat}{\suc{\bar{n'}}}{\bar{n}}}, so $n = n'+1$.
      \item \isVal{\gamma(q)}{\eqty{\nat}{\suc{\bar{n''}}}{\bar{n'}}}, so $n' = n''+1$.
    \end{enumerate}
    By Lemma~\ref{lemma:subseti}, it suffices to show 
    \[
    \isVal{\triv}{[\bar{n'}/m]([m/n]P \frel{<} P)}  
    \]
    There are three cases. If $n' = 0$, then
    \begin{align*}
      [\bar{n'}/m][m/n]P \mapsto^* \bar{1}\\
      P \mapsto^* \bar{2}
    \end{align*}
    since $1 < 2$, the result holds by Lemma~\ref{lemma:reli}. If $n' = 1$, then
    \begin{align*}
      [\bar{n'}/m][m/n]P \mapsto^* \bar{2}\\
      P \mapsto^* \overline{24}
    \end{align*}
    so again the result holds by Lemma~\ref{lemma:reli}. If $n' = m+2$ for some $m$, then 
    \begin{align*}
      [\bar{n'}/m][m/n]P \mapsto^* \overline{8 + 8(m+2)}\\
      P \mapsto^* \overline{8 + 8(m+3)}
    \end{align*}
    and the result holds by Lemma~\ref{lemma:reli}. 

    Now applying Lemma~\ref{lemma:pitimee}, we have 
    \[
      \openCComp{\Gamma_3}{\ap{f}{n'}}{\nat}{[n'/n]P}
    \]
    By Lemma~\ref{lemma:cost_weak}, it suffices to show 
    \[
    \openComp{\Gamma_3}{\triv}{[n'/n]P \frel{\le} \bar{8} \fbin{\times} n'} 
    \]
    Let \eqInst{\gamma}{\gamma'}{\Gamma_3}. We need to show 
    \isComp{\triv}{\hat\gamma[n'/n]P \frel{\le} \bar{8} \fbin{\times} \suc{n'}}.
    We know that $\gamma(n') = \gamma'(n') = \bar{n'}$ for some $n'$.
    Again, there are three cases. If $n' = 0$, then 
    \begin{align*}
      \hat\gamma[n'/n]P \mapsto^* \bar{1}\\
      \hat\gamma\bar{8} \fbin{\times} \suc{n'} \mapsto^* \overline{8}
    \end{align*}
    so the result holds by Lemma~\ref{lemma:reli}. If $n' = 1$, then 
    \begin{align*}
      \hat\gamma[n'/n]P \mapsto^* \bar{2}\\
      \hat\gamma\bar{8} \fbin{\times} \suc{n'} \mapsto^* \overline{16}
    \end{align*}
    so the result holds by Lemma~\ref{lemma:reli}. If $n' = m+2$ for some $m$, then 
    \begin{align*}
      \hat\gamma[n'/n]P \mapsto^* \overline{8 + 8\times(m+2)}\\
      \hat\gamma\bar{8} \fbin{\times} \suc{n'} \mapsto^* \overline{8 \times (m+3)}
    \end{align*}
    since $ 8 + 8\times(m+2) = 8\times(m+3) \le 8 \times(m+3)$, the result holds by Lemma~\ref{lemma:reli}. 
    \item \openCComp{\Gamma_3}{\ap{f}{n''}}{\nat}{\bar{8} \fbin{\times} \suc{n''}}. Similar to above. 
    \item \openCComp{\Gamma_3, \isOf{x}{\nat}, \isOf{y}{\nat}}{x \fbin{+} y}{\nat}{\bar{1}}. By Lemma~\ref{lemma:ffe2}.
  \end{enumerate}
\end{proof}
\section{Related work}\label{sec:related}

There is a substantial amount of literature related to the formulation of computational complexity and 
concrete analysis of algorithmic complexity in the context of both type theories and program logics. 
The following attributes are sometimes helpful in
categorizing a particular approach: higher-order/first-order functions, dependent types/simple types, 
structural/concrete complexity, imperative/functional programs, and automated/manual verification. 
While this article presents a higher-order dependent type theory focusing on concrete complexity and manual verification, 
various combination of the attributes have been explored in prior work. 
In the following, we discuss some of these works and their relations to \cctt{}. \\

\noindent\textbf{Internalizing complexity classes}
An early line of work in the context of \nuprl{}\footnote{\nuprl{} is a proof
refinement framework developed in the 1980s by Constable et
al~\cite{10.5555/10510}. 
It also refers to a family of 
computational type theories (which \cctt{} is based on) and has been fruitful in both mechanizing existing mathematics and 
proving new results~\cite{doi:10.1002/malq.201700057}.} explored the possibility of internalizing the definition of complexity classes in type theory.
In the paper ``A Note on Complexity Measures for Inductive Classes in Constructive Type Theory''~\cite{CONSTABLE1998137}, the author utilizes existing
infrastructure of a computational type theory to define \emph{inductive} classes of functions of type $\N \to \N$, 
examples of which include complexity classes such as \ptime{}. The idea is that a function $f$ is in an inductive class 
$\mathcal{C}$ if it is extensionally equal to a base function or a composition of functions in $\mathcal{C}$. In 
prior work by Bellantoni and Cook~\cite{10.1145/129712.129740},
\ptime{} was given a recursion-theoretic characterization via an inductive class $B$ closed under 
special recursion and composition operators. 
Exploiting this fact, the author defines an internal class of functions $\mathcal{P}$ 
which is extensionally equivalent 
to the class $B$. Crucially, an element of $\mathcal{P}$ is a pair $(f,pf)$, where 
$f : \N \to \N$ is the underlying function and $pf$ is the \emph{proof object} that $f$ is in the inductive class. 
The main results of the paper are the following:
\begin{enumerate}
  \item The programming language is \emph{internally feasible}, meaning that all polytime functions $\N \to \N$ 
  can be implemented internally. 
  \item There is an internally defined complexity measure $\textit{Time}$, such that for each $(f,pf) \in \mathcal{P}$, $\textit{Time}(pf)$ is a function 
  mapping a number $n : \N$ to the number of reduction steps taken by the proof object $pf$ applied to $n$. Furthermore, 
  $Time$ is \emph{faithful} to the external complexity measure $\textit{time}$, meaning
  that $\textit{Time}(pf) = \textit{time}(\textit{term}(pf))$, where 
  $\textit{term}$ is a meta function taking the proof $pf$ to the function it represents. 
\end{enumerate}
This paper presents a novel use of inductive types in type theory to capture intensional aspects of computational behavior, in 
particular computational complexity. 
In particular, the proof object witnessing membership of a complexity class is used to define the internal complexity measure, 
which is necessary because the theory is essentially extensional.
However, unlike \cctt{}, the theory presented does not speak directly about higher-order 
functions, dependent types, or rules for concrete complexity analysis. \\

\noindent In another paper exploring intensional reasoning principles~\cite{CC01}, the authors present a computational type theory 
with a lightweight mechanism for reflecting the term structure and parts of the operational semantics into the object theory.  
This machinery allows for a ``deep embedding'' of the terms of the theory. Among others, there are several key ingredients: 
\begin{enumerate}
  \item A meta level function $\reflect{\cdot}$ which reflects a term $t$ into the internal representation. 
  \item A type $\textit{Term}$ containing all reflected terms in the language.
  \item An operator $\textit{ref}$ on $\textit{Term}$ which computes the denotation of a reflected term.
  \item A type $[A]$, inhabited by terms $t$ such that $\textit{ref}(t) \in A$.
  \item A type $t_1\; \textit{evalto}\; t_2$, inhabited exactly when $\textit{ref}(t_1) \mapsto \textit{ref}(t_2)$.
  \item A type $\textit{iscanon}(t)$, inhabited exactly when \final{\textit{ref}(t)}.
\end{enumerate}
With these notions, the authors define $\textit{time} : \mathcal{P}(\textit{Term} \times \N)$ and $\textit{size} : \textit{Term} \to \N$, where 
\begin{enumerate}
  \item $\textit{time}(t,n)$ iff there is an sequence $(t_i)_{i}$ of length $k$ such that $t = t_0\;\textit{evalto}\;t_1 \dots \textit{evalto}\; t_k$ and $\textit{iscanon}(t_k)$. 
  \item $\textit{size}(t)$ is the size of the reflected term $t$.
\end{enumerate}
The complexity class \ptime{} can then be defined as: 
\[
\ptime{} \triangleq \{ f : [\pity{\isOf{a}{A}}{B}] \mid \exists c, c'.\,\forall \isComp{a}{[A]}.\,
  \textit{Time}(\apabt{f}{a}, c \cdot \textit{size}(a)^{c'})
\}  
\]
Such internal characterizations of complexity classes 
can also be defined in \cctt{} by restricting the funtime function space using the same
technique, given that the input type $A$ has an appropriate \emph{size metric}, which we will discuss shortly. 
While the type theory presented in~\cite{CC01} internalizes a relatively significant portion of the operational semantics, 
\cctt{} only internalizes the notion of cost, which eliminates the technical overhead required to 
instrument the reflection semantics. However, having access to the internal representation of a term enables the possibility 
for more fine-grained analysis. As mentioned above, with a term representation, there is a notion of size at every type 
(in fact for any untyped term). This is why it is possible to define the class \ptime{} for arbitrary dependent functions as 
shown above. 
Although it is convenient to have a default definition for the size of a term,
the relevant \emph{figure of merit} in complexity analysis is often specific to the algorithm in question. For example, in tree 
or graph problems, the relevant size measure ranges from the total number of nodes, the height of the tree, 
the maximal degree, number of connected components, etc. With this in mind, the lack of a default size metric does not appear
to be a significant drawback to us, as it is most likely application dependent. In \cctt{}, the user of the theory 
dictates the relevant metrics for each relevant type $A$ by defining a function $A \to \N$. 
Lastly, while internal term representations allow the theory to differentiate syntactically distinct expressions denoting the 
same value, it is rarely necessary to have this level of detail.

Here, we emphasize the point that \cctt{} is an \emph{extensional theory}
in the sense that functions are equated up to their extensions. The key difference from normal extensional type theories is the \emph{type} at which 
functions are queried for equality. 
Conventionally, a function $f : A \to B$ is completely described by the set of its input-output pairs 
$\{(x,f(x)) \mid x \in A\}$. In \cctt{}, the semantics of the function space is enriched so that it consists of input-output-cost \emph{triples} 
$\{(x,f(x),c(x)) \mid x \in A\}$, where $c : A \to \N$ is the map describing the complexity of $f$ on each input.
In this context, two functions $f,g : \arrtimev{\isOf{a}{A}}{B}{P}$
are equal iff $f(a) = g(a)$ and $c_f(a), c_g(a) \le P$ for all $a \in A$.
Hence this equality principle is morally the same as the usual function extensionality principle, and the
equality of functions is not governed by syntactic rules present in variants of intensional type theories.\\

\noindent\textbf{Recurrence relations}
Aside from defining complexity classes internally, there was also work on
automated concrete cost analysis in the setting of NuPRL. For instance, Benzinger~\cite{BENZINGER200479} describes the framework ACA for 
automated cost analysis of higher-order functional programs. In this work, computational complexity is defined through a cost-annotated 
operational semantics and extended to open terms by symbolic execution on meta terms. A type decomposition scheme is used 
to define the complexity and denotation of higher-order types as compositions of the decomposition at the constituent types. 
Higher-order recurrence relations are generated during symbolic execution and simplified to first-order relations, which 
are solved using a computer algebra system. The framework is used to analyze three different proofs of the pigeonhole principle,
which was used as a lemma in a state minimization algorithm, 
and it was shown that the original proof was the source of an exponential slowdown, and 
the efficient proofs of the lemma yielded the algorithm with the expected runtime. 

As is usual with automated systems, ACA assumes a particular closed-form when solving recurrence relations. 
This differs fundamentally from \cctt{}, where cost bounds are not assumed to have a certain form and
can be arbitrary computations. Another difference is the lack of dependent types in ACA. In general, 
cost analysis may require arbitrarily sophisticated \emph{functional} specifications, 
which means dependent types is also crucial for accurate descriptions of \emph{complexity}. 
The type decomposition function in ACA is only defined for natural numbers and simple function and product types,
and it is unclear how it can be extended to dependent types. 
As noted by the author, since ACA does not handle dependent types such as subsets of $\N$, 
the analysis was unable to detect dead code in one of the proofs of the pigeonhole principle, 
which led to an unexpected exponential cost bound for a polynomial algorithm.
Since \cctt{} features dependent types, such information can be fully leveraged in complexity analysis.\\

\noindent More recently, Kavvos et al.~\cite{10.1145/3371083} presented a framework that put informal algorithmic complexity analysis on formal mathematical footing. 
An important step in informal algorithm analysis is the extraction of a recurrence relation from the program in question that describes the computational complexity of 
the algorithm as a function of the size of the inputs. Often this process is done in a hand-waving fashion with little formal justification. 
In~\cite{10.1145/3371083}, the authors address this problem by presenting an automatic recurrence extraction framework for PCF programs. 
First, the procedure extracts a syntactic recurrence from the source program.
The syntactic recurrence is a program in PCF extended with cost expressions,
denoted as PCF$_{c}$, intended to 
mirror usual mathematical expressions. Next, the syntactic recurrence is interpreted into a semantic domain called ``sized domains'' that models the mathematical recurrences 
used in informal analyses, thus completing the formalization of the informal
process of algorithm analysis. 

The key idea behind the extraction is that each type $A$ is interpreted into a \emph{complexity} $||A||$ consisting of a cost expression $\mathbb{C}$, which accounts for evaluation cost,
and a potential $\pot{A}$, which accounts for the size and cost of future uses of the result of evaluation. 
Expressions $M$ are extracted to a term $||M||$ in PCF$_c$.
Then the soundness of the extraction procedure states that a if \isOf{M}{A} in PCF then \isOf{||M||}{||A||} in PCF$_c$.
A bounding relation $M \le_{||A||} ||M||$ is used to relate the actual complexity of $M$ to the extracted recurrence $||M||$. 
The bounding theorem then ensures that the above always holds given a well-typed term \isOf{M}{A}. 

In~\cite{10.1145/3371083}, Kavvos et al. also consider call by name evaluation by factoring the extraction through the intermediate a call-by-push-value language~\cite{Levy:2006:CDC:1187995.1187999}, 
which we do not consider. Zooming out, their work is focused on formalizing the process of recurrence extraction, while our work with \cctt{} is primarily an exploration of the semantics of a type theory 
with an inherent notion of complexity. While recurrence relations is an established and very powerful method for analyzing algorithms, it is not the only method, and we have defined the semantics of \cctt{} 
as to not exclude this possibility. Although we have no mechanism to automatically extract recurrence relations, it is plausible to verify the correctness of a \emph{given} recurrence relation with respect to 
an algorithm in \cctt{}. \\

\noindent\textbf{Descriptive Complexity}
A related field that lies in the intersection of structural complexity theory and programming languages theory 
is descriptive complexity or implicit computational complexity, where the focus is mainly on syntactic characterizations 
of complexity classes such as \ptime{}~\cite{10.1145/129712.129740},
\logspace{}~\cite{RAMYAA2011247}, \elementary{}~\cite{DANOS2003123}, etc. 
Typical type structures considered in this setting are iterated function spaces generated from $\N$, which means the
specification language is limited compared to dependent type theories. 
One of the most notable results from implicit computational complexity is 
Bellantoni and Cook's recursion-theoretic characterization of \ptime{}~\cite{10.1145/129712.129740}. The authors introduced the 
notion of \emph{recursion on notation} and \emph{safe composition} and divided the inputs of functions 
$\N^k \to \N$ as $\N^m \times \N^n \to \N$, where the first $m$ inputs are \emph{normal} and the latter $n$ are \emph{safe}. 
They then define a class of functions $B$ as the least class containing some initial functions and closed under 
recursion on notation and safe composition. The intuitive idea is that these recursion and composition operators ensure 
that safe inputs can only be used to compute 
a result no larger than an additive constant from the input, while normal inputs can be used in any polytime function.
The main theorem is the equivalence of $B$ and \ptime{}:
\begin{enumerate}
  \item $f : \N^m \to \N$ in \ptime{} implies $\hat f : \N^m \times \N^0 \to \N$, $(x,) \mapsto f(x)$ is in $B$.
  \item $f : \N^m \times \N^n \to \N$ in $B$ implies $\hat f : \N^{m + n} \to \N$, $(x,y) \mapsto f(x,y)$ is in \ptime{}.
\end{enumerate}
And hence $B$ is a complete characterization of \ptime{}.

Clearly, this recursion-theoretic characterization is aimed directly at capturing \emph{structural} computational complexity, 
in the sense that programs in $B$ are guaranteed to be polytime, but there is no way of describing what the bounding polynomial actually is. 
Furthermore, even if an algorithm for a problem is known to be in polytime, for example merge sort for sorting, it is not 
guaranteed that one is able to directly implement that algorithm and prove that it is in $B$. What is guaranteed is that 
there \emph{exists} a polytime algorithm for the \emph{problem} in $B$. The tension between implementing algorithms and characterizing 
complexity classes is partially solved in subsequent works utilizing substructural type systems and amortized analysis. \\

\noindent\textbf{Linear types \& amortized analysis}
Following the syntactic characterization of \ptime{}, various papers~\cite{HOFMANN2000113}\cite{BELLANTONI200017} showed that 
linear types and modal type systems can be used to extend the class $B$ to higher types. In particular, 
Hofmann presented a type system (which we will refer to as SLR for ``structural linear recursion'') 
for non-size-increasing polynomial time functions~\cite{HOFMANN200357} that was 
suited for ordinary programming. The idea is to account for the iterations on inductive data types through 
a new type $\Diamond$ that represents resource available for computation. For example, a term $l$ of a list type
is either nil or a constructor $cons(\blacklozenge,x,xs)$, where the term $\blacklozenge$ represents the sole element of the type $\Diamond$. 
For example, the double function on lists is assigned the following type:
\footnote{Taken from lecture notes by Jan Hoffmann: https://www.cs.cmu.edu/~janh/courses/ra19/assets/pdf/lect05.pdf }
\[
\textit{double} : L(A \times \Diamond) \to L(A) 
\]
which can be interpreted to mean that given an input list $l$ of length $n$, $\textit{double}(l)$ requires $n$ diamonds 
to compute the result. The main results of the paper is that 
\begin{enumerate}
  \item Given any $f : \N \to \N$ that is polytime and linear space, and such that $|f(x)|\le |x|$ for all $x$,
  then $f$ is the denotation of a term $e : \N \to \N$ in SLR. 
  \item Given any term $e : \N \to \N$ in SLR, the denotation of $e$ is polytime computable. 
\end{enumerate}
The main benefit over previous type systems is that SLR respects common patterns of programming such as 
recursion over the results of previous recursive computations, making it possible to actually implement known algorithms. 
However, as in the case for the recursion-theoretic characterization of \ptime{}, 
the cost bounds are implicit, and it was unclear how to express complex cost bounds 
(such as nonlinear functions of the inputs).
In the following decade, the potential for complexity analysis in SLR is developed through a series of papers
~\cite{Hofmann:2003:SPH:604131.604148}\cite{10.1007/978-3-642-11957-6_16}\cite{Jost:2010:SDQ:1707801.1706327}\cite{10.1007/978-3-642-17164-2_13}\cite{10.1145/2362389.2362393}, 
the culmination of which lead up to a versatile and practical automated resource analysis framework for 
functional programs, Resource Aware ML (RaML).\\ 

\noindent\textbf{Resource Aware ML} 
Resource Aware ML~\cite{Hoffmann:2017:TAR:3009837.3009842} is an automated resource analysis framework based on an affine 
type system and linear programming. Similar to the diamond types in system SLR, types in RaML are annotated with a 
measure describing the 
amount of potential stored that is available to pay for computation. 
However, these annotations are only present in types and have no term representation, which relieves the programmer 
from dealing with the allocation of diamonds necessary in SLR.
In the most basic form, the potential annotations are 
nonnegative numbers assigned to each constructor of an inductive type, i.e. a list would be annotated as 
$L^p(A)$, which could be read as a list of type $A$ where each element of the list has $p$ units of potential. 
For the univariate and multivariate polynomial versions, the annotations become more complex, but the idea is similar.
Type derivations are judgments of the form $\Gamma; p \vdash e : (\tau,q)$, which can be read as 
``under context $\Gamma$ and with 
constant potential $p$, $e$ has type $\tau$ with remaining potential $q$''. 
The introduction forms for inductive types request potential from the context, 
while the elimination forms release the stored potential for use in computations. 
During type checking, linear inequalities are generated to ensure the soundness of the derivation, and these constraints 
are collected and passed to an LP solver. If there is a minimal solution, it is used to construct the best worst-case cost bound,
and the soundness theorem ensures that the derived cost bound is an upper bound on the cost of evaluation. 
Recently, RaML has been extended to handle a myriad of computational mechanisms and behaviors, including 
parallelism~\cite{10.1007/978-3-662-46669-8_6}, arrays and references~\cite{lichtman_et_al:LIPIcs:2017:7728}, 
garbage collection~\cite{LPAR-22:Automatic_Space_Bound_Analysis}, and exponential-time algorithms~\cite{10.1007/978-3-030-45231-5_19}. 

As mentioned in the discussion of ACA, while RaML is an \emph{automated} framework for complexity analysis, 
\cctt{} is a semantic type theory for (at the moment) \emph{manual} verification. As a semantic type theory, there can be no 
complete decision procedure for typehood or membership. 
This means that while the complexity of a large class of programs (such as many functions from OCaml's standard library) can be 
directly analyzed using RaML, verification of these programs are completely manual in \cctt{}. On the other hand, the burden of
automation precludes analysis of algorithms whose complexity is not expressible as a polynomial in the input. 
However, in a semantic type theory such as \cctt{}, the verification of a judgment can rely on any mathematical fact or proof strategy,
so the precision of the cost bound can be as detailed as theoretically possible. Of course, in practice it is very convenient
to discharge routine verifications automatically, and one direction for future work is to investigate how to
interpret derivations from RaML as judgments in \cctt{}. \\

\noindent\textbf{Program logics} So far, we have given a brief account of the relevant works in the setting of functional programming and 
type theories. Another approach to analyzing computational complexity can be found in the program logics community, which 
takes a more external view on verification. Broadly speaking, in dependent type theory, types, and thus programs, \emph{is}
the language for specifications. In contrast, program logics are usually stratified in the sense that the logic for specifications
is separate from the programs being analyzed. (In the case that subject of analysis are imperative programs, this distinction 
seems inevitable because imperative programs are not well-behaved as mathematical specifications.) 

Program logics has its origin in Hoare logic~\cite{10.1145/363235.363259}, which
has been extended to separation logic~\cite{10.5555/645683.664578} 
and the Iris program logic~\cite{jung_krebbers_jourdan_bizjak_birkedal_dreyer_2018} to handle highly complex computational phenomena. 
For us, the relevant body of work is the framework of time credits and receipts ~\cite{10.1007/978-3-030-17184-1_1}
built on top of the separation logic framework CFML~\cite{10.1145/2034773.2034828}. Conceptually, the idea of time credits is 
quite simple: in addition to ownership of fragments of memory, predicates on the heap also govern ownership or existence 
of credits that can be used to pay for computation. Because the ambient logic is affine, this is consistent with the intuitive idea
that time credits can be split and joined as necessary in an amortization argument. Time receipts is the dual concept: it is proof 
that a computation requires a certain amount of steps, which can be used to establish lower bounds. This is useful when 
proving that some ``bad'' event (such as overflows) will not happen within a given time frame. The time credits and receipts 
framework has been used to verify sophisticated complexity bounds on algorithms such as union-find~\cite{Chargueraud:2019:VCA:3315672.3315720}. 

Recently, the framework has also been used to formalize \emph{asymptotic} cost analysis~\cite{10.1007/978-3-319-89884-1_19} that 
enables the verification of asymptotic bounds analogous to the Big O notation used in informal analyses. Although 
the use of Big O and other related asymptotic notations (Small O, asymptotic
equivalence, etc.) proliferates algorithmic complexity
literature, it is surprisingly difficult to formalize some generalizations such as multivariate functions, and it has 
been shown that some common intuitions does not hold under naive extensions to univariate Big O~\cite{Howell2008OnAN}. 
The authors mitigate the issue of multivariate Big O through 
the use of \emph{filters}, a technical device that allows multiple interpretations of asymptotic notation depending on 
the situation. Another well-known problem with using asymptotic notation is that the inductive hypothesis 
associated with a Big O bound is not strong enough to push through the verification for recursive functions, where concrete cost 
bounds are necessary. To deal with the tension between the desire to provide modularity through asymptotic bounds and the 
need for concrete cost bounds, the framework separates the verification and publication of the complexity specification. 
First, a concrete cost bound is verified against the target program. This cost bound is then hidden from clients of the 
function, and an abstracted interface composed of the asymptotic bound and some properties of the cost bound (such as 
monotonicity and nonnegativity) is publicized instead. This strikes a balance between using completely concrete cost bounds,
which is non-modular, and asymptotic bounds, which is not powerful enough alone for verification. 

When compared to \cctt{}, the frameworks developed in~\cite{10.1007/978-3-030-17184-1_1} and~\cite{10.1007/978-3-319-89884-1_19} 
offer more systematic methods for complexity verification. In particular, there is support for 1) semi-automatically constructing 
the recurrence relation for the cost expression and 2) guessing a closed form and verifying the solution via the substitution method.
The approach alleviates the user from having to have a concrete cost bound in mind prior to the verification process; all that is 
necessary is a guess for the general shape of the cost expression. 
The proof theory we provided for \cctt{} is designed to guide the verification of a \emph{known} cost bound, but does not help
when one tries to guess and verify a cost bound simultaneously. Another
significant difference is the fact that the time credits framework is able to
handle imperative programs, something we have not considered. It is well-known
that effects are difficult to model in type theory, and it is even
less clear what effects even mean in the presence of dependent types. Although
it is an interesting challenge, complexity analysis for imperative programs
seems to be far out of reach in the kind of theory we have developed.
\section{Conclusion \& future work}\label{sec:concl}

In this paper, we presented a computational type theory \cctt{} for stating and
verifying the time complexity of programs.
While not the first of its kind, \cctt{} features dependent types, higher-order functions, and rules for 
manipulating concrete complexities, a
combination that can serve as a foundation for both program verification and feasible mathematical proofs. 
The main contributions include the formalization of the judgments of \cctt{} and the soundness of the associated proof theory.
Two examples are worked out in detail to demonstrate the practical applicability of the theory.
We also note that the extension to accommodate parallelism only requires a straightforward change to the underlying operational 
semantics, suggesting that the theory is quite flexible with respect to the specific facilities of the programming language and 
the complexity measure. Some interesting directions for future work include developing formal proof theories, 
investigating feasible mathematics, studying different computational behaviors, and modeling abstract complexity analysis. \\

\noindent\textbf{Proof theory}
For the purposes of demonstration, we have presented some useful lemmas and packaged them up into a makeshift proof theory.
It would be interesting to develop a formal proof theory that acts as the syntactic counterpart 
to the semantic theory. One line of approach would be interpreting the type derivations of a formalism such as Resource Aware 
ML as semantic consequences in \cctt{} and prove that all RaML rules are sound. Some cursory attempts at this failed because 
the justifications for termination differ significantly. In \cctt{}, the termination of a function follows from a 
semantic argument: the function introduction rule requires recursive calls to have strictly less cost than the 
current argument and thus induces a well-founded order on the calls to the function. 
In RaML, termination follows from a syntactic argument: 1) variables are linear and 
2) type derivations have finite constant potential, which combined implies there can only be a finite number of function applications. 
We speculate that it is necessary to consider the entire derivation tree of a RaML judgment when interpreting into \cctt{}. 
A different approach would be to recreate RaML's termination proof in \cctt{} as another lemma for introducing function types. \\

\noindent\textbf{Feasible mathematics}
Various works surveyed in the previous section~\cite{BENZINGER200479}\cite{CONSTABLE1998137}\cite{CC01} 
has brought up the subject of \emph{feasible mathematics}, a field that enriches the propositions-as-types 
principle by considering realizers of proofs that are computationally feasible. In one direction, this means constructive proofs 
of theorems can be interpreted as programs realizing a specification. Conversely, this also subjects proofs to analyses
usually applied to programs such as algorithmic complexity.
(Note that this notion should not be confused with the term \emph{proof complexity}, 
which often refers to analyses of lower bounds on the size of proofs in formal systems.)
In terms of formalisms, ideas from bounded quantifiers and bounded linear logic~\cite{GIRARD19921} seem germane.
Additionally, a more thorough treatment of higher-order complexity 
and interactions between proof-relevant and proof-irrelevant components will be important for developing a theory of
\emph{computational} proof complexity in the sense described above. 

One concrete application of computational proof complexity arises
in the fast-developing area of higher-dimensional type theory.
Since Voevodsky posited the idea of univalent
mathematics~\cite{Voevodsky2010UnivalentFP}, the subject has seen rapid progress:
various well-known theorems were reestablished and the cross pollination between homotopy
theory and type theory led to new insights in both fields. One particular
difficult result proved in this setting was the computation of the homotopy
groups of spheres~\cite{Brunerie2016OnTH}. The result was implemented in Cubical
Agda~\cite{10.1145/3341691}, but the proof failed to terminate (even when
given an intuitively reasonable amount of computing resources) before running
out of memory. Given recent work on endowing higher-dimensional proofs with
computational
semantics~\cite{angiuli_et_al:LIPIcs:2018:9673}\cite{10.1145/3290314}, it
becomes possible (at least in principle) to subject the proof to
complexity analysis in a theory similar to \cctt, which could help determine
whether more computational power/optimization is necessary for termination 
or that the proof is inherently intractable. \\

\noindent\textbf{Computational behaviors}
Another direction for future work would be verification techniques for complexity properties when the underlying 
computational system exhibit additional information/behaviors. 
For example, one can consider computational effects such as
randomization~\cite{10.1145/3290377}\cite{wang2020tail}\cite{WANG2019303}. 
A challenge with probabilistic programs is that the operational semantics
becomes nondeterministic, while key lemmas such as head expansion rely on the
determinism of evaluation.
Another direction would be giving a semantics for partial computations/functions and
develop a theory of partial complexity analysis. In order to make sense of
partiality in type theory,
prior work in \nuprl{}~\cite{Constable1987PartialOI} on the construction and
semantics of partial objects via bar types serves as a good starting point. Note that whereas the verification of a
\emph{time} bound inherently forces the function in question to be total,
satisfaction of a \emph{space} bound is orthogonal to the termination behavior of the
function. Given this observation, it would be interesting to analyze
computational resources (concrete or abstract) that a priori have no direct connection to totality. \\

\noindent\textbf{Abstract complexity}
For simplicity, we have arranged every transition step to have unit cost in \cctt{}. The first step towards a less 
concrete measure of cost would be to designate a single expression which effect cost, similar to the ``tick'' construct 
used in RaML. By allowing the users to place ticks in the program, what counts as ``costful'' becomes an extrinsic concept, 
which enables more targeted forms of analysis. 
It is also possible to specify cost measures which are not natural numbers. For example, 
space complexity for functional programs~\cite{10.1145/1411204.1411240} can be defined through a cost semantics
in which a \emph{graph} is the complexity measure. However, such a cost semantics is usually phrased in big-step 
evaluation rules, whereas small-step transition rules are needed to state critical lemmas such as head expansion in type theory.  

An even more abstract account of complexity arises by way of analogy to abstract data types. Mirroring
the protocol of information control between the provider and client of an ordinary abstract data type interface, 
a framework for abstract complexity-aware data types should ensure that the client of the interface is  
only privy to the fact that a particular library function has the specified functional behavior and complexity
and is unaware of the underlying implementation. This mechanism can also be used to conduct cost analysis
assuming that \emph{abstract operations} have certain
complexities, which is a notion that is difficult to describe directly at the level of the operational semantics.

\section*{Acknowledgement}
We are thankful to  Jan Hoffmann, Karl Crary, Jon Sterling, and Carlo
Angiuli for the helpful discussions. 
The authors gratefully acknowledge the support of the 
Air Force Office of Scientific Research through MURI grant FA9550-15-1-0053, 
the National Science Foundation through grant CCF-1901381, and the
Air Force Research Laboratory through the NDSEG fellowship. 
Any opinions, findings and conclusions or recommendations expressed in this
material are those of the authors and do not necessarily reflect the views of
the AFOSR, NSF, or AFRL.

\nocite{*}
\bibliographystyle{plain}
\bibliography{bib}
\end{document}